\definecolor{Yvan}{rgb}{0.2, 0.6, 0.24}
\definecolor{Guiom}{rgb}{0.25, 0.24, 0.6}
\definecolor{dark-red}{rgb}{0.4,0.15,0.15}
\definecolor{dark-blue}{rgb}{0.15,0.15,0.4}
\definecolor{medium-blue}{rgb}{0,0,0.5}
\newcommand{\cn}{\otimes^n\mathbb{C}^2}
\newcommand{\tl}{\mathrm{TL}_n}
\newcommand{\tlsans}{\mathrm{TL}}
\newcommand{\uq}{\mathrm{U}_q\mathfrak{sl}_2}
\newcommand{\hn}{\mathrm{H}_n}
\newcommand{\C}{\mathbb{C}}
\newcommand{\N}{\mathbb{N}}
\newcommand{\Z}{\mathbb{Z}}
\newcommand{\qbin}[2]{\begin{bmatrix} #1 \\ #2 \end{bmatrix}}
\newcommand{\qbinmini}[2]{\left[\begin{smallmatrix} #1 \\ #2 \end{smallmatrix}\right]}
\newcommand{\bin}[2]{\begin{pmatrix} #1 \\ #2 \end{pmatrix}}
\newcommand{\binmini}[2]{\left(\begin{smallmatrix} #1 \\ #2 \end{smallmatrix}\right)}
\newcommand{\modulo}{\ \textrm{mod}\ }
\theoremstyle{plain}
\newtheorem{thm}{Theorem}[section]
\newtheorem{lem}[thm]{Lemma}
\newtheorem{prop}[thm]{Proposition}
\newtheorem{cor}{Corollary}[section]
\theoremstyle{definition}
\newtheorem{defn}{Definition}[section]
\theoremstyle{remark}
\title{\huge The idempotents of the $\tl$-module $\cn$\\ in terms of elements of $\uq$}
\author{
 {\Large Guillaume Provencher}\thanks{\tt provench@dms.umontreal.ca} \\[2mm] \textit{Département de physique} \\ \textit{Université de Montréal, C.P. 6128, succ. centre-ville} \\ \textit{Montréal, QC, Canada, H3C 3J7}\\[2mm]
{\Large Yvan Saint-Aubin}\thanks{\tt saint@dms.umontreal.ca} \\[2mm] \textit{Département de mathématiques et de statistique} \\ \textit{Université de Montréal, C.P. 6128, succ. centre-ville} \\ \textit{Montréal, QC, Canada, H3C 3J7}
 }
 \date{\today}
\begin{document}
\maketitle  
\begin{abstract}
\noindent The vector space $\cn$ upon which the XXZ Hamilonian with $n$ spins acts bears the structure of a module over both the Temperley-Lieb algebra $\tl(\beta=q+q^{-1})$ and the quantum algebra $\uq$. The decomposition of $\cn$ as a $\uq$-module was first described by \citet{Rosso1988,Lusztig1988} and \citet{PS1990} and that as a $\tl$-module by \citet{Martin1992} (see also \citet{RS2007} and \citet{GV2012}). For $q$ generic, i.e.~not a root of unity, the $\tl$-module $\cn$ is known to be a sum of irreducible modules. We construct the projectors (idempotents of the algebra of endomorphisms of $\cn$) onto each of these irreducible modules as linear combinations of elements of $\uq$. When $q=q_c$ is a root of unity, the $\tl$-module $\cn$ (with $n$ large enough) can be written as a direct sum of indecomposable modules that are not all irreducible. We also give the idempotents projecting onto these indecomposable modules. Their expression now involves some new generators, whose action on $\cn$ is that of the divided powers $(S^\pm)^{(r)}=\lim_{q\rightarrow q_c} (S^\pm)^r/[r]!$. 

\noindent \noindent\textbf{Keywords}\:\: Primitive idempotents\;$\cdot$\;Temperley-Lieb algebra\;$\cdot$\;Quantum algebra\;$\cdot$\;Tensor product space representation\;$\cdot$\;Indecomposable modules\;$\cdot$\;XXZ model\;$\cdot$\;Quantum Schur-Weyl duality.
\end{abstract}
% \clearpage

\tableofcontents

\newpage

%%%%%%%%%%%%%%%%%%%%%%%%%%%%%%%%%%%
%
\section{Introduction}
%
%%%%%%%%%%%%%%%%%%%%%%%%%%%%%%%%%%%

The XXZ Hamiltonian describing the dynamics of spin-$\frac12$ chains remains a crucial laboratory for theoretical physicists, mainly because its rich algebraic structure allows one to hope that the limit from finite lattices to the corresponding continuum theories can be fully understood. A common version of these models is defined by a Hamiltonian expressed as the sum of the generators of the Temperley-Lieb algebra $\tl(\beta)$. The anisotropy in the $z$-direction, as well as the boundary terms, are parametrized by a parameter $q\in \mathbb C^\times$ with the defining constant $\beta$ of $\tl$ being $\beta=q+q^{-1}$. When this parameter is a root of unity of the form $\exp(i\pi/p)$, $p>2$, it is agreed (\citet{PS1990,ABBBQ1987}) that this spin chain is related, in the continuum limit $n\to\infty$, to a conformal field theory of central charge 
\begin{equation*}
	c=1-\frac6{(p-1)p},
\end{equation*}
that characterizes the family of minimal models. By means of the Bethe ansatz, algebraic equations determining the eigenvalues of the XXZ model can be written (\citet{BVV1982,BA2001,Nepomechie2003}). These may be used to gain insight about its spectrum. However the explicit expression of the eigenvalues and the complete structure of the spectrum are difficult to describe for finite $n$. (Note that, for the anti-ferroelectric sector ($\beta<-1$), \citet{DFJMN1993} were able to successfully diagonalize the Hamiltonian of the chain in the limit $n\to\infty$.) Another problem related to the Hamiltonian is whether its spectrum is real as it is not hermitian for general $q$. Still, because of its link with minimal models, its spectrum should be real  at least for physically relevant values of $q$ and, indeed, numerical investigations (\citet{ABBBQ1987,ABB1987}) based upon Bethe ansatz relations support this claim. More recently, some progress has been made by Korff and Weston who in \cite{KW2007} introduce an inner product with respect to which the Hamiltonian at a root of unity is hermitian. Unfortunately, the inner product is restricted to a proper subspace of the representation space and might not be extendable in a way leading to a proof of the reality of the full spectrum of the Hamiltonian. 

The decomposition of $\cn$ as a $\tl$-module has been known since the early work of \citet{Martin1992} (see also \cite{GV2012}). When $q$ is generic, the Temperley-Lieb algebra is semisimple and $\cn$ is then a direct sum of irreducible modules. When $q$ is a root of unity, the Temperley-Lieb algebra $\tl$ is non-semisimple for $n$ large enough and then the decomposition of $\cn$ includes in general irreducible and indecomposable modules that are not irreducible. Still, to our knowledge, no simple way to construct these submodules in $\cn$ is known. A natural way to do so is to compute the primitive idempotents that project onto each irreducible or indecomposable submodule of this space. This is the goal of the present paper. (The objects that we shall construct are projectors $\cn\rightarrow\cn$ whose images are the indecomposable submodules. They are not idempotents of $\tl$ {\em per se} but rather elements of the algebra of endomorphisms $\mathrm{End}_{\tl}\cn$ that are projectors.) Of the many symmetries that the Hamiltonian enjoys, one makes it possible to obtain these idempotents: The quantum algebra $\uq$ and the duality existing between this algebra and $\tl$, known as the quantum Schur-Weyl duality.

The paper is organized as follows. First, we recall some definitions and then give a brief review of the quantum algebra $\uq$ and its representation theory, the Temperley-Lieb algebra and Schur-Weyl duality. The following two sections construct the idempotents, first, in the case when $q$ is generic and, second, when $q$ is a root of unity. Concluding remarks follow.

%%%%%%%%%%%%%%%%%%%%%%%%%%%%%%%%%%%
\section{Preliminaries}
%%%%%%%%%%%%%%%%%%%%%%%%%%%%%%%%%%%
This section introduces the two algebras $\tl$ and $\uq$ and recalls basic results. Standard notations are used throughout. The \emph{$q$-number} $[k]_q$ is
	\begin{equation}
		\label{eq:qnb}
		[k]_q := \frac{q^k-q^{-k}}{q-q^{-1}} = q^{k-1}+q^{k-3}\dotsb q^{-(k-3)}+q^{-(k-1)},
	\end{equation}
where $q\in\C^\times$. We shall write $[k]$ instead of $[k]_q$. The \emph{$q$-binomial coefficient} is
\begin{equation}
	\label{eq:qbin}
	\qbin{k}{l} := \frac{[k]!}{[l]!\,[k-l]!},
\end{equation}
where $[x]!=[x]\,[x-1]\dotsm[1]$ with $[0]!=1$. We set $[0]=0$, but $\qbinmini{k}{0}=1$ for $k\ge 0$. Note that $[k]\to k$ and $\qbinmini{k}{l}\to\binmini{k}{l}$ as $q\to1$. Like the standard binomial coefficient, the $q$-analog vanishes if $l>k$.

Roots of unity will be characterized by an integer $p$. This positive integer $p\ge 2$ is the smallest such that $q^{2p}=1$. We then say that the root of unity $q$ is associated with the integer $p$. In other words, a root of unity associated with $p$ is of the form $q=\exp\bigl(\frac{i\pi l}{p}\bigr)$, where $l$ and $p$ are coprime. An important point is that, if $q$ is such a root, then $[kp]=0$ for all $k\in\Z$. When $q$ is not root of unity, it is said to be \emph{generic}.
%%%%%%%%%%%%%%%%%%%%
\subsection{The algebra $\uq$}\label{sect:algebraUq}

The algebra $\uq$, also known as the \emph{quantum algebra}, is a quasi-triangular Hopf algebra with unit $1_{U_q}$ generated by $\bigl\{S^\pm,q^{\pm S^z}\bigr\}$ under the relations (see, for example, \cite{ChariPressley,KS1997})
	\begin{equation}
		\label{eq:Uqsl2}
		q^{S^z}S^\pm q^{-S^z} = q^{\pm1}S^\pm,\qquad\bigl[S^+,S^-\bigr]=\bigl[2S^z\bigr],\qquad q^{S^z}q^{-S^z}=q^{-S^z}q^{S^z}=1_{U_q},
	\end{equation}
	where $q\in\C^\times$. The coproduct $\Delta:\uq\to\uq\,\otimes\,\uq$, the antipode $\gamma:\uq\to\uq$ and the counit $\varepsilon:\uq\to\C$ are defined respectively by
	\begin{equation*}
		\begin{gathered}\Delta\bigl(S^\pm\bigr) = q^{S^z}\otimes S^\pm+S^\pm\otimes q^{-S^z}, \\ \Delta\bigl(q^{\pm S^z}\big) = q^{\pm S^z}\otimes q^{\pm S^z},\end{gathered}\qquad
		\begin{gathered}\gamma\bigl(S^\pm\bigr)	 = -q^{\pm1}S^\pm, \\ \gamma\bigl(q^{\pm S^z}\bigr) = q^{\mp S^z},\end{gathered}\qquad
		\begin{gathered}\varepsilon\bigl(S^\pm\bigr) = 0, \\ \varepsilon\bigl(q^{\pm S^z}\bigr) = 1.\end{gathered}
	\end{equation*}
In the limit $q\to1$, the formul\ae{} \eqref{eq:Uqsl2} reduce to the defining relations of $\mathrm{U}(\mathfrak{sl}_2)$. When $q$ is generic, the center of $\uq$ is generated by the Casimir element (\citet{Jimbo1985})
\begin{equation*}
	S^2=S^-S^++\bigl[S^z+1/2\bigr]^2-[1/2]^2.
\end{equation*}

A natural representation $\pi:\uq\to\mathrm{End}\,\C^2$ is given in terms of Pauli matrices by
\begin{equation*}
	S^\pm\mapsto\sigma^\pm,\qquad q^{\pm S^z}\mapsto q^{\pm\sigma^z/2},
\end{equation*}
where $\sigma^z=\left(\begin{smallmatrix} 1 & 0 \\ 0 & 1 \end{smallmatrix}\right)$, $\sigma^+=\left(\begin{smallmatrix} 0 & 1 \\ 0 & 0 \end{smallmatrix}\right)$ and $\sigma^-=\left(\begin{smallmatrix} 0 & 0 \\ 1 & 0 \end{smallmatrix}\right)$. We label the element of the basis for $\C^2$ by the usual $\ket{+}$ and $\ket{-}$. A representation on $\cn$ is obtained through the representation $\pi$ by the recursive use of the coproduct. The generators can be explicitly written as
\begin{align}
	\begin{split}
	\label{eq:Uqrep}
	\pi_n\bigl(q^{S^z}\bigr)	& = q^{\sigma^z/2}\otimes\dotsm\otimes q^{\sigma^z/2},																									\\[1mm]
	\pi_n\bigl(S^\pm\bigr)	& = \sum_{i=1}^nq^{\sigma^z/2}\otimes\dotsm\otimes q^{\sigma^z/2}\otimes\sigma^\pm\otimes q^{-\sigma^z/2}\otimes\dotsm \otimes q^{-\sigma^z/2} = \sum_{i=1}^nS^\pm_i,
	\end{split}
\end{align}
where in the last equation the matrix $S_i^\pm$ corresponds to the term in the previous sum where $\sigma^\pm$ appears at position $i$. We also define $S^z=\frac12\sum_{1\le i\le n}\sigma^z_i$ with $\sigma^z_i=\mathds{1}\otimes\dotsb\otimes\mathds{1}\otimes\sigma^z\otimes\mathds{1}\otimes\dotsb\otimes\mathds{1}$ wherein $\sigma^z$ is at position $i$ and $\mathds{1}$ is the $2\times2$ identity matrix. This matrix is diagonal in the usual spin basis $\mathcal B=\bigl\{\, \ket{s_1s_2\dots s_n}\, |\, s_i\in\{+1,-1\}\bigr\}$ of $\cn$ and the tensor product decomposes naturally into a direct sum of its eigenspaces
\begin{equation}\label{eq:decompocn}
	\cn = \bigoplus_{m=-n/2}^{n/2}W_m,
\end{equation}
where $m$ is such that $S^z|_{W_m}= m\cdot \mathds{1}$. The eigenspaces {\em are not} $\uq$-submodules, but are $\tl$-submodules. For a given $n$ the set $J$ of non-negative eigenvalues of $S^z$ is $\{0,1,\dots, n/2\}$ if $n$ is even or $\{\frac12,\frac32,\dots,\frac n2\}$ if it is odd. We shall write the representation of the generators without the $\pi_n$: The context will make it clear whether we are speaking of the algebra elements or of the representation.

In what follows, a review of the representation theory for $\uq$ is given, for the cases when $q$ is generic and when $q$ is a root of unity (\citet{PS1990}). 
%%%%%%%%%%%%%%%%%%%%
\subsubsection{Finite-dimensional representations for generic $q$}
%%%%%%%%%%%%%%%%%%%%
The decomposition of $\cn$ as an $\uq$-module when $q$ is generic is obtained by the same approach used for its decomposition as an $\mathrm{U}(\mathfrak{sl}_2)$-module. The result is a  direct sum of irreducible modules $U_j$:
\begin{equation}
	\label{eq:UqDecomp}
	\cn\cong\bigoplus_{ j\in J }\Gamma^{(n)}_{j}\,U_j,
\end{equation}
where $\Gamma^{(n)}_{j} = \binmini{n}{n/2-j}-\binmini{n}{n/2-j-1}$ is the number of isomorphic copies of the module $U_j$, now to be defined.\footnote{Since $\dim U_j=2j+1$, the binomial identity $\sum_{k=0}^n\binmini{n}{k}=2^n$ leads to $\sum_j(2j+1)\,\Gamma^{(n)}_j=2^n=\dim\cn$.} The modules $U_j$ are of dimension $(2j+1)$ and a basis is labeled as in the theory of angular momentum: $\bigl\{\ket{j,j},\ket{j,j-1},\dotsc,\ket{j,-j}\bigr\}$ where the second label in each ket refers to the eigenvalue of $q^{S^z}$ (see \eqref{eq:actionSpm} below). The \emph{highest weight vector} $\ket{j,j}$ is annihilated by $S^+$ and is an eigenvector of $q^{S^z}$ with eigenvalue $q^j$, called the \emph{highest weight}. The other members of $U_j$ are obtained from $\ket{j,j}$ by the action of $S^-$:
\begin{equation}
	\label{eq:vectModule}
	\ket{j,j-k} = (S^-)^{(k)}\ket{j,j},\qquad 0\leq k\leq2j
\end{equation}
where $(S^-)^{(k)}=(S^-)^{k}/[k]!$ is the $k$-th \emph{divided power} of $S^-$. The action of the generators of $\uq$ on $U_j$ is given by
\begin{equation}
	\label{eq:actionSpm}
	\begin{gathered}
		q^{\pm S^z}\ket{j,m}=q^{\pm m}\ket{j,m},\qquad S^\pm\ket{j,m}=[j\pm m+1]\ket{j,m\pm1}\quad\text{with}	\\[1mm] S^+\ket{j,j}=S^-\ket{j,-j}=0.
	\end{gathered}
\end{equation}
The Casimir element $S^2$ is diagonal on those modules $U_j$:
\begin{equation}
	\label{eq:S2eigenvalues}
	S^2\ket{j,m} = \bigl([j+1/2]^2-[1/2]^2\bigr)\ket{j,m},\qquad \textrm{for all\ }m.
\end{equation}
If $j$ and $j'$ are distinct, the values of the Casimir on $U_j$ and $U_{j'}$ are distinct for $q$ generic. Indeed the equality $[j+\frac12]^2=[j'+\frac12]^2$ amounts to $\sin^2(j+\frac12)\theta=\sin^2(j'+\frac12)\theta$ if $q=e^{i\theta}$ with $\theta\in\mathbb C$. Then the equality may occur only when $\theta$ is real and a rational multiple of $\pi$, that is, only when $q$ is a root of unity.

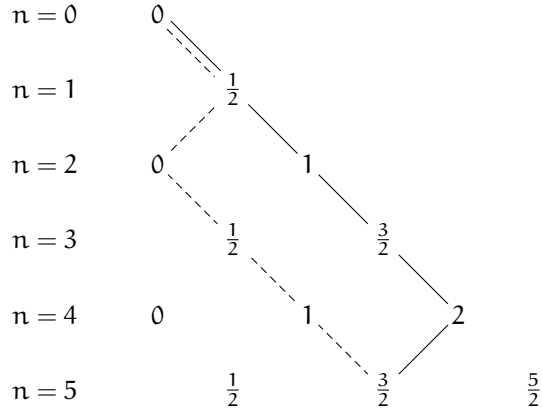
\begin{figure}[htb]
	\centering
	\begin{tikzpicture}[baseline={(current bounding box.center)},every node/.style={fill=white,circle,inner sep=1pt}]
% Paths
	\draw (0,5.08) -- (1,4.08);\draw (1,4) -- (2,3);\draw (2,3) -- (3,2);\draw (3,2) -- (4,1);\draw (4,1) -- (3,0);
	\draw[densely dashed] (0,4.92) -- (1,3.92);\draw[densely dashed] (1,4) -- (0,3);\draw[densely dashed] (0,3) -- (1,2);\draw[densely dashed] (1,2) -- (2,1);\draw[densely dashed] (2,1) -- (3,0);
% j values	
	\node at (0,5) {$0$};
	\node at (1,4) {$\frac12$};
	\node at (0,3) {$0$};\node at (2,3) {$1$};
	\node at (1,2) {$\frac12$};\node at (3,2) {$\frac32$};
	\node at (0,1) {$0$};\node at (2,1) {$1$};\node at (4,1) {$2$};
	\node at (1,0) {$\frac12$};\node at (3,0) {$\frac32$};\node at (5,0) {$\frac52$};
	\node at (-1.5,5) {$n=0$};
	\node at (-1.5,4) {$n=1$};
	\node at (-1.5,3) {$n=2$};
	\node at (-1.5,2) {$n=3$};
	\node at (-1.5,1) {$n=4$};
	\node at (-1.5,0) {$n=5$};
\end{tikzpicture}
	\caption[Bratteli diagram for $q$ generic]{\label{fig:BratDiag1}Bratteli diagram for $n=5$ with two sample paths.}
\end{figure}
Some information about the decomposition \eqref{eq:UqDecomp} is encoded in a \emph{Bratteli diagram}.
An example of such a diagram is shown on Figure \ref{fig:BratDiag1} for $n=5$. The values of $j$ labeling the admissible modules in the decomposition are shown on the $n$-th row, starting from the top one ($0$-th row). The number of downward paths starting from $(0,0)$ and reaching the pair $(n,j)$ is precisely $\Gamma^{(n)}_j$. For instance, there are $\Gamma^{(5)}_{3/2}=4$ such paths ending at $(n,j)=(5,3/2)$. Two are drawn on Figure \ref{fig:BratDiag1}.

%%%%%%%%%%%%%%%%%%%%
%
\subsubsection{Finite-dimensional representations for $q$ a root of unity}\label{subsect:ru}
%
%%%%%%%%%%%%%%%%%%%%

When $q$ is a root of unity, several of the previous observations used to describe the representation theory of $\uq$ fail. To describe them, let $q$ be a root of unity associated with $p$.

A first difference is that the Casimir no longer distinguishes between modules. Indeed, its values coincide on modules $U_j$ and $U_{j'}$ whenever $j$ and $j'$  are related by either
\begin{equation}
	\label{eq:orbit}
	j'\equiv j\modulo p\qquad \text{or}\qquad j'\equiv p-j-1\modulo p.
\end{equation}
It is useful to partition the set $J$ into orbits. If $j$ satisfies $2j+1\equiv 0\text{\, mod\,}p$, it is called {\em critical} and its orbit $\text{orb}_j$ is simply $\{j\}$. For any other $j$, the orbit $\text{orb}_j$ includes all elements of $J$ that are related to $j$ by either one of relations \eqref{eq:orbit}. Orbits can be read easily from the Bratteli diagram. First draw {\em critical lines}, that is vertical lines through the critical $j$'s. Then read the orbit of a given non-critical $j$ of the $n$-th row as the set of $j'$ obtained from $j$ by (possibly multiple) mirror reflections through the critical lines. The non-critical orbits for $n=20$ and $p=5$ appear on Figure \ref{fig:BratDiag2} where only the rows $19$ and $20$ are shown. The two orbits are represented by solid arcs ($\text{orb}_0$) and dotted ones ($\text{orb}_1$). We shall often use the leftmost element to label any given orbit.
\begin{figure}[htb]
	\centering
	\begin{tikzpicture}[baseline={(current bounding box.center)},every node/.style={fill=white,circle,inner sep=2pt},scale=2/3]
% Orbits
	\begin{scope}[thick]
		\draw (8,0) arc [start angle=0, end angle=-180, x radius=4, y radius=1.5] -- (0,0);
		\draw (10,0) arc [start angle=0, end angle=-180, x radius=1, y radius=1] -- (8,0);
		\draw (18,0) arc [start angle=0, end angle=-180, x radius=4, y radius=1.5] -- (10,0);
		\draw (20,0) arc [start angle=0, end angle=-180, x radius=1, y radius=1] -- (18,0);
	\end{scope}
	\begin{scope}[thick,densely dotted]
		\draw (6,0) arc [start angle=0, end angle=-180, x radius=2, y radius=1] -- (2,0);
		\draw (12,0) arc [start angle=0, end angle=-180, x radius=3, y radius=1.5] -- (6,0);
		\draw (16,0) arc [start angle=0, end angle=-180, x radius=2, y radius=1] -- (12,0);
	\end{scope}
% Critical lines
	\draw[densely dashed] (4,-0.75) -- (4,2.25);
	\draw[densely dashed] (9,-0.75) -- (9,2.25);
	\draw[densely dashed] (14,-0.75) -- (14,2.25);
	\draw[densely dashed] (19,-0.75) -- (19,2.25);
% j values	
	\node at (1,1) {$\frac12$};
	\node at (3,1) {$\frac32$};
	\node at (5,1) {$\frac52$};
	\node at (7,1) {$\frac72$};
	\node at (9,1) {$\frac92$};
	\node at (11,1) {$\frac{11}2$};
	\node at (13,1) {$\frac{13}2$};
	\node at (15,1) {$\frac{15}2$};
	\node at (17,1) {$\frac{17}2$};
	\node at (19,1) {$\frac{19}2$};
	\node at (0,0) {$0$};
	\node at (2,0) {$1$};
	\node at (4,0) {$2$};
	\node at (6,0) {$3$};
	\node at (8,0) {$4$};
	\node at (10,0) {$5$};
	\node at (12,0) {$6$};
	\node at (14,0) {$7$};
	\node at (16,0) {$8$};
	\node at (18,0) {$9$};
	\node at (20,0) {$10$};
\end{tikzpicture}
	\caption[Bratteli diagram for $q$ root of unity]{\label{fig:BratDiag2}Last two rows of the Bratteli diagram for $n=20$ and $p=5$, showing two orbits.}
\end{figure}
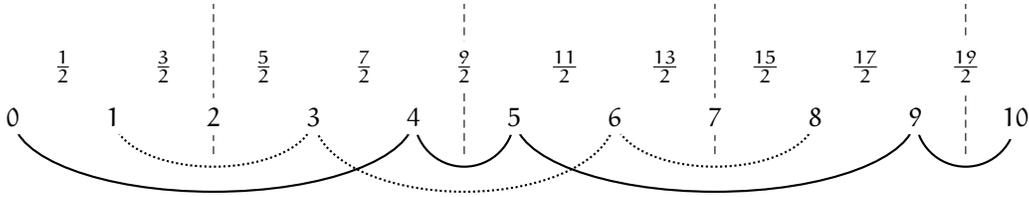

The second difference is that $(S^\pm)^p$ acts as zero on $\cn$. To see this, first observe that
\begin{equation*}
	S^\pm_jS^\pm_i = q^{\pm2}\,S^\pm_iS^\pm_j,\qquad i<j,
\end{equation*}
with $S^\pm_i$ as in \eqref{eq:Uqrep}. Then, using this relation, a computation shows that
\begin{equation}\label{eq:sp}
	\bigl(S^\pm\bigr)^p = q^{\pm p(p-1)/2}\,[p]!\sum_{i_1<\dotsb<i_p}S^\pm_{i_1}\dotsm S^\pm_{i_p},
\end{equation}
which is zero when $q=q_c$ is a root associated with $p$. The {\em divided power} $(S^\pm)^{(r)}$ defined as $\lim_{q\rightarrow q_c}(S^\pm)^r/[r]!$ is however well-defined for all $r$ and $(S^\pm)^{(p)}$ is nonzero on $\cn$ if $p\leq n$. There is not necessarily an element in $\uq$ whose action on $\cn$ coincides with that of $(S^\pm)^{(p)}$. One may extend $\uq$ into a larger algebra including the elements $(S^\pm)^{(p)}$ and whose defining relations at $q=q_c$ are obtained as limits of those at generic $q$. (See, for example, \citet{Martin1992, BFGT2009}, and \citet{GV2012}, where this is done in a context close to physical applications. Note that the exact definition of the divided powers differs slightly between authors.) The resulting algebra is often called the Lusztig extension of $\uq$. The new elements satisfy
\begin{equation}
	\label{eq:SpmRenorm}
	\Bigl[\bigl(S^+\bigr)^{(k)},\bigl(S^-\bigr)^{(l)}\Bigr] = \sum_{i=1}^k\qbin{2S^z+l-k}{i}\bigl(S^-\bigr)^{(l-i)}\bigl(S^+\bigr)^{(k-i)},
\end{equation}
for $l\geq k\geq0$ (see \cite{KS1997}).

The decomposition of $\cn$ as a direct sum of modules is done orbit by orbit, as follows. Given $\mathrm{orb}_{j_1}=\{j_1,\dotsc,j_k\}$ with $j_1<\dotsb<j_k$, choose a vector $v\in W_{j_k}\subset\cn$, that is ${S^z}v={j_k}v$, such that $S^+v=0$ and write $v$ as $\ket{j_k,j_k}$. This is a highest weight vector and it generates a module under the action of the generators (and the divided powers $(S^\pm)^{(p)}$). Its ``descendants'' are defined by the relation \eqref{eq:vectModule} and the action of $S^\pm$ on this module is given by \eqref{eq:actionSpm}, as in the generic case. Here however, some vectors become unreachable by the action of $S^\pm$. For instance, if $j_c=\frac12(rp-1)$ is the first critical line to the left of $j_k$ on the Bratteli diagram and $j_{k-1} = 2j_c-j_k$ is non negative, then $j_{k-1}\in \mathrm{orb}_{j_1}$ and
\begin{align*}
	S^+\ket{j_k,j_{k-1}}	& = [j_k+j_{k-1}+1]\ket{j_k,j_{k-1}+1}	\\
					& = [rp]\ket{j_k,j_{k-1}+1}				\\
					& = 0.
\end{align*}
Of course, the same happens for any vector $\ket{j_k,j_{k-1}-lp}$ with $l\geq0$. A similar situation occurs with $S^-$ on $\ket{j_k,j_k-lp+1}$. 
\begin{figure}[htb]
	\centering
	\begin{tikzpicture}[baseline={(current bounding box.center)},>=stealth,thick]
% First tower
%% Horizontal lines	
	\begin{scope}[thick]
		\draw (2,0) -- (1.25,0) node[left=5pt] {$\scriptstyle j_{k-3}\,=\,j_{k-1}-p$};
		\draw (2,0.5) -- (1.25,0.5) node[left=5pt] {$\scriptstyle j_{k-1}-p+1$};
		\draw (2,1.75) -- (1.25,1.75) node[left=5pt] {$\scriptstyle j_{k-2}\,=\,j_k-p$};
		\draw (2,2.25) -- (1.25,2.25) node[left=5pt] {$\scriptstyle j_{k}-p+1$};
		\draw (2,3.25) -- (1.25,3.25) node[left=5pt] {$\scriptstyle j_{k-1}$}; 
		\draw (2,3.75) -- (1.25,3.75) node[left=5pt] {$\scriptstyle j_{k-1}+1$}; 
		\draw (2,5) -- (1.625,5) node[above=5pt] {$\scriptstyle\ket{j_k,\,j_k}$};\draw (1.625,5) -- (1.25,5) node[left=5pt] {$\scriptstyle j_k$};
	\end{scope}
%% Vertical lines
	\begin{scope}[thin]
		\draw[densely dotted] (1.5,0) -- (1.5,-0.5);\draw[densely dotted] (1.75,0) -- (1.75,-0.5);
		\draw[densely dotted] (1.5,0.5) -- (1.5,1.75);\draw[densely dotted] (1.75,0.5) -- (1.75,1.75);
		\draw[densely dotted] (1.5,2.25) -- (1.5,3.25);\draw[densely dotted] (1.75,2.25) -- (1.75,3.25);
		\draw[densely dotted] (1.5,3.75) -- (1.5,5);\draw[densely dotted] (1.75,3.75) -- (1.75,5);
	\end{scope}
	\node at (1.625,-0.75) {\large\vdots};
%% Arrows
	\begin{scope}[thin]
		\draw[->] (1.5,0.5) -- (1.5,0);
		\draw[->] (1.75,1.75) -- (1.75,2.25);
		\draw[->] (1.5,3.75) -- (1.5,3.25);
	\end{scope}
% Second tower
%% Horizontal lines	
	\begin{scope}[thick]
		\draw (3.5,0) -- (4.25,0);\draw (3.5,0.5) -- (4.25,0.5);
		\draw (3.5,1.75) -- (4.25,1.75);\draw (3.5,2.25) -- (4.25,2.25);
		\draw (3.5,3.25) -- (3.875,3.25) node[above=5pt]{$\scriptstyle\ket{j_{k-1},\,j_{k-1}}$};\draw (3.875,3.25) -- (4.25,3.25);
	\end{scope}
%% Vertical lines
	\begin{scope}[thin]
		\draw[densely dotted] (3.75,0) -- (3.75,-0.5);\draw[densely dotted] (4,0) -- (4,-0.5);
		\draw[densely dotted] (3.75,0.5) -- (3.75,1.75);\draw[densely dotted] (4,0.5) -- (4,1.75);
		\draw[densely dotted] (3.75,2.25) -- (3.75,3.25);\draw[densely dotted] (4,2.25) -- (4,3.25);
	\end{scope}
	\node at (3.875,-0.75) {\large\vdots};
%% Arrows
	\begin{scope}[thin]
		\draw[->] (4,0) -- (4,0.5);
		\draw[->] (3.75,2.25) -- (3.75,1.75);
	\end{scope}
% Arrow between towers
	\begin{scope}[thin]
		\draw[<-] (2+0.1423,3.75-0.05) -- (3.5-0.1423,3.25+0.05);
		\draw[<-] (2+0.1423,2.25-0.05) -- (3.5-0.1423,1.75+0.05);
		\draw[<-] (2+0.1423,0.5-0.05) -- (3.5-0.1423,0+0.05);
	\end{scope}
\end{tikzpicture}
	\caption[Paired module $U_{j_k,j_{k-1}}$]{\label{fig:pairing}Tower illustration of the paired module $U_{j_k,j_{k-1}}$. An up or down arrow means respectively that the action of $S^+$ or $S^-$ is non-vanishing, while the dotted lines mean that both actions are non-vanishing.}
\end{figure}
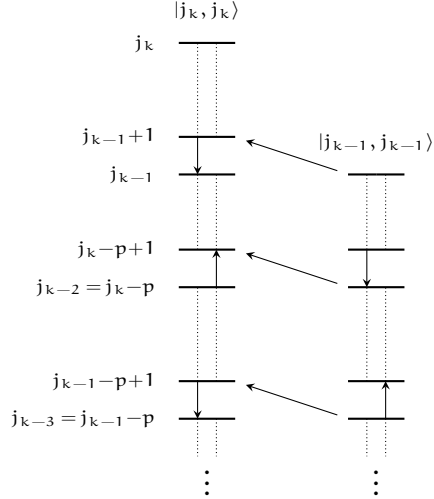

Let $j_k$ and $j_{k-1}$ as above. For $q$ generic, the eigenspace $W_{j_{k-1}}$ of $S^z$ contains a subspace belonging to the direct sum of irreducibles $U_j$ with $j>j_{k-1}$. A complement in $W_{j_{k-1}}$ may be chosen to coincide with $\ker S^+|_{W_{k-1}}$ and its dimension is $\Gamma^{(n)}_{j_{k-1}}$. A basis $\{ \ket{j_{k-1},j_{k-1}}_i, i=1, \dots, \Gamma^{(n)}_{j_{k-1}}\}$ is then constituted of highest weight vectors. For $q$ a root of unity, one can show that, for each highest weight vector $\ket{j_k,j_k}$, there exists a vector $w\in W_{j_{k-1}}$ such that $S^+w$ is non-zero and equal to $\ket{j_k,j_{k-1}+1}$. We shall write it as $\ket{j_{k-1},j_{k-1}}$. From this $w=\ket{j_{k-1},j_{k-1}}$ a subspace is generated by the action of the divided powers of $S^-$
\begin{equation*}
	\ket{j_{k-1},j_{k-1}-r} = (S^-)^{(r)}\ket{j_{k-1},j_{k-1}}.
\end{equation*}
Together with the descendants of $\ket{j_k,j_k}$, they span a $\uq$-submodule of dimension $2(j_k+j_{k-1}+1)$ where the action is given by \eqref{eq:actionSpm} supplemented by the
relations
\begin{equation}
	\label{eq:SpmSmallTower}
	\begin{split}
		S^-\ket{j_{k-1},m} 		& = [j_{k-1}-m+1]\ket{j_{k-1},m-1}									\\[2mm]
		S^+\ket{j_{k-1},m} 		& = [j_{k-1}+m+1]\ket{j_{k-1},m+1} + \qbin{j_k-m-1}{j_{k-1}-m}\ket{j_k,m+1}
	\end{split}
\end{equation}
valid for $j_{k-1}\geq m> -j_{k-1}$ for the first and $j_{k-1}> m\geq -j_{k-1}$ for the second, and
\begin{align*}
	S^+\ket{j_{k-1},j_{k-1}} 	& = \ket{j_k,j_{k-1}+1}												\\[2mm]
	S^-\ket{j_{k-1},-j_{k-1}} 	& = \bin{j_k+j_{k-1}-1}{2\,j_{k-1}}\frac{[j_k+j_{k-1}]}{[j_k-j_
{k-1}]}\ket{j_k,-j_{k-1}-1}.
\end{align*}
(These first appeared in \cite{PS1990}.) This $\uq$-module will be denoted by $U_{j_k,j_{k-1}}$. (It is projective as a module over the extended algebra described earlier.) Its structure is depicted on Figure \ref{fig:pairing}.

The above procedure can be repeated until every vector in a basis of $\ker S^+|_{W_{j_k}}$ is paired to one in $W_{j_{k-1}}$. The procedure described for the vector $\ket{j_k,j_k}$ is then repeated for all highest weight vectors of weight $j_{k-1}$ (that is, vectors in a basis for $\ker S^+|_{W_{j_{k-1}}}$), matching each with a partner in $W_{j_{k-2}}$. This pairing constructs 
\begin{equation*}
	\Omega^{(n)}_{j_i} = \Gamma^{(n)}_{j_i}-\Gamma^{(n)}_{j_{i+1}}+\Gamma^{(n)}_{j_{i+2}}-\Gamma^{(n)}_{j_{i+3}}+\dotsb+(-1)^{k-i}\Gamma^{(n)}_{j_k}
\end{equation*}
modules isomorphic to $U_{j_i,j_{i-1}}$ for $i\ge 2$. If $\ker S^+|_{W_{j_0}}$ is non-zero, that is if $\Omega^{(n)}_{j_1}>0$, then the highest weight vectors in $W_{j_0}$ cannot be paired as there is no $j_0$ in the orbit. They generate, by the action of the divided powers $(S^-)^{(r)}$, modules whose structure is similar to the module $U_j$ appearing in the generic case in the sense that $S^+$ (resp.~$S^-$) vanishes only on the highest weight vector (resp.~lowest one). They are irreducible and will also be labeled by $U_j$. The procedure is repeated for each non-critical orbit.

For critical $j=j_c$, all the highest weight vectors of weight $j_c$ lead to modules $M_{j_c}$, with action prescribed by \eqref{eq:actionSpm}. These are irreducible as modules over the extended algebra, but not necessarily over $\uq$ as the action of $S^-$ on vectors $\ket{j_c,j_c-lp+1}$ vanishes for all $l$, as well as that of $S^+$ on $\ket{j_c,-j_c+lp-1}$. Contrarily to the structure of the $U_{j_i,j_{i-1}}$ modules with $i\ge 2$ depicted in Figure \ref{fig:pairing}, their graphical representation is made of a single tower as in the generic case.

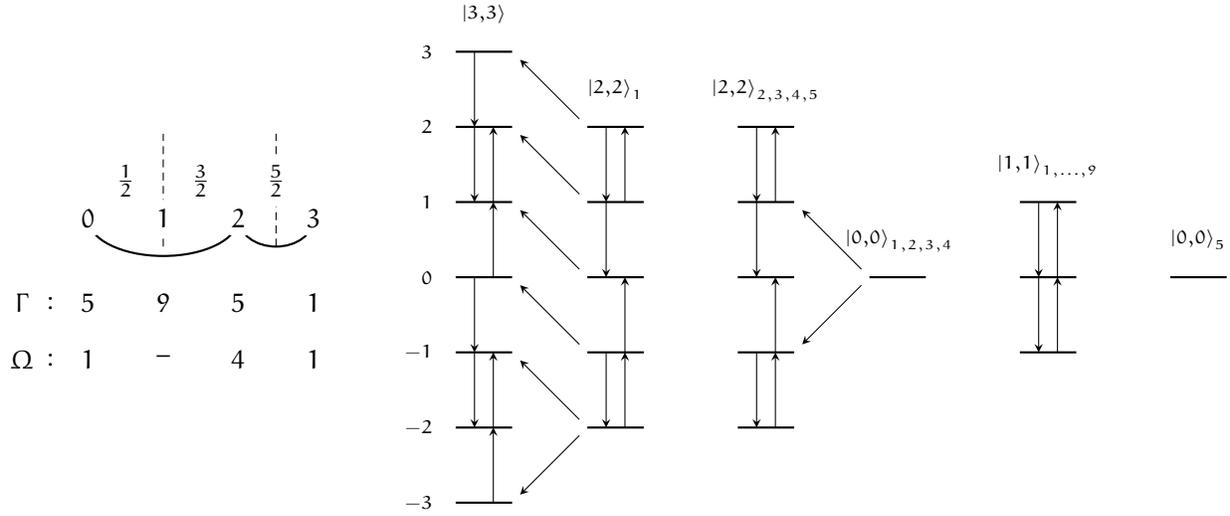
\begin{figure}[htb]
	\centering
	\begin{tikzpicture}[baseline={(current bounding box.center)},every node/.style={fill=white,circle,inner sep=2pt},scale=1/2]
% Orbits
	\begin{scope}[thick]
		\draw (4,0) arc [start angle=0, end angle=-180, x radius=2, y radius=1] -- (0,0);\draw (6,0) arc [start angle=0, end angle=-180, x radius=1, y radius=0.75] -- (4,0);
	\end{scope}
% Critical lines
	\draw[densely dashed] (2,-0.75) -- (2,2.25);\draw[densely dashed] (5,-0.75) -- (5,2.25);
% j values	
	\node at (1,1) {$\frac12$};\node at (3,1) {$\frac32$};\node at (5,1) {$\frac52$};
	\node at (0,0) {$0$};\node at (2,0) {$1$};\node at (4,0) {$2$};\node at (6,0) {$3$};
% Omega and Gamma
	\node at (-1.75,-2.25) {$\Gamma$};\node at (-1,-2.25) {$:$};\node at (0,-2.25) {$5$};\node at (2,-2.25) {$9$};\node at (4,-2.25) {$5$};\node at (6,-2.25) {$1$};
	\node at (-1.75,-3.75) {$\Omega$};\node at (-1,-3.75) {$:$};\node at (0,-3.75) {$1$};\node at (2,-3.75) {--};\node at (4,-3.75) {$4$};\node at (6,-3.75) {$1$};
\end{tikzpicture}\hfill
	\begin{tikzpicture}[baseline={(current bounding box.center)},>=stealth,thick]
%
% Towers
%
% First pair of towers
%% Horizontal lines - #1
	\begin{scope}[thick]
		\foreach \y in {-3,-2,...,3}
			\draw (2,\y) -- (1.25,\y) node[left=5pt] {$\scriptstyle\y$};
		\node at (1.625,3.5) {$\scriptstyle\ket{3,3}$};
	\end{scope}
%% Horizontal lines - #2
	\begin{scope}[thick]
		\foreach \y in {-2,-1,...,2}
			\draw (3.75,\y) -- (3,\y);
		\node at (3.375,2.5) {$\scriptstyle\ket{2,2}_1$};
	\end{scope}[thick]
%% Arrows - #1
	\begin{scope}[thin]
		\draw[->] (1.5,3) -- (1.5,2);
		\draw[->] (1.5,2) -- (1.5,1);\draw[<-] (1.75,2) -- (1.75,1);
		\draw[<-] (1.75,1) -- (1.75,0);
		\draw[->] (1.5,0) -- (1.5,-1);
		\draw[->] (1.5,-1) -- (1.5,-2);\draw[<-] (1.75,-1) -- (1.75,-2);
		\draw[<-] (1.75,-2) -- (1.75,-3);
	\end{scope}
%% Arrows - #2
	\begin{scope}[thin]
		\draw[->] (3.25,2) -- (3.25,1);\draw[<-] (3.5,2) -- (3.5,1);
		\draw[->] (3.25,1) -- (3.25,0);
		\draw[<-] (3.5,0) -- (3.5,-1);
		\draw[->] (3.25,-1) -- (3.25,-2);\draw[<-] (3.5,-1) -- (3.5,-2);
	\end{scope}
%% Arrows between the two towers
	\begin{scope}[thin]
		\foreach \y in {3,2,...,-1}
			\draw[<-] (2+0.10606,\y-0.10606) -- (3-0.10606,\y-1+0.10606);
		\draw[<-] (2+0.10606,-3+0.10606) -- (3-0.10606,-2-0.10606);
	\end{scope}
%
% Second pair of towers
%% Horizontal lines - #1
	\begin{scope}[thick]
		\foreach \y in {-2,-1,...,2}
			\draw (5.75,\y) -- (5,\y);
		\node at (5.375,2.5) {$\scriptstyle\ket{2,2}_{2,3,4,5}$};
	\end{scope}
%% Horizontal lines - #2
	\begin{scope}[thick]
		\draw (7.5,0) -- (6.75,0);
		\node at (7.15,0.5) {$\scriptstyle\ket{0,0}_{1,2,3,4}$};
	\end{scope}
%% Arrows - #1
	\begin{scope}[thin]
		\draw[->] (5.25,2) -- (5.25,1);\draw[<-] (5.5,2) -- (5.5,1);
		\draw[->] (5.25,1) -- (5.25,0);
		\draw[<-] (5.5,0) -- (5.5,-1);
		\draw[->] (5.25,-1) -- (5.25,-2);\draw[<-] (5.5,-1) -- (5.5,-2);
	\end{scope}
%% Arrows between the two towers
	\begin{scope}[thin]
		\draw[<-] (5.75+0.10606,1-0.10606) -- (6.75-0.10606,0+0.10606);
		\draw[<-] (5.75+0.10606,-1+0.10606) -- (6.75-0.10606,0-0.10606);
	\end{scope}
%
% Other towers
%
	\begin{scope}[thick]
		\foreach \y in {1,0,-1}
			\draw (9.5,\y) -- (8.75,\y);
	\end{scope}

	\node at (9.125,1.5) {$\scriptstyle\ket{1,1}_{1,\dotsc,9}$};

	\begin{scope}[thin]
		\draw[->] (9,1) -- (9,0);\draw[<-] (9.25,1) -- (9.25,0);
		\draw[->] (9,0) -- (9,-1);\draw[<-] (9.25,0) -- (9.25,-1);
	\end{scope}

	\draw[thick] (11.5,0) -- (10.75,0);\node at (11.125,0.5) {$\scriptstyle\ket{0,0}_5$};
\end{tikzpicture}
	\caption[Bratteli diagram and towers for $n=6$, $p=3$]{\label{fig:n6p3}Last rows of the Bratteli diagram and towers for $n=6$, $p=3$ ($q^6=1$).}
\end{figure}
The decomposition of $\cn$ can therefore be written as the following direct sum:
\begin{equation*}
	\cn\cong\Big(\bigoplus_{\substack{ j_i\,\in\,\mathrm{orb}_j, i\ge 2 \\[1pt] j\,<\,\frac12(p-1)}}\Omega^{(n)}_{j_i}\,U_{j_i,j_{i-1}}\Big)\oplus
	\Big(\bigoplus_{j\,<\,\frac12(p-1)}\Omega^{(n)}_{j}\,U_{j}\Big)\oplus 
	\Big(\bigoplus_{j_c\,\text{critical}}\Gamma^{(n)}_{j_c}\,M_{j_c}\Big).
\end{equation*}
The first sum includes all modules $U_{j_i,j_{i-1}}$ constructed by the pairing procedure, the second the $U_{j}$ corresponding to the first element $j_1$ of each orbit and the last one the $M_{j_c}$ associated with the critical $j_c$. As an example, the decomposition of $\cn$ for $n=6$, $p=3$ ($q^6=1$) is given. The fifth and sixth rows of the Bratteli diagram are shown on Figure \ref{fig:n6p3} as well as the paired ``towers'' resulting from the decomposition
\begin{equation*}
	\otimes^6\C^2\cong \left(1\cdot U_{3,2}\oplus 4\cdot U_{2,0}\right)\oplus \left( 1\cdot U_0\right )\oplus\left(9\cdot M_1\right).
\end{equation*}
Of course the dimensions of the indecomposables ($(1\cdot 12 +4\cdot 6)+(1\cdot 1)+(9\cdot 3)$) sum up correctly to $2^6=64$.
%%%%%%%%%%%%%%%%%%%%
\subsection{The algebra $\tl$}\label{sect:algebraTL}
%%%%%%%%%%%%%%%%%%%%

%%%%%%%%%%%%%%%%%%%%
\subsubsection{$\tl$ and its representation theory for generic $q$}\label{sec:ton}
%%%%%%%%%%%%%%%%%%%%

This section gathers basic results about the Temperley-Lieb algebra and its representation theory.
\begin{defn}[Temperley-Lieb algebra]
	\label{defn:TLn}
	For $q\in\C^\times$ and $n\ge 1$, the unital associative algebra over $\C$ generated by the elements $\{e_1,\dotsc,e_{n-1}\}$ satisfying
	\begin{align*}
		e_i^2			& = \beta e_i,\qquad\beta=[2]=q+q^{-1},		\\
		e_i\,e_{i\pm1}\,e_i 	& = e_i, 								\\
		e_i\,e_j			& = e_j\,e_i,\qquad|i-j| > 1,
	\end{align*}
	is called the \emph{Temperley-Lieb algebra} $\tl(q)$. The case $n=1$ is $\tlsans_1(q)=\C$.
\end{defn}

The representation theory of $\tl$ for general $q$ has been known since the early work of \citet{GoodmanWenzl1993} and \citet{Martin1991}. (See also \cite{GV2012, RSA2012}.) For $q$ generic, their fundamental result is that $\tl$ is a semisimple algebra and, therefore, all its (finite-dimensional) modules are direct sums of irreducible ones. A complete set of non-isomorphic irreducible modules is constituted of modules $\mathcal V_{n,m}$ with $0\le m\le n/2$ and $n/2-m\in\mathbb N$ of dimension $\Gamma^{(n)}_m=\left(\begin{smallmatrix} n\\ n/2-m\end{smallmatrix}\right) - \left(\begin{smallmatrix} n\\ n/2-m-1\end{smallmatrix}\right)$ and, of course,
$$\dim\tl=\frac1{n+1}\begin{pmatrix} 2n\\ n\end{pmatrix}=\sum_{0\le m\le n/2}(\dim \mathcal V_{n,m})^2.$$
When $q$ is generic, the irreducible modules $\mathcal V_{n,m}$ (or simply $\mathcal V_m$) have a natural description in terms of {\em standard modules}. These standard modules are defined for all $q$ and we now recall their graphical description in terms of connectivities and link states. The connectivities are rectangles with $n$ points on each of their left and right sides, all these points being connected pairwise by non-intersecting curves drawn within the rectangles. The generators $e_i$ correspond to 
\begin{center}
	\begin{tikzpicture}[baseline={(current bounding box.center)},scale=0.3]
% the rectangle
%    \draw[line width=0.5pt] (0,0) -- (3,0) -- (3,10) -- (0,10) -- (0,0);
%    \draw[line width=0.5pt] (3,0) -- (0,0) -- (0,10) -- (3,10);
    \draw (3,0) -- (0,0) -- (0,10) -- (3,10);
    \draw (3,10) -- ( 3,5) node[right=5pt] {$\scriptstyle i$};
    \draw (3,5) -- (3,4) node[right=5pt] {$\scriptstyle i+1$};
    \draw (3,4) -- (3,0);
    \foreach \y in {1,3,4,5,6,8,9}
    	\node at (0,\y) {$\scriptstyle\bullet$};
    \foreach \y in {1,3,4,5,6,8,9}
    	\node at (3,\y) {$\scriptstyle\bullet$};

%    \node at (0,1) {$\scriptstyle\bullet$};
%    \node at (0,3) {$\bullet$};
%    \node at (0,4) {$\bullet$};
%    \node at (0,5) {$\bullet$};
%    \node at (0,6) {$\bullet$};
%    \node at (0,8) {$\bullet$};
%    \node at (0,9) {$\bullet$};
%    \node at (3,1) {$\bullet$};
%    \node at (3,3) {$\bullet$};
%    \node at (3,4) {$\bullet$};
%    \node at (3,5) {$\bullet$};
%    \node at (3,6) {$\bullet$};
%    \node at (3,8) {$\bullet$};
%    \node at (3,9) {$\bullet$};

% the links
    \begin{scope}[thick]
    	\draw (0,1) -- (3,1);
    	\draw (0,3) -- (3,3);
    	\draw (0,6) -- (3,6);
    	\draw (0,8) -- (3,8);
    	\draw (0,9) -- (3,9);
	\draw (0,4) arc (270:450:0.5);
	\draw (3,5) arc (90:270:0.5);
    \end{scope}
%    \draw[line width=1pt] (0,1) -- (3,1);
%    \draw[line width=1pt] (0,3) -- (3,3);
%    \draw[line width=1pt] (0,6) -- (3,6);
%    \draw[line width=1pt] (0,8) -- (3,8);
%    \draw[line width=1pt] (0,9) -- (3,9);
    \node at (1.5,2) {$\dots$};
    \node at (1.5,7) {$\dots$};
%	\node at (5,5) {$\scriptstyle i$};
%	\node at (5,4) {$\scriptstyle i+1$};
\end{tikzpicture}
\end{center}
and the multiplication in connectivities can be done by juxtaposing the two rectangles, reading how the remaining $2n$ points on the left and right are connected, and multiplying the resulting connectivity by a factor $\beta=q+q^{-1}$ for each closed loop. Here is an example of a product of two elements of $\mathrm{TL}_4$:
\begin{equation*}
	\begin{tikzpicture}[baseline={(current bounding box.center)},scale=1/3]
% The rectangle.
	\draw (0,0) -- (3,0) -- (3,5) -- (0,5) -- (0,0);
 	\draw (3,0) -- (6,0) -- (6,5) -- (3,5);
	\foreach \y in {1,2,3,4}
    		\node at (0,\y) {$\scriptstyle\bullet$};
	\foreach \y in {1,2,3,4}
		\node at (3,\y) {$\scriptstyle\bullet$};
	\foreach \y in {1,2,3,4}
    		\node at (6,\y) {$\scriptstyle\bullet$};
% The links.
	\begin{scope}[thick]
		\draw (0,3) arc [start angle=-90, end angle=90, x radius=0.5, y radius=0.5];
		\draw (3,1) arc [start angle=-90, end angle=270, x radius=0.5, y radius=0.5];
		\draw (3,3) arc [start angle=-90, end angle=90, x radius=0.5, y radius=0.5];
		\draw (3,4) arc [start angle=90, end angle=180, x radius=2.25, y radius=1.5];\draw (0.75,2.5) arc [start angle=0, end angle=-90, x radius=0.75, y radius=0.5];
		\draw (3,3) arc [start angle=90, end angle=180, x radius=1.5, y radius=1];\draw (1.5,2) arc [start angle=0, end angle=-90, x radius=1.5, y radius=1];
		\draw (6,2) arc [start angle=270, end angle=90, x radius=0.5, y radius=0.5];
		\draw (6,1) arc [start angle=270, end angle=90, x radius=1.1, y radius=1.5];
	\end{scope}
\end{tikzpicture} = \beta\begin{tikzpicture}[baseline={(current bounding box.center)},scale=1/3]
% The rectangle.
	\draw (0,0) -- (3,0) -- (3,5) -- (0,5) -- (0,0);
	\foreach \y in {1,2,3,4}
    		\node at (0,\y) {$\scriptstyle\bullet$};
	\foreach \y in {1,2,3,4}
    		\node at (3,\y) {$\scriptstyle\bullet$};
% The links.
	\begin{scope}[thick]
		\draw (0,1) arc [start angle=-90, end angle=90, x radius=0.5, y radius=0.5];
		\draw (0,3) arc [start angle=-90, end angle=90, x radius=0.5, y radius=0.5];
		\draw (3,2) arc [start angle=270, end angle=90, x radius=0.5, y radius=0.5];
		\draw (3,1) arc [start angle=270, end angle=90, x radius=1.1, y radius=1.5];
	\end{scope}
\end{tikzpicture}.
\end{equation*}
Words in $\tl$, that is products of generators, are in one-to-one correspondence with connectivities and general elements in $\tl$ are linear combinations of them. The faithfulness of this graphical description is shown, for example, in \cite{RSA2012}. The standard module $\mathsf V_{n,\ell}$ is described by giving a basis and the action of connectivities on this basis. A basis for $\mathsf V_{n,\ell}$ is the set of link vectors whose graphical description is given by a straight vertical segment with $n$ dots, $2\ell$ of which are tied pairwise by non-intersecting curves drawn to the right of the segment. The remaining $n-2\ell$ points are indicated by horizontal segments. The bases for $\mathsf V_{4,2}$, $\mathsf V_{4,1}$ and $\mathsf V_{4,0}$ are
$$\mathcal B_{4,2}=\left\{ \begin{tikzpicture}[scale=0.2]%[baseline={(current bounding box.south)},scale=0.3]
\useasboundingbox (-0.5,2) rectangle (2,4);
% the rectangle
    \draw[line width=0.5pt] (0,0) -- (0,5);
% the links
    	\draw[line width=0.5pt] (0,1) arc (270:450:1.5);
    	\draw[line width=0.5pt] (0,2) arc (270:450:0.5);
\end{tikzpicture},\begin{tikzpicture}[scale=0.2]%[baseline={(current bounding box.south)},scale=0.3]
\useasboundingbox (-0.5,2) rectangle (1,4);
% the rectangle
    \draw[line width=0.5pt] (0,0) -- (0,5);
% the links
    	\draw[line width=0.5pt] (0,1) arc (270:450:0.5);
    	\draw[line width=0.5pt] (0,3) arc (270:450:0.5);
\end{tikzpicture}\right\},\qquad
\mathcal B_{4,1}=\left\{ \begin{tikzpicture}[scale=0.2]%[baseline={(current bounding box.south)},scale=0.3]
\useasboundingbox (-0.5,2) rectangle (1.5,4);
% the rectangle
    \draw[line width=0.5pt] (0,0) -- (0,5);
% the links
    	\draw[line width=0.5pt] (0,1) -- (0.75,1);
    	\draw[line width=0.5pt] (0,2) -- (0.75,2);
    	\draw[line width=0.5pt] (0,3) arc (270:450:0.5);
\end{tikzpicture},\begin{tikzpicture}[scale=0.2]%[baseline={(current bounding box.south)},scale=0.3]
\useasboundingbox (-0.5,2) rectangle (1.5,4);
% the rectangle
    \draw[line width=0.5pt] (0,0) -- (0,5);
% the links
    	\draw[line width=0.5pt] (0,1) -- (0.75,1);
    	\draw[line width=0.5pt] (0,4) -- (0.75,4);
    	\draw[line width=0.5pt] (0,2) arc (270:450:0.5);
\end{tikzpicture},\begin{tikzpicture}[scale=0.2]%[baseline={(current bounding box.south)},scale=0.3]
\useasboundingbox (-0.5,2) rectangle (1.5,4);
% the rectangle
    \draw[line width=0.5pt] (0,0) -- (0,5);
% the links
    	\draw[line width=0.5pt] (0,3) -- (0.75,3);
    	\draw[line width=0.5pt] (0,4) -- (0.75,4);
    	\draw[line width=0.5pt] (0,1) arc (270:450:0.5);
\end{tikzpicture}\right\},\qquad
\mathcal B_{4,0}=\left\{ \begin{tikzpicture}[scale=0.2]%[baseline={(current bounding box.south)},scale=0.3]
\useasboundingbox (-0.5,2) rectangle (1.5,4);
% the rectangle
    \draw[line width=0.5pt] (0,0) -- (0,5);
% the links
    	\draw[line width=0.5pt] (0,1) -- (0.75,1);
    	\draw[line width=0.5pt] (0,2) -- (0.75,2);
    	\draw[line width=0.5pt] (0,3) -- (0.75,3);
    	\draw[line width=0.5pt] (0,4) -- (0.75,4);
\end{tikzpicture}\right\}.$$
The action of $\tl$ on $\mathsf V_{n,\ell}$ is defined graphically as for the product in $\tl$ in its graphical description with the additional rule that, if the resulting link vector has more than $2\ell$ points tied by curves, the result is set to zero. For example
\begin{equation*}
	e_2\;\begin{tikzpicture}[baseline={(current bounding box.center)},scale=1/3]
	\draw (0,0) -- (0,5);
	\foreach \y in {1,2,3,4}
    		\node at (0,\y) {$\scriptstyle\bullet$};
	\begin{scope}[thick]
		\draw (0,1) -- (1,1);
		\draw (0,2) -- (1,2);
		\draw (0,3) arc [start angle = -90, end angle=90, x radius=0.5, y radius=0.5];
	\end{scope}
\end{tikzpicture}=\begin{tikzpicture}[baseline={(current bounding box.center)},scale=1/3]
	\draw (0,0) -- (2,0) -- (2,5) -- (0,5) -- (0,0);
	\foreach \y in {1,2,3,4}
    		\node at (0,\y) {$\scriptstyle\bullet$};
	\foreach \y in {1,2,3,4}
    		\node at (2,\y) {$\scriptstyle\bullet$};
	\begin{scope}[thick]
		\draw (0,1) -- (2,1);
		\draw (0,4) -- (2,4);
		\draw (2,2) -- (3,2);
		\draw (2,1) -- (3,1);
		\draw (0,2) arc [start angle=-90, end angle=90, x radius=0.5, y radius=0.5];
		\draw (2,2) arc [start angle=270, end angle=90, x radius=0.5, y radius=0.5];
		\draw (2,3) arc [start angle=-90, end angle=90, x radius=0.5, y radius=0.5];
	\end{scope}[thick]
\end{tikzpicture}=\begin{tikzpicture}[baseline={(current bounding box.center)},scale=1/3]
	\draw (0,0) -- (0,5);
	\foreach \y in {1,2,3,4}
    		\node at (0,\y) {$\scriptstyle\bullet$};
	\begin{scope}[thick]
		\draw (0,1) -- (1,1);
		\draw (0,4) -- (1,4);
		\draw (0,2) arc [start angle=-90, end angle=90, x radius=0.5, y radius=0.5];
	\end{scope}
\end{tikzpicture}\,,\quad\text{but}\quad e_3\;\begin{tikzpicture}[baseline={(current bounding box.center)},scale=1/3]
	\draw (0,0) -- (0,5);
	\foreach \y in {1,2,3,4}
    		\node at (0,\y) {$\scriptstyle\bullet$};
	\begin{scope}[thick]
		\draw (0,1) -- (1,1);
		\draw (0,2) -- (1,2);
		\draw (0,3) arc [start angle = -90, end angle=90, x radius=0.5, y radius=0.5];
	\end{scope}
\end{tikzpicture}=\begin{tikzpicture}[baseline={(current bounding box.center)},scale=1/3]
	\draw (0,0) -- (2,0) -- (2,5) -- (0,5) -- (0,0);
	\foreach \y in {1,2,3,4}
    		\node at (0,\y) {$\scriptstyle\bullet$};
	\foreach \y in {1,2,3,4}
    		\node at (2,\y) {$\scriptstyle\bullet$};
	\begin{scope}[thick]
		\draw (0,3) -- (2,3);
		\draw (0,4) -- (2,4);
		\draw (2,1) -- (3,1);
		\draw (2,2) -- (3,2);
		\draw (0,1) arc [start angle=-90, end angle=90, x radius=0.5, y radius=0.5];
		\draw (2,1) arc [start angle=270, end angle=90, x radius=0.5, y radius=0.5];
		\draw (2,3) arc [start angle=-90, end angle=90, x radius=0.5, y radius=0.5];
	\end{scope}[thick]
\end{tikzpicture}=\begin{tikzpicture}[baseline={(current bounding box.center)},scale=1/3]
	\draw (0,0) -- (0,5) -- (0,0);
	\foreach \y in {1,2,3,4}
    		\node at (0,\y) {$\scriptstyle\bullet$};
	\begin{scope}[thick]
		\draw (0,1) arc [start angle=-90, end angle=90, x radius=0.5, y radius=0.5];
		\draw (0,3) arc [start angle=-90, end angle=90, x radius=0.5, y radius=0.5];
	\end{scope}
\end{tikzpicture}=0.
\end{equation*}
That this action defines $\tl$-modules is shown in \cite{Westbury, RSA2012}. This result is independent of whether $q$ is generic or not.

For $q$ generic the irreducible modules $\mathcal V_{m}$ are in one-to-one correspondence with the $\mathsf V_{n,\ell}$:
$$\mathcal V_{m}\cong\mathsf V_{n,\ell=\frac n2-m},\qquad \textrm{for $q$ generic}.$$
Note that, when $q$ is a root of unity, the standard modules $\mathsf V_{n,\ell}$ are not irreducible in general, though they are always indecomposable. 

Several central elements (Casimir) can be used to distinguish the irreducible modules. An element $F_n\in\tl$ was shown to be central in \cite{AMDYSA2011}. Even though its explicit form will not be needed here, it is important to stress that it is a linear combination of words in $\tl$ with coefficients in $\mathbb Z[q,q^{-1}]$ and that, on the standard modules 
$\mathsf V_{n,\frac n2-m}$, it acts as
$$F_n\bigr|_{\mathsf V_{n,\frac n2-m}}=\bigl(q^{2m+1}+q^{-2m-1}\bigr)\mathds{1}.$$
As for the Casimir $S^2$ of $\uq$, the eigenvalues of the central element $F_n$ completely distinguish the irreducible modules when $q$ is generic, that is, if its eigenvalues on $\mathsf V_{n,\ell}$ and $\mathsf V_{n,\ell'}$ are equal, then $\mathsf V_{n,\ell}\cong\mathsf V_{n,\ell'}$.

%%%%%%%%%%%%%%%%%%%%
\subsubsection{Representation theory of $\tl$ for $q$ a root of unity}\label{sec:tlnRoot}
%%%%%%%%%%%%%%%%%%%%

As for that of $\uq$, the representation theory of the Temperley-Lieb algebra $\tl(q)$ for $q$ a root of unity is much richer than for $q$ generic.

At $q$ a root of unity, the algebra $\tl(q)$ is in general non-semisimple. More precisely if $q$ is a root associated with the integer $p$, then $\tl(q)$ is non-semisimple for all $n\ge p$, with one exception: For $p=2$ and $n$ odd, the algebra $\tl(\pm i)$ is semisimple.

The non-semisimplicity has an immediate consequence: There are representations of $\tl(q)$ that are indecomposable, but not irreducible. For example, the standard module $\mathcal V_m\cong \mathsf V_{n,\frac n2-m}$ is not irreducible in general, though it remains always indecomposable. We shall denote by $\mathcal I_j\cong \mathcal V_j/\mathcal R_j$ its irreducible quotient, where $\mathcal R_j$ is its (unique) maximal proper submodule. If $\mathcal R_j$ is trivial, then $\mathcal V_j$ is irreducible. (See \cite{Martin1991,RSA2012}.) More importantly, the algebra itself, seen as a left $\tl$-module, is not a sum of irreducible ones. The indecomposable modules appearing in its decomposition are called the principal indecomposable modules. They are projective covers of the irreducible modules and we shall denote by $\mathcal P_j$ the principal indecomposable whose irreducible quotient is the module $\mathcal I_j$. If $j$ is critical, then the corresponding projective is irreducible and the following three modules coincide: $\mathcal P_j\cong \mathcal V_j\cong \mathcal I_j$. If $j'<j$ are two consecutive elements of a (non-critical) orbit (see paragraph \ref{subsect:ru}), then the projective $\mathcal P_j$ is part of a non-split exact sequence
$$0\longrightarrow \mathcal V_{j'}\longrightarrow \mathcal P_j\longrightarrow \mathcal V_j\longrightarrow 0,$$
that is, $\mathcal P_j$ has $\mathcal V_{j'}$ as one of its proper submodules and $\mathcal V_j\cong \mathcal P_{j}/\mathcal V_{j'}$, even though $\mathcal P_j$ is not a direct sum of the two modules $\mathcal V_{j'}$ and $\mathcal V_{j}$. (For more details see \cite{Martin1991, GoodmanWenzl1993} and, for a presentation closer to the graphical description used in paragraph \ref{sec:ton}, \cite{Westbury, RSA2012}.)

There are more (finite) indecomposable $\tl$-modules beside the principal $\mathcal P_j$'s and the standard $\mathcal V_j$ ones. Fortunately, as will be recalled in the next paragraph \ref{sec:rho}, only principal and standard modules appear in the decomposition of $\cn$.

%%%%%%%%%%%%%%%%%%%%
\subsubsection{The representation of $\tl$ on $\cn$}\label{sec:rho}
%%%%%%%%%%%%%%%%%%%%

A representation on the tensor product space $\cn$ for $n\ge 2$ is given by the algebra homomorphism $\rho_n:\tl\to\mathrm{End}\cn$ defined on generators by
\begin{equation}
	\label{eq:TLnrep}
	\rho_n(e_i) = \mathds{1}_2\otimes\dotsm\otimes\mathds{1}_2\otimes E\otimes\mathds{1}_2\otimes\dotsm\otimes\mathds{1}_2,
\end{equation}
where the matrix
\begin{equation}
	\label{eq:matE}
	E = \begin{pmatrix} 0 & 0 & 0 & 0 \\ 0 & q^{-1} & -1 & 0  \\ 0 & -1 & q & 0 \\ 0 & 0 & 0 & 0 \end{pmatrix}
\end{equation}
takes up positions $i$ and $i+1$ in the above tensor product and there are therefore $(n-2)$ factors $\mathds{1}_2$. As for $\uq$, wherever the context is clear enough, we shall omit the writing of $\rho_n$, e.g.~$e_i\,v$ will mean $\rho_n(e_i)\,v$.

The action \eqref{eq:TLnrep} of $e_i$ on vectors of the spin basis $\mathcal B$ is ``local'' as it changes only the $i$-th and $(i+1)$-th spins. Moreover the matrix $E$ does not change the number of ``$+$'' or ``$-$'' in such a vector. Therefore the eigensubspace of $S^z=\frac12\sum_{1\le i\le n}\sigma^z_i$ are $\tl$-submodules and the decomposition $\cn = \oplus_{-n/2\leq m \leq n/2}W_m$ introduced in \eqref{eq:decompocn} holds as a direct sum of $\tl$-modules. Notice that the modules $W_m$ and $W_{-m}$ are isomorphic. The isomorphism is given by the spin-reversal operator $R=\otimes^n\sigma_x$, where $\sigma_x=\left(\begin{smallmatrix} 0 & 1 \\ 1 & 0 \end{smallmatrix}\right)$, coupled with the inversion $q\mapsto q^{-1}$. This operation manifestly commutes with the action of $\tl$ given by \eqref{eq:matE}. 
It is therefore sufficient to restrict the analysis to modules $W_m$ with $m\geq0$.

The eigenspace $W_m$ of $S^z$ contains a subspace isomorphic to $\mathcal V_{m}$ as $\tl$-module. A map $\psi_{n,m}:\mathcal V_{m}\cong\mathsf V_{n,p=\frac n2-m}\rightarrow W_m$ is constructed as follows. For a link state in $\mathsf V_{n,p}$, let $\{(i_1,j_1),(i_2,j_2),\dots,\allowbreak (i_p,j_p)\}$ be the pairs of points in $\{1,2,\dots, n\}$ with $i_k<j_k$ that are pairwise connected. The link state is mapped to $\prod_{1\le k\le p}T(i_k,j_k)\ket{++\dots+}$ where $T(i_k,j_k)=q^{-\frac12}\sigma_{i_k}^--q^{\frac12}\sigma_{j_k}^-$. For example the link state $\begin{tikzpicture}[scale=0.1]]%[baseline={(current bounding box.south)},scale=0.1]
\useasboundingbox (-0.5,1.5) rectangle (2,4);
% the rectangle
    \draw[line width=0.5pt] (0,0) -- (0,5);
% the links
    	\draw[line width=0.5pt] (0,1) arc (270:450:1.5);
    	\draw[line width=0.5pt] (0,2) arc (270:450:0.5);
\end{tikzpicture}$ in $\mathsf{V}_{4,2}$ is mapped to
$$q\ket{++--}-\ket{+-+-}-\ket{-+-+}+\frac1q\ket{--++}.$$
The verification that $\psi_{n,m}$ is a $\tl$-homomorphism is straightforward.

The decomposition of $\cn$ as $\tl$-module is rather simple when $q$ is generic. (In this case, the decomposition will be given a new proof in Corollary \ref{cor:isoMathcalV}.) To our knowledge the decomposition for $q$ a root of unity has been worked out first by \citet{Martin1992}. To present it, we shall use the recent description of \cite{GV2012}. Suppose $q$ is a root associated with $p$ and write $n=r_mp+s_m$ with $r_m\in\mathbb N$ and $-1\le s_m\le p-2$. Then
\begin{align}\label{eq:GV}
\cn&\cong 
\Big(\bigoplus_{1\le r\le r_m-1}\bigoplus_{\substack{0\le s\le p-1\\[1mm] rp+s+n\equiv 1\textrm{\,mod\,}2}}r(p-s)\cdot\mathcal P_{(rp+s-1)/2}\Big)
\oplus
\Big(\bigoplus_{\substack{0\le s\le s_m+1\\[1mm] s+s_m\equiv 1\textrm{\,mod\,}2}} r_m(p-s)\cdot\mathcal P_{(r_mp+s-1)/2} \Big)\notag \\
&\ \quad
\oplus
\Big(\bigoplus_{\substack{1\le s\le s_m+1\\[1mm] s+s_m\equiv 1\textrm{\,mod\,}2}}(r_m+1)s \cdot\mathcal V_{(r_mp+s-1)/2}\Big)
\oplus
\Big(\bigoplus_{\substack{s_m+2\le s\le p-1\\[1mm] s+s_m\equiv 1\textrm{\,mod\,}2}} r_m(p-s)\cdot\mathcal V_{(r_mp-s-1)/2}\Big)
\end{align}
as $\tl$-module. The integer $r_m$ is the number of critical lines falling on the rightmost $j=\frac n2$ or to its left. The first sum contains all principal indecomposable modules $\mathcal P_j$ for $j$'s that fall to the left of the rightmost critical line and the second those $\mathcal P_j$'s that lie to its right. The last two sums contain standard modules $\mathcal V_j$ with $j$ in the window to the left of the last critical line (third sum) and to its right (fourth sum). The integer that multiplies the principal and standard modules in the sums, like the factor $r(p-s)$ that appears in the first, is the number of isomorphic copies of these modules, that is their {\em multiplicities} in the decomposition of $\cn$. It is possible to rewrite this decomposition in terms of the orbits introduced earlier. It then reads
\begin{align}\label{eq:GY}
\cn&\cong \Big(\bigoplus_{\substack{\textrm{non-critical}\\[1mm]\text{orbits orb}_j}}\ \bigoplus_{\substack{j_i\in\,\text{orb}_j\\[1mm] i\ge 2}}(i-1)(ip-2j_i-1)\cdot \mathcal P_{j_i}\Big) 
\oplus\Big(\bigoplus_{j\textrm{\ critical}}(2j+1)\cdot\mathcal P_j\Big)\notag\\
&\ \qquad
\oplus\Big(\bigoplus_{\substack{j_l\in\,\text{orb}_j\\[1mm] j\textrm{\ noncritical}}}l(2j_l+1-(l-1)p)\cdot \mathcal V_{j_l}\Big)
\end{align}
where, in the last sum, the index $j_l$ stands for the last element of the orbit $orb_j$. All the modules appearing in either \eqref{eq:GV} or \eqref{eq:GY} are indecomposable and the main objective of this paper is to find the primitive idempotents projecting on each. These idempotents are found by exploiting the duality between the Temperley-Lieb algebra and the quantum algebra.

%%%%%%%%%%%%%%%%%%%%
\subsection{The quantum Schur-Weyl duality}\label{subsect:qSW}
%%%%%%%%%%%%%%%%%%%%
\begin{defn}[Hecke algebra]
	\label{defn:Hn}
	For $q\in\C^\times$, the unital associative algebra over $\C$ generated by $\{h_1,\dotsc,h_{n-1}\}$ and satisfying
	\begin{align}\label{eq:Hnrelations}
		\begin{split}
			h_i^2			& = (q-q^{-1})\,h_i+1,			\\
			h_i\,h_{i+1}\,h_i 	& = h_{i+1}\,h_i\,h_{i+1},			\\
			h_i\,h_j			& = h_j\,h_i,\hspace{15mm}|i-j| > 1,
		\end{split}
	\end{align}
	is called the \emph{Hecke algebra} $\hn(q)$.
\end{defn}
This algebra is known as the $q$-deformation of the group algebra $\C\mathrm{S}_n$, where $\mathrm{S}_n$ is the symmetric group of $n$ elements. In the limit $q\to1$, the relations \eqref{eq:Hnrelations} become those of $\C\mathrm{S}_n$. For $q$ generic the two algebras $\hn(q)$ and $\C\mathrm{S}_n$ are isomorphic \cite{Wenzl1988}.

The Temperley-Lieb and Hecke algebras are related by a surjective homomorphism
\begin{equation}
	\label{eq:HnToTLn}
	\begin{gathered}
		\phi:\hn(q)\to\tl(q)								\\
		h_i\mapsto e_i-q^{-1}1_{\tl}\quad\text{and}\quad1_{\hn}\mapsto1_{\tl}.
	\end{gathered}
\end{equation}
There is a representation $\sigma_n:\hn(q)\to\mathrm{End}\cn$ such that
\begin{equation}
	\label{eq:Hnrep}
	\sigma_n(h_i)=(\rho_n\circ\phi)(h_i) = \mathds{1}_2\otimes\dotsm\otimes\mathds{1}_2\otimes H\otimes\mathds{1}_2\otimes\dotsm\otimes\mathds{1}_2,
\end{equation}
where the matrix $H$ takes up positions $i$ and $(i+1)$ and is 
\begin{equation*}
	H = E-q^{-1}= \begin{pmatrix} -q^{-1} & 0 & 0 & 0 \\ 0 & 0 & -1 & 0 \\ 0 & -1 & q-q^{-1} & 0 \\ 0 & 0 & 0 & -q^{-1} \end{pmatrix}.
\end{equation*}

If $A$ is an algebra and $S\subset A$, the \emph{centralizer} of $S$ is defined as $C_A(S)=\{a\in A:sa=as,\ \forall s\in S\}$. Now, if $M$ is an $A$-module and $\mu:A\to\mathrm{End}\,M$ is the corresponding algebra homomorphism, then we have $C_{\mathrm{End}\,M}\bigl(\mu(A)\bigr)=\mathrm{End}_{\mu(A)}M$. The notation $\mathrm{End}_{\mu(A)}M$ stands for the algebra of endomorphisms of $M$ that commute with the action of $A$ on $M$. (The mathematical literature will drop any reference to $\mu$, writing simply $\mathrm{End}_{A}M$.)

\begin{thm}[Quantum Schur-Weyl duality]
	\label{thm:qSchurWeyl}
	Let $V=\cn$, $A=\mathrm{End}\,V$, $S_U=\pi_n\bigl(\uq\bigr)$ and $S_H=\sigma_n\bigl(\hn(q)\bigr)$, where $\pi_n$, $\sigma_n$ are as above. For $q$ generic, the two subalgebras $S_U$ and $S_H$ of $A$ are mutual centralizers: 
	\begin{equation}
		\label{eq:qSchurWeylResult}
		S_U=\mathrm{End}_{S_H}V\qquad\text{and}\qquad S_H=\mathrm{End}_{S_U}V.
	\end{equation}
\end{thm}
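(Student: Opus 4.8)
The plan is to establish the two equalities in \eqref{eq:qSchurWeylResult} by a dimension-counting argument anchored in the double centralizer theorem. First I would note that the containments $S_U\subseteq\mathrm{End}_{S_H}V$ and $S_H\subseteq\mathrm{End}_{S_U}V$ are immediate: by construction the actions of $\uq$ (via $\pi_n$, built from the coproduct) and of $\hn(q)$ (via $\sigma_n=\rho_n\circ\phi$) commute on $V=\cn$. Indeed, the $\uq$-action commutes with each $\rho_n(e_i)$ because the Temperley-Lieb generator $E$ of \eqref{eq:matE} is, up to scalar, the $\uq$-invariant projector onto the singlet inside $\C^2\otimes\C^2$, and invariance of $E$ under the two-site coproduct propagates to all sites; hence it commutes with $\sigma_n(h_i)=\rho_n(e_i)-q^{-1}$ as well. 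So the content of the theorem is that these containments are in fact equalities, and for that I would prove one of them and derive the other.

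The heart of the argument is to compute $\mathrm{End}_{S_U}V$ explicitly. For $q$ generic, $\uq$ is semisimple on $\cn$ and the decomposition \eqref{eq:UqDecomp}, $\cn\cong\bigoplus_{j\in J}\Gamma^{(n)}_j\,U_j$ into pairwise non-isomorphic irreducibles $U_j$ (distinguished by the Casimir eigenvalues \eqref{eq:S2eigenvalues}, which are distinct for $q$ generic), lets me apply Schur's lemma: the commutant of $S_U$ is then the block-diagonal algebra $\bigoplus_{j\in J}\mathrm{End}\,\C^{\Gamma^{(n)}_j}$, so
\begin{equation*}
	\dim\mathrm{End}_{S_U}V=\sum_{j\in J}\bigl(\Gamma^{(n)}_j\bigr)^2.
\end{equation*}
I would then show this dimension is matched exactly by $\dim S_H$. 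Because $\phi$ of \eqref{eq:HnToTLn} is surjective, $S_H=\sigma_n(\hn(q))=\rho_n(\tl(q))$, so it suffices to compute the dimension of the image of $\tl(q)$ in $\mathrm{End}\,\cn$. Using the $\tl$-homomorphisms $\psi_{n,m}\colon\mathcal V_m\to W_m$ of paragraph \ref{sec:rho} together with the generic-$q$ semisimplicity of $\tl$ (so that each standard $\mathcal V_m\cong\mathsf V_{n,\frac n2-m}$ is irreducible of dimension $\Gamma^{(n)}_m$ and these appear with the stated multiplicities in $\cn$), one identifies $\rho_n(\tl)$ as $\bigoplus_{0\le m\le n/2}\mathrm{End}\,\mathcal V_m$, whence $\dim S_H=\sum_m(\Gamma^{(n)}_m)^2$. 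Since $J$ and $\{m:0\le m\le n/2\}$ index the same values, the two dimensions coincide.

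Combining the inclusion $S_H\subseteq\mathrm{End}_{S_U}V$ with the equality of dimensions gives $S_H=\mathrm{End}_{S_U}V$, the second identity. For the first identity I would invoke the double centralizer theorem: since $A=\mathrm{End}\,V$ is semisimple (indeed a full matrix algebra) and $S_U$ is a semisimple subalgebra acting on $V$, the commutant of the commutant of $S_U$ is $S_U$ itself, that is $\mathrm{End}_{S_H}V=\mathrm{End}_{\mathrm{End}_{S_U}V}V=S_U$, which is exactly $S_U=\mathrm{End}_{S_H}V$. I expect the main obstacle to be the dimension bookkeeping on the $\tl$ side --- verifying carefully that $\rho_n$ is such that the standard modules $\mathcal V_m$ appear in $\cn$ with precisely the multiplicities making $\dim\rho_n(\tl)=\sum_m(\Gamma^{(n)}_m)^2$, i.e.\ that $\rho_n$ surjects onto the full multiplicity-algebra rather than a proper subalgebra. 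The cleanest route around this is to establish injectivity/faithfulness of $\rho_n$ on the relevant quotient (the generic semisimplicity guarantees $\tl$ is a direct sum of the $\mathrm{End}\,\mathcal V_m$, and one checks each simple block survives in $\cn$ via the nonzero maps $\psi_{n,m}$), so that no simple summand of $\tl$ is killed and the image attains the full dimension.
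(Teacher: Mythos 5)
Your argument is correct, but there is nothing in the paper to compare it with line by line: the paper states Theorem \ref{thm:qSchurWeyl} without proof and attributes it to \citet{Jimbo1986}. Your route is the standard double-centralizer derivation, and it hangs together: the commutation $[S_U,S_H]=0$ is right (one checks that $E$ has rank one with image spanned by the $q$-singlet $\ket{+-}-q\ket{-+}$ and kills the triplet, so $E=\beta\cdot(\text{the }\uq\text{-equivariant singlet projector})$, and coassociativity of $\Delta$ propagates this to all $n$ sites); Schur's lemma applied to \eqref{eq:UqDecomp} — using that the $U_j$ are pairwise non-isomorphic at generic $q$, which the paper secures via the Casimir eigenvalues \eqref{eq:S2eigenvalues} — gives $\dim\mathrm{End}_{S_U}V=\sum_{j\in J}\bigl(\Gamma^{(n)}_j\bigr)^2$; and since $V$ is then a faithful semisimple $S_U$-module, $S_U$ is semisimple and the double centralizer theorem legitimately converts the second identity into the first. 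The one step that genuinely needed care is the one you flagged: that $\rho_n$ kills no simple block of the (generically semisimple) algebra $\tl$, so that $\dim S_H=\dim\tl=\sum_m\bigl(\Gamma^{(n)}_m\bigr)^2$. Your resolution is the right one, and notably non-circular: you use only that each $\psi_{n,m}\colon\mathcal V_m\to W_m$ of paragraph \ref{sec:rho} is a nonzero homomorphism out of a module that is irreducible at generic $q$, hence an embedding — rather than the full $\tl$-decomposition of $\cn$, which the paper only obtains \emph{downstream} of this theorem (Corollary \ref{cor:isoMathcalV} is proved using the idempotents). What the two approaches buy is different: your proof is short and conceptual but conditional on two structural inputs the paper itself only quotes from the literature (the decomposition \eqref{eq:UqDecomp} of \citet{Rosso1988,Lusztig1988,PS1990}, and the semisimplicity of $\tl$ with its dimension count from \citet{GoodmanWenzl1993,Martin1991}), whereas Jimbo's original argument establishes the duality directly and does not presuppose these decompositions.
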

\noindent This result is sometimes called the \emph{$q$-Schur-Weyl duality}, by analogy with the Schur-Weyl duality between $\C\,\mathrm{S}_n$ and $\mathrm{U}\,\mathfrak{sl}_2$, corresponding to $q\to1$. It is due to \citet{Jimbo1986}. One immediate consequence is that $\pi_n(S^+)$ and $\pi_n(S^-)$ act on $\cn$ as $\tl$-homo\-morphisms. 
 
The set of matrices of $\mathrm{End}\,V$ that commute with the generators $g_H=\bigl\{\sigma_n(h_1),\dotsc,\sigma_n(h_{n-1})\bigr\}$ are commuting with the full operator algebra $\sigma_n(\hn)$. The same is true for $g_{TL}=\bigl\{\rho_n(e_1),\allowbreak \dotsc,\allowbreak\rho_n(e_{n-1})\bigr\}$. Because of the homomorphism $\phi$ and the fact that $\sigma_n=\rho_n\circ \phi$, the matrices in the two sets $g_H$ and $g_{TL}$ are equal up to an additive multiple of the identity. The set of matrices commuting with $g_H$ is therefore equal to the one for $g_{TL}$. It follows that $\mathrm{End}_{\sigma_n(\hn)}V=\mathrm{End}_{\rho_n(\tl)}V$ and the Schur-Weyl duality implies
\begin{equation}
	\label{eq:qSchurWeyl}
	\mathrm{End}_{\tl}\!\otimes^n\C^2 = \pi_{n}\bigl(\uq\bigr),
\end{equation}
where $\mathrm{End}_{\tl}$ is a shorthand notation for $\mathrm{End}_{\rho_n(\tl)}$.

The Schur-Weyl duality proved by Jimbo was extended by \citet{Martin1992} for $q$ a root of unity. We shall use his general result in the following case. Let $q$ be a root of unity and let ${\mathcal L}U$ be the algebra $\uq$ extended by the divided powers $(S^\pm)^{(p)}$. Then, if again $S_{{\mathcal L}U}$ and $S_{TL}$ stand for $\pi_n({\mathcal L}U)$ and $\rho_n(\tl(q))$, $S_{{\mathcal L}U}=\mathrm{End}_{S_{TL}}\cn$ and $S_{TL}=\mathrm{End}_{S_{{\mathcal L}U}}\cn$.

Recall from section \ref{sect:algebraTL} that $\cn$, seen as a $\tl(q)$-module, decomposes naturally as 
\begin{equation*}
	\cn=W_{-n/2}\oplus W_{-n/2+1}\oplus\dotsb\oplus W_{n/2}.
\end{equation*}
The endomorphisms on $W_m\subset \otimes^n\C^2$ that can be created out of $\pi_n\bigl(\uq\bigr)$ must be linear combinations of $(S^-)^r(S^+)^r$, $(S^+)^s(S^-)^s$
and of $q^{\pm S^z}$. The latter acts as a multiple of the identity on $W_m$ and, by an argument of the type leading to the Poincar\'e-Birkhoff-Witt theorem (\cite{KS1997, ChariPressley}), it is sufficient to restrict the linear combinations to the $(S^-)^r(S^+)^r$ with $0\leq r\leq n/2-m$. The upper bound $n/2-m$ is the number of ``$-$'' signs in each element of the basis $\mathcal B$ that also belongs to $W_m$. If $r$ exceeds this number, then $(S^+)^r$ annihilates $W_m$. Because $(S^\pm)^p=0$ if $q$ is a root of unity associated with the integer $p$, it will be useful for section \ref{sec:atRootsOfUnity} to consider instead combinations of the nonzero (and well defined) 
$$S_r=(S^-)^{(r)}(S^+)^{(r)},\qquad \textrm{with } S_0=\mathds{1}.$$ 
The multiplication of two of these, restricted to $W_m$, is given by
\begin{equation*}
	\left.S_kS_l\right|_{W_m} = \sum_{i=0}^kC^m_{k,l,i}\,\left.S_{l+i}\right|_{W_m},\quad\text{for }l\geq k,
\end{equation*}
with structure constants given by $C^m_{k,l,i} = \qbinmini{l+i}{k}\qbinmini{l+i}{l}\qbinmini{2m+k+l}{k-i}$. (See Proposition \ref{lem:sksl} in the Appendix.) The set of endomorphisms $\{S_0,\allowbreak S_1,\dotsc,S_{n/2-m}\}$ is therefore closed under multiplication and thus generates an algebra which is found to be abelian (see Lemma \ref{lem:commutS}). Finally, for all $q$
\begin{equation*}
	\mathrm{End}_{\tl}W_m=\mathrm{span}\bigl\{S_0,S_1,\dotsc,S_{n/2-m}\bigr\}.
\end{equation*}

In the two following sections, we tackle the problem of finding the primitive idempotents for $\cn$, viewed as a $\tl$-module, for any $q$.
%%%%%%%%%%%%%%%%%%%%%%%%%%%%%%%%%%%
%
\section{Decomposition of $\cn$ for $q$ generic}\label{sect:decqgene}
%
%%%%%%%%%%%%%%%%%%%%%%%%%%%%%%%%%%%
We start by a few observations that also indicate how the idempotents were discovered.

When $q$ is generic, the action of $S_r$, $0\leq r\leq n/2-m$, on a vector $\ket{j,m}_k$, where $m\leq j\leq n/2$ and $1\leq k\leq\Gamma_j$, is diagonal (Proposition \ref{prop:Sraction}):
\begin{equation}
	\label{eq:Sraction}
	S_r\ket{j,m}_k = \qbin{j+m+r}{r}\qbin{j-m}{r}\ket{j,m}_k.
\end{equation}
Recall that an idempotent $z$ in $\mathrm{End}_{\tl}W_m$ is a nonzero endomorphism on $W_m$ such that $z^2=z$. Section \ref{subsect:qSW} has shown that these endomorphisms are expressible as linear combinations of the $S_r$'s. The action of the idempotents on $W_m$ is therefore diagonal in the basis $\{\ket{j,m}_k, j\ge m, k=1,\dots, \Gamma_j\}$ and any nonzero linear combination of the $S_r$'s with only $0$'s and $1$'s on the diagonal is such an idempotent. By equation \eqref{eq:Sraction}, $S_r$ acts the same way on all isomorphic copies of $U_j$ in $\otimes^n\mathbb C^2$. The smallest subspace upon which a linear combination $z=\sum_r a_rS_r$ can project must contain the sum $V_{j,m}$ of all eigensubspaces where $S_z=m\cdot\mathds{1}$ of isomorphic copies of $U_j$ contained in $\otimes^n\mathbb C^2$. Since $U_j$ and $U_{j'}$ are non-isomorphic for distinct $j$ and $j'$, the subspace $W_m$ splits into
\begin{equation}\label{eq:wmAndVjm}W_m=\bigoplus_{j\ge m}V_{j,m}
\end{equation}
as vector space. Therefore we look for primitive idempotents of the form
\begin{equation}
	\label{eq:idem}
	z^{(n)}_{j,m}=\sum_{i=0}^{n/2-m}a_{i,j,m}\,S_i.
\end{equation}

Suppose now that the idempotents for $n-2$ have been constructed. The problem of finding idempotents for $n$ requires no new values for $j$ besides $j=n/2$. The action \eqref{eq:Sraction} does not depend on $n$ and thus implies that
\begin{equation}
	\label{eq:recursive}
	z^{(n)}_{j,m}=z^{(n-2)}_{j,m} + a_{n/2-m,j,m}\,S_{n/2-m}.
\end{equation}
The new term $S_{n/2-m}$ does not change the action of the projectors on subspaces $V_{j',m}$ for any $j'$ that appears in ${\otimes^{n-2}\C^2}$, but is necessary to describe properly their action on the subspace $V_{n/2,m}$ appearing in $\cn$.

Assume now that $z^{(n)}_{j,m}$ projects on a subspace (containing) $V_{j,m}$ for $j<n/2$. It must act as $0$ on $\ket{n/2, m}$:
\begin{equation*}
	\sum_{i=0}^{n/2-m}a_{i,j,m}\,S_i\ket{n/2,m} = \sum_{i=0}^{n/2-m}a_{i,j,m}\qbin{n/2+m+i}{i}\qbin{n/2-m}{i}\ket{n/2,m} = 0.
\end{equation*}
We may then express the coefficient $a_{n/2-m,j,m}$ in terms of the other coefficients:
\begin{equation*}
	a_{n/2-m,j,m} = -\qbin{n}{n/2-m}^{-1}\times\sum_{i=0}^{n/2-m-1}\qbin{n/2+m+i}{i}\qbin{n/2-m}{i}a_{i,j,m}.
\end{equation*}
If the coefficients for $n'$ smaller than $n$ are known, then equation \eqref{eq:recursive}, together with the preceding one, gives the expression of the idempotent $z_{j,m}^{(n)}$. Some exploration (and guessing) allowed us to solve the recursion:
\begin{equation}
	\label{eq:coefExpr}
	a_{i,j,m}=(-1)^{i+j-m} \qbin{i}{j-m}\qbin{i+j+m+1}{i+1}^{-1}\frac{[2j+1]}{[i+1]}.
\end{equation}
The left $q$-binomial vanishes for $i<j-m$. Therefore, the summation in \eqref{eq:idem} may be truncated to $j-m\le i\le n/2-m$. The next theorem proves that the element $z_{j,m}^{(n)}$ defined using these coefficients is indeed idempotent, primitive and projects on $V_{j,m}$. The notation will be lightened up by the omission of the superscript ``$(n)$'' whenever possible.

\begin{thm}[Primitive idempotents for $q$ generic]
	\label{thm:idqgene}
	Let $q$ be generic. The elements $\{z_{j,m}\}_{m\leq j\leq n/2}$ defined as
	\begin{equation}
		\label{eq:idem2}
		z_{j,m} = \sum_{i=j-m}^{n/2-m}a_{i,j,m}\,S_i
	\end{equation}
with coefficients \eqref{eq:coefExpr}, constitute a set of mutually orthogonal primitive idempotents that partitions unity in $\mathrm{End}_{\tl}W_m$, that is $\sum_{m\le j\le n/2}z_{j,m}=\mathds{1}_{W_m}$. Furthermore, the $\tl$-modules $V_{j,m}=z_{j,m}W_m$ are irreducible.
\end{thm}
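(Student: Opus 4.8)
The plan is to exploit that every $z_{j,m}$, being a linear combination of the $S_i$, acts diagonally on $W_m$ in the basis $\{\ket{j',m}_k\}$. First I would record, directly from \eqref{eq:Sraction}, that
\[
z_{j,m}\ket{j',m}_k = c_{j,j',m}\,\ket{j',m}_k, \qquad c_{j,j',m} := \sum_{i=j-m}^{n/2-m} a_{i,j,m}\qbin{j'+m+i}{i}\qbin{j'-m}{i},
\]
the scalar $c_{j,j',m}$ being independent of the copy index $k$. The whole theorem then reduces to the single scalar identity $c_{j,j',m} = \delta_{j,j'}$ for $m \le j,j' \le n/2$. Granting it, $z_{j,m}$ acts as the identity on $V_{j,m}$ and as $0$ on every $V_{j',m}$ with $j' \neq j$, whence at once $z_{j,m}^2 = z_{j,m}$ (idempotency), $z_{j,m}z_{j',m}=0$ for $j\neq j'$ (orthogonality), and $\sum_{m \le j \le n/2} z_{j,m}$ acts as $\mathds{1}$ on each $V_{j',m}$, hence on all of $W_m$ (partition of unity).

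Establishing $c_{j,j',m}=\delta_{j,j'}$ is the heart of the argument. Two cases are immediate: since $\qbin{i}{j-m}$ inside $a_{i,j,m}$ forces $i \ge j-m$ while $\qbin{j'-m}{i}$ forces $i \le j'-m$, the sum is empty when $j>j'$, giving $c_{j,j',m}=0$; and when $j=j'$ only the term $i=j-m$ survives, which a short cancellation of $q$-factorials (using $a_{j-m,j,m}=\qbin{2j+1}{j-m+1}^{-1}[2j+1]/[j-m+1]$ against $\qbin{2j}{j-m}$) evaluates to $1$. The remaining case $j<j'$ requires showing that the alternating $q$-binomial sum over $j-m \le i \le j'-m$ vanishes, and this is where I expect the main obstacle. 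I would recognise that surviving sum as a terminating, balanced $q$-hypergeometric series and evaluate it by a $q$-Saalsch\"utz (or $q$-Vandermonde) summation; failing a clean closed form, I would prove the vanishing by induction on $j'-j$. The recursion \eqref{eq:recursive} relating $z^{(n)}_{j,m}$ to $z^{(n-2)}_{j,m}$ supplies a natural inductive scaffold here, since the added term $a_{n/2-m,j,m}S_{n/2-m}$ is exactly what enforces annihilation of the new top weight $\ket{n/2,m}$ while leaving the action on all lower $V_{j',m}$ untouched.

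Finally I would treat irreducibility and primitivity. By the quantum Schur--Weyl duality \eqref{eq:qSchurWeyl} together with \eqref{eq:UqDecomp}, the weight space decomposes as $W_m=\bigoplus_{j\ge m}(U_j)_m \otimes \mathcal V_j$, where the weight-$m$ line $(U_j)_m$ is one-dimensional; the commuting $\tl$-action therefore realises $V_{j,m}=z_{j,m}W_m$ as the multiplicity space carrying $\mathcal V_j$, so $V_{j,m}\cong\mathcal V_j\cong\mathsf V_{n,n/2-j}$, which is irreducible for $q$ generic. Primitivity of each $z_{j,m}$ then follows because a projector onto an irreducible $\tl$-submodule cannot be written as a sum of two orthogonal nonzero idempotents without splitting that submodule. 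Equivalently, one notes that the eigenvalue matrix $\bigl(\qbin{j+m+i}{i}\qbin{j-m}{i}\bigr)_{j,i}$ is triangular with nonzero diagonal, so $\mathrm{End}_{\tl}W_m=\mathrm{span}\{S_0,\dots,S_{n/2-m}\}$ has dimension $n/2-m+1$, equal to the number of indices $j$; the orthogonal family $\{z_{j,m}\}$ partitioning unity is then forced to be the complete set of minimal, hence primitive, idempotents of this commutative algebra.
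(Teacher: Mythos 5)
Your architecture is the paper's own: the reduction to the scalar identity $c_{j,j',m}=\delta_{j,j'}$, the empty-sum case $j>j'$, the single-term evaluation at $j=j'$ (your cancellation of $q$-factorials is exactly the paper's ``after simplification''), and primitivity by a dimension count --- the paper likewise concludes from $\dim\mathrm{End}_{\tl}W_m=n/2-m+1$ that $\{z_{j,m}\}$ is a basis, so that $z_{j,m}\,\mathrm{End}_{\tl}W_m$ is spanned by its unit; your triangularity remark is a useful complement, since it proves directly that the $S_i$ are linearly independent. The two genuine divergences are at the crux $j<j'$ and at irreducibility. Writing $j'=j+k$, the paper proves the vanishing via Proposition \ref{prop:sumqnb}(a): a closed form for the \emph{partial} sums $A_l$, $l<k$, established by induction on the truncation index $l$ at fixed $k$, the full sum then cancelling at the last step. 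Your first-choice route does work, and is arguably cleaner: up to nonzero factors,
\begin{equation*}
A_k=\frac{[k-1]!}{[k]!}\sum_{r=0}^{k}(-1)^r\qbin{k}{r}\qbin{2j+k+r}{k-1},
\end{equation*}
and converting the symmetric $q$-binomials to Gaussian ones in base $q^2$ turns this into $\sum_{r=0}^{k}(-1)^r q^{2\binom{r}{2}}\binom{k}{r}_{q^2}\,q^{-2r(k-1)}P(q^{2r})$ with $P$ a polynomial of degree $k-1$; since $\sum_{r=0}^{k}(-1)^r q^{2\binom{r}{2}}\binom{k}{r}_{q^2}z^r=\prod_{s=0}^{k-1}\bigl(1-zq^{2s}\bigr)$, every monomial of $P$ is annihilated. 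This is terminating $q$-Chu--Vandermonde (the summand ratio has only two numerator parameters, so no Saalsch\"utz is needed). For irreducibility, your double-centralizer decomposition $W_m=\oplus_{j\ge m}(U_j)_m\otimes\mathcal V_j$ with one-dimensional weight lines is valid at generic $q$ and genuinely different from the paper, which argues only that $\dim z_{j,m}\mathrm{End}_{\tl}W_m=1$. Note, however, that duality gives irreducibility of the multiplicity spaces but not their identification: your further claim $V_{j,m}\cong\mathsf V_{n,n/2-j}$ is unearned at this point --- it is precisely the content of Corollary \ref{cor:isoMathcalV}, which the paper proves by a separate induction on $m$. This is harmless here, since the theorem asserts only irreducibility.

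One element of your plan would fail, and you should not retain it as a fallback: ``induction on $j'-j$ scaffolded by \eqref{eq:recursive}.'' There is no telescoping relation between $c_{j,j'-1,m}$ and $c_{j,j',m}$ --- every summand changes with $j'$ --- and \eqref{eq:recursive} cannot carry the induction, because the defining property of its added coefficient, annihilation of $\ket{n/2,m}$, is, for the closed-form coefficients \eqref{eq:coefExpr}, exactly the $j'=n/2$ instance of the identity you are trying to prove. (The recursion was the paper's \emph{heuristic} for guessing \eqref{eq:coefExpr}; its actual induction in Proposition \ref{prop:sumqnb} runs along the partial-sum index at fixed $k$, a different axis entirely.) Had the hypergeometric evaluation not panned out, your proposal would have had no working route to the key identity.
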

\begin{proof}
	Acting on a vector $\ket{j,m}$ of one of the irreducible $U_j$'s, the element $z_{j,m}$ yields
	\begin{equation*}
		z_{j,m}\ket{j,m} = \sum_{i=j-m}^{n/2-m}a_{i,j,m}\,S_i\ket{j,m} = \sum_{i=j-m}^{n/2-m}a_{i,j,m}\qbin{j+m+i}{i}\qbin{j-m}{i}\ket{j,m}.
	\end{equation*}
	If $i>j-m$ then $\qbinmini{j-m}{i}=0$, and the last equation becomes
	\begin{equation*}
		z_{j,m}\ket{j,m} = a_{j-m,j,m}\,S_{j-m}\ket{j,m}=\ket{j,m}
	\end{equation*}
	after simplification. Therefore $z_{j,m}$ acts as the identity on $V_{j,m}$. Now, the action of $z_{j,m}$ on $\ket{j',m}$ with $j'\neq j$ is
	\begin{equation*}
		z_{j,m}\ket{j',m} = \sum_{i=j-m}^{n/2-m}a_{i,j,m}\qbin{j'+m+i}{i}\qbin{j'-m}{i}\ket{j',m}.
	\end{equation*}
	If $j'<j$, then $i>j'-m$ for all $i$, and consequently $z_{j,m}\ket{j',m}=0$. If $j'>j$, then $\bigl[\begin{smallmatrix} j'-m \\ i \end{smallmatrix}\bigr]=0$ for $i>j'-m$, but it is nonzero for $j-m\leq i\leq j'-m$. To show that $z_{j,m}\ket{j',m}=0$ for all $j'\neq j$, it remains to see 	whether
	\begin{equation*}
		\sum_{i=j-m}^{j'-m}a_{i,j,m}\qbin{j'+m+i}{i}\qbin{j'-m}{i} = 0
	\end{equation*}
holds. Let $j'=j+k$ with $k>0$ and expand the sum as
	\begin{multline*}
		 \sum_{i=j-m}^{j+k-m}a_{i,j,m}\qbin{j+k+m+i}{i}\qbin{j+k-m}{i} = \\[2mm] [2j+1]\,\frac{[j+m]!\,[j+k-m]!}{[j-m]!\,[j+k+m]!}\;\sum_{i=j-m}^{j+k-m}(-1)^{i+j-m}\,\frac{[j+k+m+i]!}{[i-j+m]!\,[i+j+m+1]!\,[j+k-m-i]!}.
	\end{multline*}
The inner sum takes the form of the series $A_k$ of Proposition \ref{prop:sumqnb} (a) if $r$ is set to  $i-j+m$:
	\begin{align*}
		A_k	& = A_{k-1} + (-1)^k\frac{[2j+2k]!}{[k]!\,[2j+k+1]!}										\\[2mm]
			& = (-1)^{k-1}\frac{[2j+2k]!}{[k]!\,[2j+k+1]!} + (-1)^k\frac{[2j+2k]!}{[k]!\,[2j+k+1]!}	\\[2mm]
			& = 0.
	\end{align*}
	Up to now, we have shown that $z_{j,m}\ket{j',m} = \delta_{j,j'}\ket{j',m}$ and therefore $z_{j,m}\,W_m = V_{j,m}$. Clearly the idempotence and orthogonality relations 
	\begin{equation}
		\label{eq:orthoIdem}
		z_{j,m}\,z_{j',m}=\delta_{j,j'}\,z_{j',m}
	\end{equation}
as well as $\sum_{m\le j\le n/2}z_{j,m}=\mathds{1}_{W_m}$ follow because of \eqref{eq:wmAndVjm}. Those idempotents are primitive as the set $\{z_{j,m}\}_{m\leq j\leq n/2}$ is a basis of $\mathrm{End}_{\tl}W_m$. Indeed the idempotents are linearly independent as they project onto mutually disjoint subspaces and their number $(n/2-m+1)$ coincides with the dimension of $\mathrm{End}_{\tl}W_m$ (see the end of paragraph \ref{subsect:qSW}).

Finally the submodules $V_{j,m}$ are irreducible. Indeed, if $V_{j,m}$ has a proper submodule, then the endomorphism ring $\mathrm{End}_{\tl}V_{j,m}$ must contain at least one element linearly independent from the unit $z_{j,m}$. But, by a simple dimensional argument ($\dim \mathrm{End}_{\tl} W_m=n/2-m+1$), the endomorphism algebra of $V_{j,m}$, namely $z_{j,m}\mathrm{End}_{\tl}W_m$, is spanned by its unit $z_{j,m}$.
\end{proof}

We now present three corollaries of the previous theorem. The first one establishes a link between $\mathcal{V}_{j}$ and $V_{j,m}$. This result is well-known, but the idempotents provide a new simple proof.
\begin{cor}
	\label{cor:isoMathcalV}
	The irreducible module $V_{j,m}=z_{j,m}W_m$ is isomorphic to $\mathcal{V}_{j}$ and $W_m\cong\oplus_{m\le j\le \frac n2}\mathcal V_{j}$ as $\tl$-modules.
\end{cor}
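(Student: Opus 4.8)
The plan is to identify each irreducible summand $V_{j,m}=z_{j,m}W_m$ of the decomposition $W_m=\bigoplus_{m\le j\le n/2}V_{j,m}$ furnished by Theorem~\ref{thm:idqgene} with the standard module $\mathcal V_j$. I would do this in two stages: first settle the ``diagonal'' case $j=m$ using the homomorphism $\psi_{n,m}$, and then transport the result to $j>m$ by means of the lowering operator $S^-$, which the $q$-Schur-Weyl duality (Theorem~\ref{thm:qSchurWeyl}) guarantees to act on $\cn$ as a $\tl$-homomorphism.

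For the diagonal case, the first observation is that inside $W_j=\bigoplus_{j'\ge j}V_{j',j}$ relation~\eqref{eq:actionSpm} gives $S^+\ket{j',j}_k=[j'+j+1]\ket{j',j+1}_k$, which for generic $q$ is nonzero precisely when $j'>j$ (the vectors $\ket{j',j+1}_k$ lying in distinct copies are linearly independent). Hence $\ker S^+|_{W_j}=V_{j,j}$, a subspace of dimension $\Gamma_j$, the number of copies of $U_j$. Next, $\psi_{n,j}\colon\mathcal V_j\cong\mathsf V_{n,n/2-j}\to W_j$ is a $\tl$-homomorphism that is nonzero, hence injective since $\mathcal V_j$ is irreducible for generic $q$. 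The key point is that its image consists of $\uq$-highest weight vectors: up to normalisation each $T(i,j)$ is a $q$-singlet creator annihilated by $S^+$ on the fully aligned state, and the non-crossing condition on a link state ensures that the nested product $\prod_k T(i_k,j_k)\ket{+\cdots+}$ remains a highest weight vector. Thus $\mathrm{im}\,\psi_{n,j}\subseteq\ker S^+|_{W_j}=V_{j,j}$, and the equality of dimensions $\dim\mathcal V_j=\Gamma_j=\dim V_{j,j}$ forces $\mathrm{im}\,\psi_{n,j}=V_{j,j}$. Therefore $V_{j,j}\cong\mathcal V_j$.

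For $j>m$ I would exploit that $(S^-)^{j-m}$ is a $\tl$-endomorphism of $\cn$ carrying $W_j$ into $W_m$. Relation~\eqref{eq:actionSpm} yields $(S^-)^{j-m}\ket{j,j}_k=[j-m]!\,\ket{j,m}_k$, a nonzero multiple for generic $q$, so the restriction of $(S^-)^{j-m}$ to $V_{j,j}$ is a linear bijection onto $V_{j,m}$; being a $\tl$-homomorphism, it is an isomorphism of $\tl$-modules $V_{j,j}\cong V_{j,m}$. Combining this with the diagonal case gives $V_{j,m}\cong V_{j,j}\cong\mathcal V_j$ for all $m\le j\le n/2$, and summing over $j$ yields $W_m\cong\bigoplus_{m\le j\le n/2}\mathcal V_j$ as $\tl$-modules.

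The hard part, and the only genuinely computational input, is the verification that $\mathrm{im}\,\psi_{n,j}\subseteq\ker S^+$, i.e.~that the creation operators $T(i,j)$ build highest weight vectors; this is also the single place where the precise normalisation of $T(i,j)$ enters (and where one must check that the weights carried by $q^{\pm\sigma^z/2}$ conspire to make $S^+\prod_k T(i_k,j_k)\ket{+\cdots+}=0$). Everything else is formal: dimension counting, Schur's lemma applied through injectivity of maps out of irreducibles, and the fact---supplied by the duality---that $S^\pm$ intertwine the $\tl$-action. An alternative to the $\ker S^+$ step would be to locate $\mathrm{im}\,\psi_{n,j}$ by evaluating the central element $F_n$, whose eigenvalue $q^{2j+1}+q^{-2j-1}$ distinguishes the $\mathcal V_j$; but since the dimensions $\Gamma_j$ need not be distinct, this still requires pinning the image to a specific $V_{j',j}$, so the highest weight characterisation remains the most direct route.
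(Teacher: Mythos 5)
Your proposal is correct, but it takes a genuinely different route from the paper's. The paper argues by downward induction on $m$: the anchor is the trivial case $W_{n/2}\cong\mathcal V_{n/2}$, the inductive step transports each $V_{j,m+1}\cong\mathcal V_j$ into $W_m$ by the $\tl$-homomorphism $S^-$ (injective on each $U_j$ for generic $q$), and the embedding $\psi_{n,m}\colon\mathcal V_m\to W_m$ of paragraph \ref{sec:rho} is used only through the weak fact that it is a nonzero map; the dimension count $\sum_{m\le j\le n/2}\Gamma^{(n)}_j=\binom{n}{n/2-m}=\dim W_m$ then forces the one remaining idempotent $z_{m,m}$ to project onto a copy of $\mathcal V_m$. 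You instead anchor at every diagonal $j=m$ via the characterisation $\ker S^+|_{W_j}=V_{j,j}$ together with the claim $\mathrm{im}\,\psi_{n,j}\subseteq\ker S^+|_{W_j}$, and then transport by $(S^-)^{j-m}$, which by \eqref{eq:vectModule} multiplies each $\ket{j,j}_k$ by $[j-m]!\neq 0$; this shares the $S^-$-transport idea with the paper but replaces the induction by a direct, non-recursive identification. What each buys: your version yields the concrete and independently useful fact that the copy of $\mathcal V_j$ sitting in $W_j$ is exactly the space of $\uq$-highest weight vectors (a picture the paper exploits informally at roots of unity), whereas the paper's induction is engineered precisely to avoid the one computation you correctly isolate as the crux, namely that the nested singlet states are annihilated by $S^+$ --- the paper never proves this and never needs it. Your caution about normalisation there is well placed: with the paper's literal $T(i,j)=q^{-1/2}\sigma_i^- - q^{1/2}\sigma_j^-$ and the coproduct of \eqref{eq:Uqrep} one finds for $n=2$ that $S^+\bigl(q^{-1/2}\ket{-+}-q^{1/2}\ket{+-}\bigr)=(q^{-1}-q)\ket{++}\neq 0$, while the vector consistent with \eqref{eq:matE} read in the basis $(++,+-,-+,--)$, namely $q^{-1/2}\ket{+-}-q^{1/2}\ket{-+}$, is both the $\beta$-eigenvector of $E$ and annihilated by $S^+$; so the highest-weight property does hold, but only after fixing a $q\leftrightarrow q^{-1}$ (equivalently $i_k\leftrightarrow j_k$) slip in the paper's stated formula for $T$. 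All the remaining steps of your plan --- injectivity of $\psi_{n,j}$ from irreducibility of $\mathcal V_j$ at generic $q$, $\dim\mathcal V_j=\Gamma^{(n)}_j=\dim V_{j,j}$, and the final direct sum over $j$ from Theorem \ref{thm:idqgene} --- are sound.
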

\begin{proof}
The subspace $W_{\frac n2}=\mathrm{span}\,\{\ket{++\dots +}\}$ is one-dimensional and all generators $\rho_n(e_i)$ act on it as zero. This is precisely their action on the one-dimensional irreducible $\mathcal V_{\frac n2}\cong\mathsf V_{n,0}$. As required, equation \eqref{eq:idem2} gives $z^{(n)}_{\frac n2,\frac n2}=\mathds{1}_{W_{\frac n2}}$. Therefore $z^{(n)}_{\frac n2,\frac n2}$ projects on a one-dimensional subspace and the decomposition of $W_{\frac n2}\cong\mathcal V_{\frac n2}$ as $\tl$-module follows.

	Suppose now that the decomposition $W_{m+1}\cong\oplus_{m+1\le j\le \frac n2}\mathcal V_{j}$ holds for some $m\geq 0$ and that all $V_{j,m+1}=z^{(n)}_{j,m+1}W_{m+1}$ are isomorphic to the corresponding irreducible $\mathcal V_{j}$. For $q$ generic, the action of $S^-$ restricted to $W_{m+1}$ is injective in any $U_j$. Therefore, the irreducible representation $\mathcal V_{j}\subset W_{m+1}$ is mapped by $S^-$ into a subspace transforming also as $\mathcal V_{j}$ and $W_m$ must therefore contain a $\tl$-submodule $\oplus_{m+1\le j\le \frac n2}\mathcal V_{j}$. Since $V_{j,m+1}=z^{(n)}_{j,m+1}W_{m+1}$ is isomorphic to $\mathcal V_{j}$ and that the non-zero $\tl$-homomorphism $S^-$ maps the vector space $V_{j,m+1}$ onto $V_{j,m}$, then $V_{j,m}=z^{(n)}_{j,m}W_m$ is also isomorphic to $\mathcal V_{j}$ as $\tl$-module. Finally section \ref{sec:rho} has shown that $W_m$ always contains a subspace isomorphic to $\mathcal V_{m}$. This module is non-isomorphic to those contained in $S^-W_{m+1}$ and $W_m$ must therefore contain a submodule $\oplus_{m\le j\le \frac n2}\mathcal V_{j}$. The dimension of this submodule is $\sum_{m\le j\le \frac n2}\Gamma^{(n)}_j=\left(\begin{smallmatrix}n\\ \frac n2-m\end{smallmatrix}\right)$ and coincides with $\dim W_m$. Since all idempotents $z^{(n)}_{j,m}$ for $m+1\le j\le \frac n2$ have been accounted for, the last one $z^{(n)}_{m,m}$, which is non-zero, must project onto $\mathcal V_{m}$.
\end{proof}

The second corollary is an identity between $q$-binomials.
\begin{cor} For generic $q$ and $2m\in \mathbb N$ with $j-m\in \mathbb N$
	\begin{equation*}
		\sum_{j=m}^{m+i}(-1)^{j-m}[2j+1]\qbin{i}{j-m}\qbin{i+j+m+1}{i+1}^{-1}=0,\qquad \textrm{for all\ }i\ge 1.
	\end{equation*}
\end{cor}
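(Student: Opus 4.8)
The plan is to derive this identity directly from the partition-of-unity statement in Theorem \ref{thm:idqgene}, rather than attempting an independent combinatorial evaluation of the sum. The key observation is that the claimed identity is essentially the coefficient relation forced by the condition $\sum_{m\le j\le n/2} z_{j,m}=\mathds{1}_{W_m}$ when one reads off the coefficient of a fixed $S_i$. Since each idempotent is $z_{j,m}=\sum_{i} a_{i,j,m}\,S_i$ with the explicit coefficients \eqref{eq:coefExpr}, summing over $j$ and demanding that the total equal $\mathds{1}_{W_m}=S_0$ means that the coefficient of $S_i$ must vanish for every $i\ge 1$. Written out, that coefficient is $\sum_{j} a_{i,j,m}$, and substituting \eqref{eq:coefExpr} produces precisely the stated alternating sum (after noting that the $q$-binomial $\qbinmini{i}{j-m}$ restricts the range of $j$ to $m\le j\le m+i$, which matches the summation limits in the corollary).

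Concretely, I would first fix $n$ large enough that $n/2-m\ge i$, so that $S_i$ genuinely appears in the expansions. Then I would expand $\mathds{1}_{W_m}=\sum_{j=m}^{n/2} z_{j,m}=\sum_{j=m}^{n/2}\sum_{i'} a_{i',j,m}\,S_i'$ and regroup as $\sum_{i'}\bigl(\sum_{j} a_{i',j,m}\bigr)S_i'$. Because the $S_{i'}$ are linearly independent in $\mathrm{End}_{\tl}W_m$ (they span it, with dimension $n/2-m+1$, as recalled at the end of paragraph \ref{subsect:qSW}), and because $\mathds{1}_{W_m}=S_0$, matching coefficients gives $\sum_j a_{0,j,m}=1$ and $\sum_j a_{i',j,m}=0$ for all $i'\ge 1$. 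Substituting the closed form \eqref{eq:coefExpr} for $a_{i,j,m}$ into the latter, pulling out the factors $\qbinmini{i+j+m+1}{i+1}^{-1}$, $[2j+1]/[i+1]$ and $(-1)^{i+j-m}$, and cancelling the common $(-1)^i/[i+1]$, yields exactly the identity of the corollary.

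The only genuine point to check is that the linear independence of the $S_i$ is legitimate for the relevant range, and that truncating the upper summation limit at $m+i$ (forced by $\qbinmini{i}{j-m}=0$ when $j-m>i$) agrees with the $j$-range in the statement; both are immediate. I do not expect a real obstacle here, since all the analytic content has already been discharged in proving Theorem \ref{thm:idqgene} — in particular the vanishing sums $A_k=0$ established there via Proposition \ref{prop:sumqnb}. Indeed one could alternatively give a self-contained derivation by recognizing the sum as a terminating balanced $q$-hypergeometric series and invoking the same recurrence $A_k=A_{k-1}+(-1)^k\frac{[2j+2k]!}{[k]!\,[2j+k+1]!}$, but the cleanest route is to extract the identity as a corollary of the completeness relation $\sum_j z_{j,m}=\mathds{1}_{W_m}$, exactly as the placement of this statement among the corollaries suggests.
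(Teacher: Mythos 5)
Your proposal is correct and is essentially identical to the paper's own proof: the paper likewise expands $\sum_{m\le j\le n/2}z^{(n)}_{j,m}=\mathds{1}_{W_m}$ as $\sum_i S_i\sum_{j=m}^{m+i}a_{i,j,m}$, invokes the linear independence of the $S_i$ as linear transformations, and concludes that the coefficient of $S_i$ vanishes for $i\ge 1$. Even your suggested alternative direct derivation mirrors the paper's remark following the corollary, except that the relevant telescoping series there is $B_i$ of Proposition \ref{prop:sumqnb}~(b) (after setting $r=j-m$), not the $A_k$ recurrence used in Theorem \ref{thm:idqgene}.
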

\begin{proof}
The identity $\sum_{m\le j\le n/2}z^{(n)}_{j,m}=\mathds{1}_{W_m}$ can be written as
	\begin{equation*}
		\sum_{i=0}^{\frac n2-m}S_i\sum_{j=m}^{m+i}a_{i,j,m}=\mathds{1}_{W_m},
	\end{equation*}
since $a_{i,j,m}=0$ for $j>i+m$. The $S_i$ are linearly independent as linear transformations and $S_0\,a_{0,j,m}=\mathds{1}_{W_m}$. Therefore $\sum_{m\le j\le m+i}a_{i,j,m}$ must vanish for $i\geq1$. 
\end{proof}

\noindent This combinatorial identity can also be proved directly. By fixing $r=j-m$ and using \eqref{eq:coefExpr}, one can write the sum (up to an overall factor) as
	\begin{equation*}
		 \sum_{r=0}^i(-1)^r\frac{[2m+r]!\,[2m+2r+1]}{[r]!\,[i-r]!\,[2m+r+i+1]!} = B_i.
	\end{equation*}
Proposition \ref{prop:sumqnb} (b) then leads to the result for any $i\geq1$ :
	\begin{align*}
		B_i 	& = B_{i-1} + (-1)^i\frac{[2m+i]!}{[i]!\,[2m+2i]!}							\\[2mm]
			& = (-1)^{i-1}\frac{[2m+i]!}{[i]!\,[2m+2i]!} + (-1)^i\frac{[2m+i]!}{[i]!\,[2m+2i]!} 	\\[2mm]
			& = 0.
	\end{align*}
The similarity of this argument with that using the series $A_k$ in the proof of Theorem \ref{thm:idqgene} is striking.

The last corollary relates two central elements, the first in $\tl$, the second in $\uq$. It is particularly useful as it holds for any value of $q$. We exceptionally reinstate the ``$\rho_n$'' and ``$\pi_n$'' to underline that the result holds on $\cn$.
\begin{cor}\label{coro:fn} For all $q$
	$$\rho_n(F_n)=\pi_n\bigl((q-q^{-1})S^2+[2]\mathds{1}\bigr).$$
\end{cor}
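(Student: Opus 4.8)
The plan is to prove the identity by showing that both central elements act as the same scalar on each irreducible $\uq$-module $U_j$ appearing in the decomposition \eqref{eq:UqDecomp}, and then to invoke the fact that these scalars determine the operators. First I would recall from Corollary \ref{cor:isoMathcalV} that, for $q$ generic, the subspace $V_{j,m}=z_{j,m}W_m\subset\cn$ is isomorphic, as a $\tl$-module, to the standard module $\mathcal V_j\cong\mathsf V_{n,\frac n2-m}$. Since $F_n$ is central in $\tl$ and acts on $\mathsf V_{n,\frac n2-m}$ by the stated scalar, we have
\begin{equation*}
	\rho_n(F_n)\bigr|_{V_{j,m}} = \bigl(q^{2j+1}+q^{-(2j+1)}\bigr)\mathds{1}_{V_{j,m}}.
\end{equation*}
On the other hand, the vectors in $V_{j,m}$ are exactly the weight-$m$ vectors of the copies of $U_j$, so by the eigenvalue formula \eqref{eq:S2eigenvalues} the Casimir $S^2$ acts on $V_{j,m}$ as $\bigl([j+\tfrac12]^2-[\tfrac12]^2\bigr)\mathds{1}$.

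The heart of the argument is then the elementary $q$-number computation showing that these two scalars agree after the affine transformation $x\mapsto(q-q^{-1})x+[2]$. Using $[k]=\frac{q^k-q^{-k}}{q-q^{-1}}$, I would expand
\begin{equation*}
	(q-q^{-1})\bigl([j+\tfrac12]^2-[\tfrac12]^2\bigr)+[2]
\end{equation*}
and simplify; the cross terms collapse and one recovers precisely $q^{2j+1}+q^{-(2j+1)}$. Concretely, $(q-q^{-1})^2[j+\tfrac12]^2=(q^{j+1/2}-q^{-(j+1/2)})^2=q^{2j+1}+q^{-(2j+1)}-2$, while $(q-q^{-1})^2[\tfrac12]^2=(q^{1/2}-q^{-1/2})^2=q+q^{-1}-2$, and dividing by $(q-q^{-1})$ and adding $[2]=q+q^{-1}$ yields the claim. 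This is a routine identity and I would present it compactly rather than grinding through it.

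To conclude for generic $q$, note that both $\rho_n(F_n)$ and $\pi_n\bigl((q-q^{-1})S^2+[2]\mathds{1}\bigr)$ are $\tl$-endomorphisms of $\cn$ that are scalar on each isotypic piece; since they agree scalar-by-scalar on every $V_{j,m}$, and the $V_{j,m}$ decompose $\cn$ by \eqref{eq:decompocn} and \eqref{eq:wmAndVjm}, the two operators coincide on $\cn$ for generic $q$. The only remaining point is to upgrade the identity to \emph{all} $q$, as the statement claims. The main obstacle is precisely this passage to roots of unity, where $\cn$ is no longer a sum of irreducibles and the scalar-by-scalar argument breaks down. The clean way around it is a continuity/polynomiality argument: both $\rho_n(F_n)$ and $\pi_n(S^2)$ are, in the fixed basis $\mathcal B$ of $\cn$, matrices whose entries are Laurent polynomials in $q$ (for $F_n$ this is the stated fact that it is a $\Z[q,q^{-1}]$-combination of words in $\tl$; for $S^2$ it is manifest from \eqref{eq:Uqrep}). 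Two matrix-valued Laurent-polynomial functions of $q$ that agree on the generic set $\C^\times$ minus the roots of unity — an infinite set with accumulation points — must agree identically. Hence the equality persists at every $q\in\C^\times$, which completes the proof.
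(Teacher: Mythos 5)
Your strategy coincides with the paper's own proof: for generic $q$ one identifies the scalar by which each side acts on the pieces of $\cn$ coming from the copies of $U_j$, then promotes the identity to all $q$ by noting that both sides have matrix entries that are Laurent polynomials in $q$ and agree on a set with accumulation points. The difference between evaluating on the weight spaces $V_{j,m}$ (as you do) and on highest-weight vectors propagated by $S^-$ (as the paper does) is immaterial. The genuine problem is the central scalar computation. What your proof needs is
\begin{equation*}
	(q-q^{-1})\bigl(\bigl[j+\tfrac12\bigr]^2-\bigl[\tfrac12\bigr]^2\bigr)+[2]\;=\;q^{2j+1}+q^{-(2j+1)},
\end{equation*}
and this is false for every $j\ge\tfrac12$: every $q$-number is invariant under $q\mapsto q^{-1}$, so under this substitution the left-hand side picks up the sign change of $(q-q^{-1})$ while the right-hand side is unchanged; equality would force $[j+\tfrac12]^2=[\tfrac12]^2$, which at generic $q$ happens only for $j=0$. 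Your own expansion exhibits the failure: you correctly compute $(q-q^{-1})^2\bigl(\bigl[j+\tfrac12\bigr]^2-\bigl[\tfrac12\bigr]^2\bigr)=q^{2j+1}+q^{-(2j+1)}-q-q^{-1}$, and this yields the desired scalar only if one adds $[2]$ \emph{without} dividing by $(q-q^{-1})$; the closing step ``dividing by $(q-q^{-1})$ and adding $[2]$ yields the claim'' is a non sequitur.

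To be fair, the same inconsistency sits in the paper itself: in \eqref{eq:Xaction} the operator $(q-q^{-1})S^2$ is silently assigned the eigenvalue $(q-q^{-1})^2\bigl([j+\tfrac12]^2-[\tfrac12]^2\bigr)$, which contradicts \eqref{eq:S2eigenvalues}. The corollary as printed is evidently a typo for $\rho_n(F_n)=\pi_n\bigl((q-q^{-1})^2S^2+[2]\mathds{1}\bigr)$, and with that correction your argument, like the paper's, goes through. A blind proof, however, should have detected that the stated identity cannot hold and flagged the needed correction, rather than asserting that the computation ``recovers precisely'' the claim. A second, smaller gap: Corollary \ref{cor:isoMathcalV} and the decomposition \eqref{eq:wmAndVjm} are established for $m\ge0$ only, so on the weight spaces with $m<0$ the scalar of $F_n$ is not directly available to you; one needs the paper's observation that both $\rho_n(F_n)$ and $\pi_n(S^2)$ commute with the $\tl$-homomorphism $S^-$, which transports the two scalars from $W_j$ down to all weights of each copy of $U_j$ (or, alternatively, the spin-reversal symmetry).
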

\begin{proof}
	The first step is to show that the relation holds when $q$ is generic. On a given $\mathcal V_{n/2-j}$, the element $F_n$ acts as the identity times $\left(q^{2j+1}+q^{-2j-1}\right) = {[2(2j+1)]}/{[2j+1]}$. (See section \ref{sec:ton}.) The proof of Corollary \eqref{cor:isoMathcalV} has shown that all copies of a given $\mathcal V_{m}$ are obtained from the copy that lies in $W_m$ by the action of $S^-$. Since $S^2$ commutes with $S^-$, the action of $\left((q-q^{-1})S^2+[2]\mathds{1}\right)$ might as well be computed on any of the vectors $\ket{j,j}_{i=1,\dotsc,\Gamma_j}$:
	\begin{equation}
		\label{eq:Xaction}
		\left((q-q^{-1})S^2+[2]\mathds{1}\right)\ket{j,j} = \left((q-q^{-1})^2\bigl([j+1/2]^2-[1/2]^2\bigr)+[2]\right)\ket{j,j}	 = \frac{[2(2j+1)]}{[2j+1]}\ket{j,j}.
	\end{equation}
	The relation thus holds for generic $q$. The central element $F_n$ is a linear combination of words in $\tl$ with coefficients in $\mathbb Z[q,q^{-1}]$. Moreover, in the spin basis, the generators $e_i$ are represented by matrices whose elements are also polynomials in $q$ and $q^{-1}$. So $\rho_n(F_n)$ is a polynomial in $q$ and $q^{-1}$. The matrix elements of the Casimir $S^2$ of $\uq$ in the spin basis are also polynomials in $q$ and $q^{-1}$. If the two polynomials coincide on the open set of generic $q$'s, they coincide everywhere and the result must hold for all $q$.
\end{proof}
%%%%%%%%%%%%%%%%%%%%%%%%%%%%%%%%%%%
%
\section{Decomposition of $\cn$ for $q$ a root of unity}\label{sec:atRootsOfUnity}
%
%%%%%%%%%%%%%%%%%%%%%%%%%%%%%%%%%%%
The goal of this section is to study the behavior, when $q$ goes to a root of unity, of the idempotents $z_{j,m}$ found for generic $q$ in the previous section. It is known that, at $q$ a root of unity, the Temperley-Lieb algebra $\tl$ is non-semisimple, at least for $n$ large enough. Indecomposable representations exist at such values of $q$ and some of the previous idempotents may fail to exist. To identify the proper idempotents on $\cn$, we are guided by the {\em evaluation principle} stated by \citet{GoodmanWenzl1993}. Though obvious, it allows for the identification of proper quantities in $\tl$ at root of unity: Any algebraic identity between elements in $\tl$ that have as coefficients rational functions whose denominators do not have a zero at $q_c$ is an algebraic identity of $\tl$ at $q_c$. 

For the rest of the section, the integer $n$ appearing in $\tl$ and $\cn$ is fixed and the complex number $q_c$ is a root of unity associated with the integer $p$.
%%%%%%%%%%%%%%%%%%%%
\subsection{The singularities at a root of unity $q_c$}
%%%%%%%%%%%%%%%%%%%%
The present subsection gives the criteria for a coefficient $a_{i,j,m}$ of the idempotent $z_{j,m}$ to be singular. It is natural to write the defining indices $i,j,m$ as
\begin{align*}
	i & =  r\cdot p+a,  				\\
	j & =  s\cdot p+b,				\\
	\text{and}\quad m & =  t\cdot p+c,
\end{align*}
with $0\le r,s,t$ and $0\le a,b,c\le p-1$. However the following variables and their factorisation will be more useful:
\begin{equation}
	\label{eq:adg}
	\begin{aligned}
		 i 						& = r\cdot p+a, 		\\
		k = j-m 					& = u\cdot p+d,		\\
		\text{and}\quad i+l+1	= i+j+m+1	& = w\cdot p+g
	\end{aligned}
\end{equation}
where $0\le r, u, w$ and $0\le a, d, g\le p-1$. Recall that, when $n$ is odd, both $j$ and $m$ are half-integers. The labels $a$, $d$ and $g$ are however integers for all $n$. The expression for $a_{i,j,m}$ takes the following form in terms of $i, k$ and $l$:
\begin{equation*}
	a_{i,j,m}=(-1)^{i+k}\qbin{i}{k}\qbin{i+l}{i}^{-1}\frac{[k+l+1]}{[i+l+1]}.
\end{equation*}
As before the coefficient $a_{i,j,m}$ is zero if $i<k=j-m$. We now study the behavior of those coefficients as functions of $q$.

\begin{lem}\label{thm:singu1} Let $q_c$ and $p$ be as above. 
	\begin{enumerate}
		\item[(1)] If $j$ is critical, then the coefficient $a_{i,j,m}$ is regular at $q_c$ for all values of $i$ and $m$.
		\item[(2)] If $j$ is non-critical, then $a_{i,j,m}$ is singular at $q_c$ if and only if $g\le a$ and $d\le a$.
	\end{enumerate}
\end{lem}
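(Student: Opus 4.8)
The plan is to assign to each factor appearing in $a_{i,j,m}$ its order of vanishing at $q_c$, i.e.\ the integer $\nu(f)$ with $f\sim(q-q_c)^{\nu(f)}$ near $q_c$, and then to declare $a_{i,j,m}$ singular precisely when $\nu(a_{i,j,m})<0$. First I would record the elementary facts. Since $p\ge2$, the denominator $q-q^{-1}$ of a $q$-number does not vanish at $q_c$, while the numerator $q^{N}-q^{-N}$ has a simple zero there exactly when $p\mid N$; hence $\nu([N])=1$ if $p\mid N$ and $\nu([N])=0$ otherwise, so that $\nu([N]!)=\lfloor N/p\rfloor$. Consequently
\begin{equation*}
\nu\left(\qbin{A}{B}\right)=\Big\lfloor\tfrac{A}{p}\Big\rfloor-\Big\lfloor\tfrac{B}{p}\Big\rfloor-\Big\lfloor\tfrac{A-B}{p}\Big\rfloor\in\{0,1\},
\end{equation*}
the value being $1$ exactly when the base-$p$ units digits of $B$ and $A-B$ sum to at least $p$ (a carry).

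Second, using $k+l+1=2j+1$ and $a_{i,j,m}=(-1)^{i+k}\qbin{i}{k}\qbin{i+l}{i}^{-1}[2j+1]/[i+l+1]$, I would split
\begin{equation*}
\nu(a_{i,j,m})=\nu\left(\qbin{i}{k}\right)-\nu\left(\qbin{i+l}{i}\right)+\nu([2j+1])-\nu([i+l+1]).
\end{equation*}
Each term is then evaluated from the data $i=rp+a$, $k=up+d$, $i+l+1=wp+g$. A short floor computation gives $\nu\bigl(\qbin{i}{k}\bigr)=1$ iff $d>a$; that $\nu([i+l+1])=1$ iff $g=0$; and, writing $l=(w-r)p+(g-a-1)$ and $i+l=wp+(g-1)$, that $\nu\bigl(\qbin{i+l}{i}\bigr)=1$ iff $1\le g\le a$ and $0$ otherwise. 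Finally $\nu([2j+1])=1$ iff $p\mid 2j+1$, i.e.\ iff $j$ is critical.

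For non-critical $j$ the term $\nu([2j+1])$ vanishes and a three-way split on $g$ finishes the argument. If $g>a$ then $\nu\bigl(\qbin{i+l}{i}\bigr)=\nu([i+l+1])=0$, so $\nu(a_{i,j,m})=\nu\bigl(\qbin{i}{k}\bigr)\ge0$ and the coefficient is regular. If $1\le g\le a$ then $\nu\bigl(\qbin{i+l}{i}\bigr)=1$ and $\nu([i+l+1])=0$, whereas if $g=0$ then $\nu\bigl(\qbin{i+l}{i}\bigr)=0$ and $\nu([i+l+1])=1$; in both of these cases $\nu(a_{i,j,m})=\nu\bigl(\qbin{i}{k}\bigr)-1$, which is negative exactly when $\nu\bigl(\qbin{i}{k}\bigr)=0$, i.e.\ when $d\le a$. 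As these two cases are precisely $g\le a$, the coefficient is singular iff $g\le a$ and $d\le a$, establishing part (2). For part (1) I would use the congruence $2j+1=k+l+1\equiv d+g-a\ (\mathrm{mod}\ p)$ together with $0\le a,d,g\le p-1$: criticality forces $d+g=a$ or $d+g=a+p$. In the first case $d,g\le a$ and, treating $g\ge1$ and $g=0$ separately, the four orders cancel to $\nu(a_{i,j,m})=0$; in the second case $d,g>a$ and $\nu(a_{i,j,m})=2$. Either way $\nu\ge0$, so $a_{i,j,m}$ is regular.

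The only delicate point is the bookkeeping for $\nu\bigl(\qbin{i+l}{i}\bigr)$: because $i+l=wp+(g-1)$ carries the offset $-1$, the value $g=0$ must be handled apart from $g\ge1$ (there $\lfloor(i+l)/p\rfloor=w-1$), and the units digit of $l$ likewise flips across $g=a$. Keeping these boundary cases straight — and recognizing that the critical congruence $d+g\equiv a$ is exactly what makes the four orders sum to a nonnegative total — is where care is required; everything else is routine floor-function arithmetic.
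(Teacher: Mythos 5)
Your proof is correct, and it reaches the paper's criteria by a genuinely different mechanism. The paper runs everything through the $q$-Lucas theorem (Lemma~\ref{thm:LemFondamental}): it factorizes each $q$-binomial asymptotically as $q\to q_c$, reads off that $\qbinmini{i+l}{i}$ vanishes iff $0<g\le a$ and that $[i+l+1]$ vanishes iff $g=0$ (so the factor $\qbinmini{i+l}{i}^{-1}[i+l+1]^{-1}$ has a simple pole iff $g\le a$), and then argues cancellation qualitatively: the zero of $[2j+1]$ for critical $j$ kills the pole, and for non-critical $j$ regularity is decided by whether $\qbinmini{i}{k}$ contributes a zero, i.e.\ whether $d>a$. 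You instead work additively with vanishing orders, using $\nu([N]!)=\lfloor N/p\rfloor$ and the resulting carry criterion $\nu\bigl(\qbinmini{A}{B}\bigr)=\lfloor A/p\rfloor-\lfloor B/p\rfloor-\lfloor (A-B)/p\rfloor\in\{0,1\}$; this carry criterion is exactly the vanishing criterion encoded in $q$-Lucas, so the combinatorial core (base-$p$ digits, the boundary cases $g=0$ versus $g\ge1$, and the shifted digit of $i+l=wp+(g-1)$) is the same, and your case analysis matches the paper's at every branch, including the congruence $2j+1\equiv d+g-a \pmod p$, which is consistent with the paper's constraint \eqref{eq:constraint1}. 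What your route buys: it is self-contained (no asymptotic factorization needed, only valuations of $q$-factorials), it yields pole simplicity for free ($\nu=-1$ in the singular case) rather than as a separate clause of Lemma~\ref{thm:LemFondamental}, and it is quantitative in the critical case — you show $\nu=0$ when $d+g=a$ and $\nu=2$ when $d+g=a+p$, a refinement the paper never makes explicit. What it gives up: a valuation records only orders, not leading terms, whereas the full $q$-Lucas asymptotics (the factor $q^{e}\binmini{k}{k'}\qbinmini{a}{a'}$) is needed later in the paper, in the residue computations of Theorem~\ref{thm:idemRootOfUnity}, so within the paper the Lucas formulation earns its keep. One shared implicit hypothesis is worth stating in your write-up: your evaluation $\nu\bigl(\qbinmini{i}{k}\bigr)=1$ iff $d>a$ presupposes $i\ge k=j-m$ (normal coefficients); for spurious coefficients $a_{i,j,m}$ vanishes identically and is trivially regular, exactly as the paper tacitly assumes.
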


\begin{proof} The proof makes systematic use of the $q$-Lucas theorem recalled in the Appendix as Lemma \ref{thm:LemFondamental}. A singularity in the above form for $a_{i,j,m}$ may appear only in the factors
	\begin{equation}\label{eq:facteur0}
		\qbin{i+l}{i}^{-1}\frac{1}{[i+l+1]}.
	\end{equation}
	With the use of \eqref{eq:adg}, the $q$-binomial can be written as
	\begin{equation}\label{eq:facteur1}
		\qbin{i+l}{i}=\qbin{w\cdot p+g-1}{r\cdot p+a}.
	\end{equation}
	Because $i+l\ge i$, Lemma \ref{thm:LemFondamental} shows that this term can be zero only when, if $g>0$, $\qbinmini{g-1}{a}$ is or, if $g=0$, when $\qbinmini{p-1}{a}$ is. But $\qbinmini{p-1}{a}$ is never zero as $a\le p-1$ and the term \eqref{eq:facteur1} is zero if and only if $0<g\le a$. The factor $[i+l+1]=[w\cdot p+g]$ will be zero when $g=0$. Therefore the factors in \eqref{eq:facteur0} are singular at $q_c$ if and only if $0\le g\le a$ and the singularity in $a_{i,j,m}$, if any, is a simple pole by Lemma \ref{thm:LemFondamental}.

	The term $[k+l+1]=[2j+1]$ is zero when (and only when) $2j+1\equiv 0\modulo p$ which is precisely the definition of $j$ being critical. In this case, the pole in the factor \eqref{eq:facteur0}, if any, is canceled by the zero in $[k+l+1]$ and the first statement follows.

	The second will be obtained if one can rule out the cases when the factor $\qbinmini{i}{k}$ provides a zero at $q_c$. Again Lemma \ref{thm:LemFondamental}, with the fact that $i\ge k$, shows that this factor is non-zero if and only if $\qbinmini{a}{d}$ is, i.e. if $d\le a$.
\end{proof}

	The previous lemma is simple. Still it can be cast in a diagrammatic version that makes the identification of singular coefficients almost trivial. Recall first that the coefficients $a_{i,j,m}$ of $z_{j,m}$ are zero when $i<j-m$. To distinguish them, we shall call {\em spurious} the coefficients $a_{i,j,m}$ with $i<j-m$ and {\em normal} the others.
\begin{defn}
	Let $p$ and $m$ be fixed with their meaning as above. The allowed values for $j$ are $m\le j\le \frac n2$. These values are organized into {\em cycles} as follows. A cycle is a set $\{j_0, j_1, \dots\}$ of a maximum of $p$ consecutive allowed values of $j$ such that $j_0$ labels a normal coefficient $a_{i,j_0,m}$, it satisfies $j_0\equiv m\modulo p$, and the $j_l=j_0+l$ are included as long as $0\le l<p$ and $j_l$ labels a normal coefficient. The {\em rightmost cycle} on the line $i$ is the unique cycle that contains at least one element and such that the next value of $j$ on its right is either larger than $\frac n2$ or corresponds to a spurious coefficient. A pair of allowed $j$ and $j'$ is {\em bound} if $j$ and $j'$ are distinct, belong to the same cycle and $j+j'\equiv p-1\modulo p$. For fixed $i$ and $m$, the labels $(a,d,g)$ and $(a',d',g')$ corresponding to $(i,j,m)$ and $(i,j',m)$ through \eqref{eq:adg} are also called {\em bound} if $j$ and $j'$ are. (Note that $a'=a$ then.)
\end{defn}

An allowed value of $j$ is critical if $2j+1\equiv 0\modulo p$. Therefore a pair $j$ and $j'$ can be bound only if they are both non-critical. Figures \ref{fig1} and \ref{fig2} give the labels $(a,d,g)$ of the normal coefficients $a_{i,j,m}$ for $n=30$, $m=9$ and $p=4$, and $n=25$, $m=\frac{13}2$ and $p=5$, respectively. The last line of Figure \ref{fig1}, now to be described, provides examples of the above definition. The critical lines are indicated by dashed vertical lines. If $p$ is odd  as in Figure \ref{fig2}, every other critical line falls between two consecutive $j$'s and every (complete) cycle contains precisely one critical $j$ and an even number of non-critical $j$'s, all of the latter bound pairwise. If $p$ is even, all critical lines avoid the $j$'s, as in Figure \ref{fig1}, {\em or} go through them. Again the number of non-critical $j$'s is even, all bound pairwise. Therefore any non-critical $j$ in a cycle of $p$ elements forms a bound pair with some element $j'$ in this cycle. The spurious ones are denoted by $\bullet$. Each column has, as header, the idempotent $z_{j,m}$, then starts by $(j-m)$ dots and goes on with the labels $(a,d,g)$ for all its (normal) coefficients $a_{i,j,m}$. The cycles are circumscribed by rounded rectangles. The heavier ones indicate the rightmost cycles which may contain less than $p$ (normal) elements. Bound pairs are easily read: They correspond to pairs $(a,d,g)$ and $(a,d',g')$ in the same cycle that lie symmetrically on each side of the critical line through the cycle closest to them. 

\begin{figure}[t!]
	\centering
	\begin{tikzpicture}[baseline={(current bounding box.center)},scale=2]
% the idempotents
	\node at (1,10) {$z_{9,9}$};\node at (2,10) {$z_{10,9}$};\node at (3,10) {$z_{11,9}$};
	\node at (4,10) {$z_{12,9}$};\node at (5,10) {$z_{13,9}$};\node at (6,10) {$z_{14,9}$};
	\node at (7,10) {$z_{15,9}$};
% the critical lines
	\draw[densely dashed] (1.5,10.2) -- (1.5,6.1);
	\draw[densely dashed] (3.5,10.2) -- (3.5,6.1);
	\draw[densely dashed] (5.5,10.2) -- (5.5,6.1);
	\node at (0.1,9.5) {$i=0$}; 
	\node at (1,9.5) {$(0,0,3)$};\node at (2,9.5) {$\bullet$};\node at (3,9.5) {$\bullet$};
	\node at (4,9.5) {$\bullet$};\node at (5,9.5) {$\bullet$};\node at (6,9.5) {$\bullet$};
	\node at (7,9.5) {$\bullet$};
	\draw[line width=2pt] (0.8,9.7) arc (90:270:0.2);
	\draw[line width=2pt] (0.8,9.7) -- (1.2,9.7);
	\draw[line width=2pt] (0.8,9.3) -- (1.2,9.3);
	\node at (0.1,9.0) {$i=1$}; 
	\node at (1,9.0) {$\underline{(1,0,0)}$};\node at (2,9.0) {$\underline{(1,1,1)}$};\node at (3,9.0) {$\bullet$};
	\node at (4,9.0) {$\bullet$};\node at (5,9.0) {$\bullet$};\node at (6,9.0) {$\bullet$};
	\node at (7,9.0) {$\bullet$};
	\draw[line width=2pt] (0.8,9.2) arc (90:270:0.2);
	\draw[line width=2pt] (0.8,9.2) -- (2.2,9.2);
	\draw[line width=2pt] (0.8,8.8) -- (2.2,8.8);
	\node at (0.1,8.5) {$i=2$}; 
	\node at (1,8.5) {$\underline{(2,0,1)}$};\node at (2,8.5) {$\underline{(2,1,2)}$};\node at (3,8.5) {$(2,2,3)$};
	\node at (4,8.5) {$\bullet$};\node at (5,8.5) {$\bullet$};\node at (6,8.5) {$\bullet$};
	\node at (7,8.5) {$\bullet$};
	\draw[line width=2pt] (0.8,8.7) arc (90:270:0.2);
	\draw[line width=2pt] (0.8,8.7) -- (3.2,8.7);
	\draw[line width=2pt] (0.8,8.3) -- (3.2,8.3);
	\node at (0.1,8.0) {$i=3$}; 
	\node at (1,8.0) {$\underline{(3,0,2)}$};\node at (2,8.0) {$\underline{(3,1,3)}$};\node at (3,8.0) {$\underline{(3,2,0)}$};
	\node at (4,8.0) {$\underline{(3,3,1)}$};\node at (5,8.0) {$\bullet$};\node at (6,8.0) {$\bullet$};\node at(7,8) {$\bullet$};
	\draw[line width=2pt] (0.8,8.2) arc (90:270:0.2);
	\draw[line width=2pt] (0.8,8.2) -- (4.2,8.2);
	\draw[line width=2pt] (0.8,7.8) -- (4.2,7.8);
	\draw[line width=2pt] (4.2,7.8) arc (90:270:-0.2);
	\node at (0.1,7.5) {$i=4$};
	\node at (1,7.5) {$(0,0,3)$};\node at (2,7.5) {$(0,1,0)$};\node at (3,7.5) {$(0,2,1)$};
	\node at (4,7.5) {$(0,3,2)$};\node at (5,7.5) {$(0,0,3)$};\node at (6,7.5) {$\bullet$};
	\node at (7,7.5) {$\bullet$};
	\draw (0.8,7.7) arc (90:270:0.2);
	\draw (0.8,7.7) -- (4.2,7.7);
	\draw (0.8,7.3) -- (4.2,7.3);
	\draw (4.2,7.3) arc (90:270:-0.2);
	\draw[line width=2pt] (4.8,7.7) arc (90:270:0.2);
	\draw[line width=2pt] (4.8,7.7) -- (5.2,7.7);
	\draw[line width=2pt] (4.8,7.3) -- (5.2,7.3);
	\node at (0.1,7.0) {$i=5$};
	\node at (1,7.0) {$\underline{(1,0,0)}$};\node at (2,7.0) {$\underline{(1,1,1)}$};\node at (3,7.0) {$(1,2,2)$};
	\node at (4,7.0) {$(1,3,3)$};\node at (5,7.0) {$\underline{(1,0,0)}$};\node at (6,7.0) {$\underline{(1,1,1)}$};
	\node at (7,7.0) {$\bullet$};
	\draw (0.8,7.2) arc (90:270:0.2);
	\draw (0.8,7.2) -- (4.2,7.2);
	\draw (0.8,6.8) -- (4.2,6.8);
	\draw (4.2,6.8) arc (90:270:-0.2);
	\draw[line width=2pt] (4.8,7.2) arc (90:270:0.2);
	\draw[line width=2pt] (4.8,7.2) -- (6.2,7.2);
	\draw[line width=2pt] (4.8,6.8) -- (6.2,6.8);
	\node at (0.1,6.5) {$i=6$};
	\node at (1,6.5) {$\underline{(2,0,1)}$};\node at (2,6.5) {$\underline{(2,1,2)}$};\node at (3,6.5) {$(2,2,3)$};
	\node at (4,6.5) {$(2,3,0)$};\node at (5,6.5) {$\underline{(2,0,1)}$};\node at (6,6.5) {$\underline{(2,1,2)}$};
	\node at (7,6.5) {$(2,2,3)$};
	\draw (0.8,6.7) arc (90:270:0.2);
	\draw (0.8,6.7) -- (4.2,6.7);
	\draw (0.8,6.3) -- (4.2,6.3);
	\draw (4.2,6.3) arc (90:270:-0.2);
	\draw[line width=2pt] (4.8,6.7) arc (90:270:0.2);
	\draw[line width=2pt] (4.8,6.7) -- (7.2,6.7);
	\draw[line width=2pt] (4.8,6.3) -- (7.2,6.3);
\end{tikzpicture}
	\caption{\label{fig1}The labels $(a,d,g)$ for the coefficients $a_{i,j,m}$ of the $z_{j,m}$ with $n=30$, $m=9$ and $p=4$. The vertical (dashed) lines are the critical ones. All cycles are circumscribed, the rightmost ones with a heavier line. The underlined $(a,d,g)$ correspond to singular $a_{i,j,m}$.}
\end{figure}
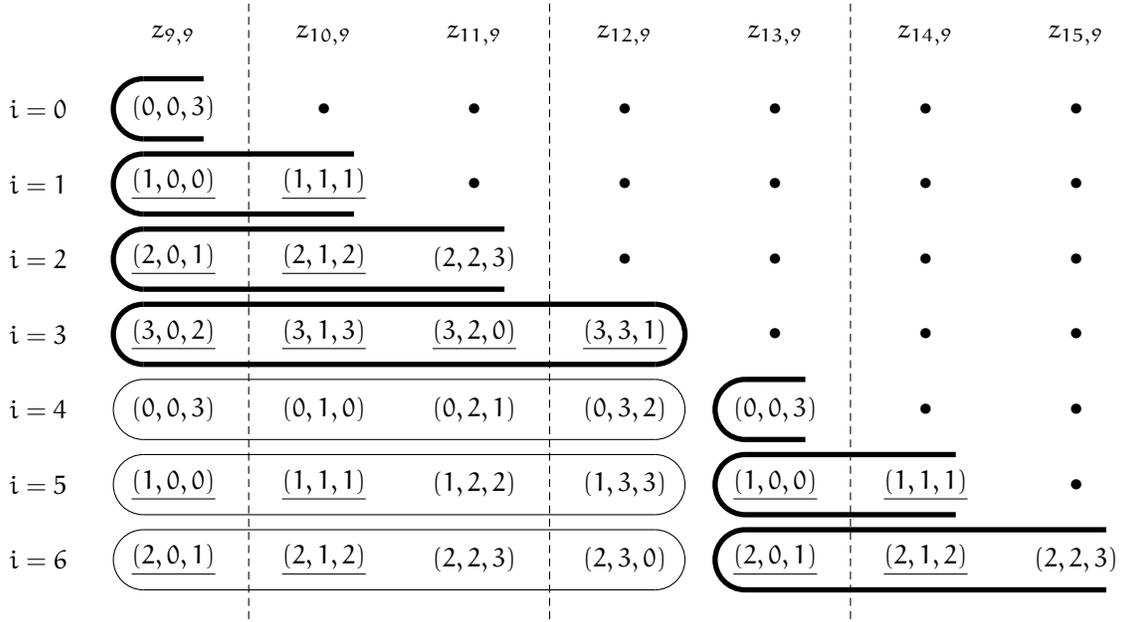

\begin{figure}[t!]
	\centering
	\begin{tikzpicture}[baseline={(current bounding box.center)},scale=2]
% the idempotents
	\node at (1,10) {$z_{\frac{13}2,\frac{13}2}$};\node at (2,10) {$z_{\frac{15}2,\frac{13}2}$};\node at (3,10) {$z_{\frac{17}2,\frac{13}2}$};
	\node at (4,10) {$z_{\frac{19}2,\frac{13}2}$};\node at (5,10) {$z_{\frac{21}2,\frac{13}2}$};\node at (6,10) {$z_{\frac{23}2,\frac{13}2}$};
	\node at (7,10) {$z_{\frac{25}2,\frac{13}2}$};
% the critical lines
	\draw[densely dashed] (1.5,10.2) -- (1.5,6.1);
	\draw[densely dashed] (6.5,10.2) -- (6.5,6.1);
	\draw[densely dashed] (4.0,10.2) -- (4.0,10.15);
	\draw[densely dashed] (4.0,9.85) -- (4.0,8.2);
	\draw[densely dashed] (4.0,7.85) -- (4.0,7.65);
	\draw[densely dashed] (4.0,7.35) -- (4.0,7.15);
	\draw[densely dashed] (4.0,6.85) -- (4.0,6.65);
	\draw[densely dashed] (4.0,6.35) -- (4.0,6.1);
	\node at (0.1,9.5) {$i=0$}; 
	\node at (1,9.5) {$(0,0,4)$};\node at (2,9.5) {$\bullet$};\node at (3,9.5) {$\bullet$};
	\node at (4,9.5) {$\bullet$};\node at (5,9.5) {$\bullet$};\node at (6,9.5) {$\bullet$};
	\node at (7,9.5) {$\bullet$};
	\draw[line width=2pt] (0.8,9.7) arc (90:270:0.2);
	\draw[line width=2pt] (0.8,9.7) -- (1.2,9.7);
	\draw[line width=2pt] (0.8,9.3) -- (1.2,9.3);
	\node at (0.1,9.0) {$i=1$}; 
	\node at (1,9.0) {$\underline{(1,0,0)}$};\node at (2,9.0) {$\underline{(1,1,1)}$};\node at (3,9.0) {$\bullet$};
	\node at (4,9.0) {$\bullet$};\node at (5,9.0) {$\bullet$};\node at (6,9.0) {$\bullet$};
	\node at (7,9.0) {$\bullet$};
	\draw[line width=2pt] (0.8,9.2) arc (90:270:0.2);
	\draw[line width=2pt] (0.8,9.2) -- (2.2,9.2);
	\draw[line width=2pt] (0.8,8.8) -- (2.2,8.8);
	\node at (0.1,8.5) {$i=2$}; 
	\node at (1,8.5) {$\underline{(2,0,1)}$};\node at (2,8.5) {$\underline{(2,1,2)}$};\node at (3,8.5) {$(2,2,3)$};
	\node at (4,8.5) {$\bullet$};\node at (5,8.5) {$\bullet$};\node at (6,8.5) {$\bullet$};
	\node at (7,8.5) {$\bullet$};
	\draw[line width=2pt] (0.8,8.7) arc (90:270:0.2);
	\draw[line width=2pt] (0.8,8.7) -- (3.2,8.7);
	\draw[line width=2pt] (0.8,8.3) -- (3.2,8.3);
	\node at (0.1,8.0) {$i=3$}; 
	\node at (1,8.0) {$\underline{(3,0,2)}$};\node at (2,8.0) {$\underline{(3,1,3)}$};\node at (3,8.0) {${(3,2,4)}$};
	\node at (4,8.0) {${(3,3,0)}$};\node at (5,8.0) {$\bullet$};\node at (6,8.0) {$\bullet$};\node at(7,8) {$\bullet$};
	\draw[line width=2pt] (0.8,8.2) arc (90:270:0.2);
	\draw[line width=2pt] (0.8,8.2) -- (4.2,8.2);
	\draw[line width=2pt] (0.8,7.8) -- (4.2,7.8);
	%\draw[line width=2pt] (4.2,7.8) arc (90:270:-0.2);
%
	\node at (0.1,7.5) {$i=4$};
	\node at (1,7.5) {$\underline{(4,0,3)}$};\node at (2,7.5) {$\underline{(4,1,4)}$};\node at (3,7.5) {$\underline{(4,2,0)}$};
	\node at (4,7.5) {$(4,3,1)$};\node at (5,7.5) {$\underline{(4,4,2)}$};\node at (6,7.5) {$\bullet$};
	\node at (7,7.5) {$\bullet$};
	\draw[line width=2pt] (0.8,7.7) arc (90:270:0.2);
	\draw[line width=2pt] (0.8,7.7) -- (5.2,7.7);
	\draw[line width=2pt] (0.8,7.3) -- (5.2,7.3);
	\draw[line width=2pt] (5.2,7.3) arc (90:270:-0.2);
	\node at (0.1,7.0) {$i=5$};
	\node at (1,7.0) {${(0,0,4)}$};\node at (2,7.0) {${(0,1,0)}$};\node at (3,7.0) {${(0,2,1)}$};
	\node at (4,7.0) {$(0,3,2)$};\node at (5,7.0) {${(0,4,3)}$};\node at (6,7.0) {${(0,0,4)}$};
	\node at (7,7.0) {$\bullet$};
	\draw (0.8,7.2) arc (90:270:0.2);
	\draw (0.8,7.2) -- (5.2,7.2);
	\draw (0.8,6.8) -- (5.2,6.8);
	\draw (5.2,6.8) arc (90:270:-0.2);
	\draw[line width=2pt] (5.8,7.2) arc (90:270:0.2);
	\draw[line width=2pt] (5.8,7.2) -- (6.2,7.2);
	\draw[line width=2pt] (5.8,6.8) -- (6.2,6.8);
	\node at (0.1,6.5) {$i=6$};
	\node at (1,6.5) {$\underline{(1,0,0)}$};\node at (2,6.5) {$\underline{(1,1,1)}$};\node at (3,6.5) {${(1,2,2)}$};
	\node at (4,6.5) {$(1,3,3)$};\node at (5,6.5) {${(1,4,4)}$};\node at (6,6.5) {$\underline{(1,0,0)}$};
	\node at (7,6.5) {$\underline{(1,1,1)}$};
	\draw (0.8,6.7) arc (90:270:0.2);
	\draw (0.8,6.7) -- (5.2,6.7);
	\draw (0.8,6.3) -- (5.2,6.3);
	\draw (5.2,6.3) arc (90:270:-0.2);
	\draw[line width=2pt] (5.8,6.7) arc (90:270:0.2);
	\draw[line width=2pt] (5.8,6.7) -- (7.2,6.7);
	\draw[line width=2pt] (5.8,6.3) -- (7.2,6.3);
\end{tikzpicture}
	\caption{\label{fig2}The labels $(a,d,g)$ for $n=25$, $m=\frac{13}2$ and $p=5$. (See Figure \ref{fig1}.)}
\end{figure}
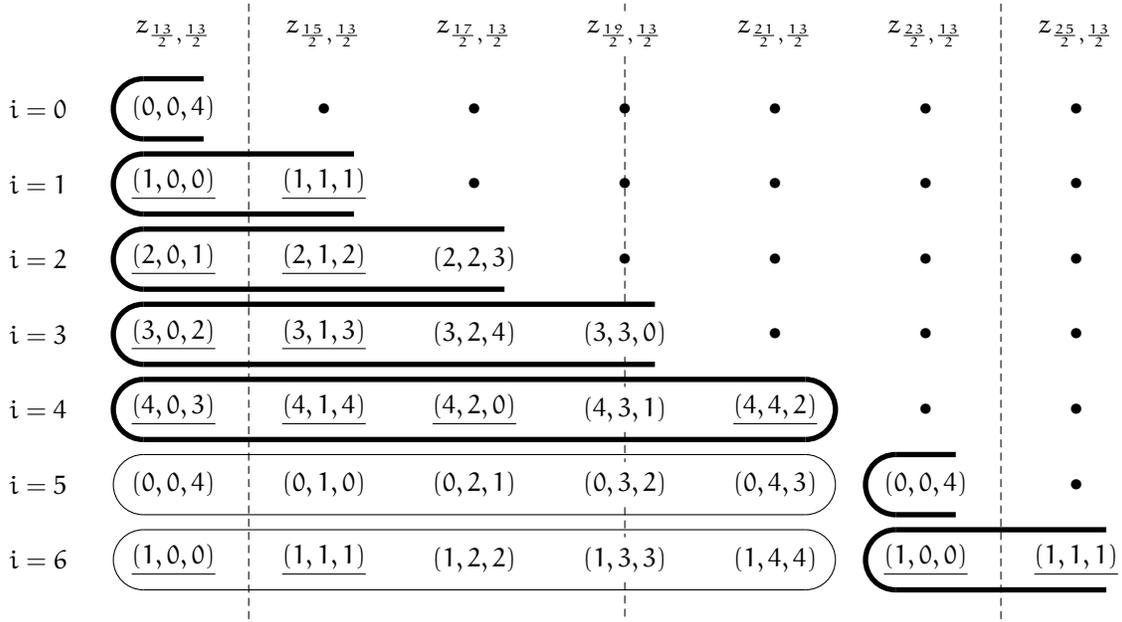

The definition \eqref{eq:adg} constrains the labels $(a,d,g)$ by
\begin{equation}
	\label{eq:constraint1}
	g-d-a\equiv 2m+1\modulo p.
\end{equation}
Moreover the condition $j+j'\equiv p-1\modulo p$ on bound pairs is equivalent to either one of
\begin{equation}
	\label{eq:constraint2}
	g\equiv a-d'\modulo p\qquad\textrm{and}\qquad g'\equiv a-d\modulo p
\end{equation}
on the labels $(a,d,g)$ and $(a,d',g')$.

\begin{lem}
	\label{thm:singu2}
	Let $p$ and $m$ be as before and suppose $(a,d,g)$ and $(a,d',g')$ (corresponding to $a_{i,j,m}$ and $a_{i,j',m}$ respectively) are bound for these $p$ and $m$. Then the three following statements are equivalent:
	\begin{enumerate}
		\item[(1)] $d, d'\le a$;
		\item[(2)] $a_{i,j,m}$ is singular at $q_c$;
		\item[(3)] $a_{i,j',m}$ is singular at $q_c$.
	\end{enumerate}
\end{lem}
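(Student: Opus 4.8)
The plan is to reduce everything to the criterion of Lemma~\ref{thm:singu1}(2), which states that for a non-critical $j$ the coefficient $a_{i,j,m}$ is singular at $q_c$ precisely when $g\le a$ and $d\le a$. Since a bound pair consists of two non-critical values $j,j'$ (as noted after the definition, critical values can never be bound), this criterion applies to both members: $a_{i,j,m}$ is singular iff $g\le a$ and $d\le a$, while $a_{i,j',m}$ is singular iff $g'\le a$ and $d'\le a$. The whole statement thereby becomes a purely combinatorial assertion about the labels $(a,d,g)$ and $(a,d',g')$, governed by the bound-pair relations \eqref{eq:constraint2}. I would prove $(1)\Leftrightarrow(2)$ first and then deduce $(1)\Leftrightarrow(3)$ from the symmetry of the binding relation.

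The key technical step is to upgrade the congruences \eqref{eq:constraint2} to exact equalities. Because all the labels lie in $\{0,1,\dots,p-1\}$, the relation $g\equiv a-d'\modulo p$ forces $g=a-d'$ when $d'\le a$ and $g=a-d'+p$ when $d'>a$; symmetrically, $g'\equiv a-d\modulo p$ gives $g'=a-d$ when $d\le a$ and $g'=a-d+p$ when $d>a$. This dichotomy is exactly what ties $g$ to $d'$ (and $g'$ to $d$), and it is the engine of the whole argument. Constraint \eqref{eq:constraint1} will not be needed.

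For $(1)\Rightarrow(2)$: assuming $d,d'\le a$ gives $g=a-d'\le a$, so that together with $d\le a$ the criterion of Lemma~\ref{thm:singu1}(2) yields singularity of $a_{i,j,m}$. For $(2)\Rightarrow(1)$: singularity supplies $g\le a$ and $d\le a$, so it remains only to force $d'\le a$; if instead $d'>a$, then $g=a-d'+p$ lies in $\{a+1,\dots,p-1\}$, contradicting $g\le a$. Hence $d'\le a$, and $(1)$ holds. This settles $(1)\Leftrightarrow(2)$.

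Finally, the binding relation is symmetric under interchanging the unprimed and primed triples (the pair of congruences in \eqref{eq:constraint2} is invariant under $(d,g)\leftrightarrow(d',g')$), and statement $(1)$ is itself symmetric in $d$ and $d'$. Running the argument of the previous paragraph with the roles of $(a,d,g)$ and $(a,d',g')$ exchanged therefore gives $(1)\Leftrightarrow(3)$, completing the chain $(2)\Leftrightarrow(1)\Leftrightarrow(3)$. I expect the only delicate point to be this congruence-to-equality conversion and the careful tracking of ranges in the case $d'>a$ (and its mirror $d>a$); the boundary values $d'=a$ and $g=0$ warrant a quick check but cause no difficulty.
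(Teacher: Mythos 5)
Your proposal is correct and follows essentially the same route as the paper's proof: both reduce singularity to the criterion of Lemma~\ref{thm:singu1}(2) (noting bound pairs are non-critical), upgrade the congruences \eqref{eq:constraint2} to exact equalities using the range $0\le d,d',g,g'\le p-1$ (e.g.\ $d'>a$ forcing $g=a-d'+p>a$), and exploit the symmetry under $(d,g)\leftrightarrow(d',g')$. The only difference is organizational --- you prove $(1)\Leftrightarrow(2)$ and then invoke symmetry for $(1)\Leftrightarrow(3)$, whereas the paper proves $(1)\Rightarrow(2),(3)$, then $(2)\Leftrightarrow(3)$, then $(2)\Rightarrow(1)$ --- which changes nothing of substance.
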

\begin{proof} Since $(a,d,g)$ and $(a,d',g')$ are bound, the corresponding $j$ and $j'$ are distinct and therefore so are $d$ and $d'$.

\noindent (1) $\Rightarrow$ (2) and (3): Because $d$ and $d'$ ($\le a$) are bound, \eqref{eq:constraint2} holds. It follows from $0\le a, g\le p-1$ that $1-p\le a-g\le p-1$. If $a-g<0$, then, by \eqref{eq:constraint2}, the label $d'$ must be equal to $a-g+p$. But then $g=(a-d')+p\ge p$, a contradiction. Therefore $a-g\ge 0$. The argument is symmetric under the exchange $(a,d,g)\leftrightarrow(a,d',g')$ and therefore $a\ge g'$ as well and (1) implies (2) and (3) by the second statement of Lemma \ref{thm:singu1}.

\noindent (2) $\Leftrightarrow$ (3): Indeed, if $g\le a$ and $d\le a$, equation \eqref{eq:constraint2} gives
$$d'\equiv a-g\modulo p\qquad \textrm{and}\qquad g'\equiv a-d\modulo p$$
and therefore
$$d'=a-g\le a\qquad \textrm{and}\qquad g'=a-d'\le a$$
which imply (3), again by the second statement of Lemma \ref{thm:singu1}. Again the symmetry under $(a,d,g)\allowbreak \leftrightarrow(a,d',g')$ gives (3) $\Rightarrow$ (2).

\noindent (2) $\Rightarrow$ (1): Since (2) $\Rightarrow$ (3), the inequality $d,d'\le a$ are both automatically satisfied.
\end{proof}

The new condition (1) of the above lemma gives a useful criterion readable immediately from diagrams similar to that of Figures \ref{fig1} and \ref{fig2}. By the definition of a cycle, the label $d$ grows from $0$ to $p-1$ (if it is a complete cycle) starting from the left. The requirement that the pair $j$ and $j'$ within a cycle be bound has been described earlier. On the line $i=r\cdot p+a$, the rightmost cycle has precisely $a+1$ normal elements, the following $(p-a-1)$ ones being either spurious or not allowed. The new criterion that $d$ and $d'$ be smaller or equal to $a$ thus requires that they be among the (normal) elements of the rightmost cycle. Since, on a given line $i$, the labels $(a,d,g)$ are periodic of period $p$, the singularities of coefficients of all cycles of this line can be read from the rightmost one. 

Figures \ref{fig1} and \ref{fig2} provide examples. The labels of the singular $a_{i,j,m}$'s are underlined. All others are regular. On the top line of Figure \ref{fig1}, the label $(0,0,3)$ corresponding to $a_{0,9,9}$ is not bound as it belongs to the rightmost cycle and is alone in it. On the last line $i=6$, the coefficients $a_{6,9,9}$ and $a_{6,10,9}$ with labels $(2,0,1)$ and $(2,1,2)$ are singular because their labels belong to the rightmost cycle. However, on the same line, the label $(2,3,0)$ does not appear in the rightmost cycle ($d'=3 \not\le a=2$) and the pair $a_{6,11,9}$ and $a_{6,12,9}$ are regular at $q_c$.
%%%%%%%%%%%%%%%%%%%%
\subsection{The idempotents}
%%%%%%%%%%%%%%%%%%%%
A simple consequence of Lemmas \ref{thm:singu1} or \ref{thm:singu2} is that the limit of $z_{j,m}$ when $q\rightarrow q_c$ might not exist whenever $j$ is not critical and forms a bound pair with some partner $j'$. The search for new idempotents is based on the last lemma, the diagrammatic criterion discussed above and the evaluation principle.

Fix $n, m$ and $p$ with their usual meaning. Suppose, for the time being, that the number of lines $(n/2-m+1)$ of the diagram is at least $p$ and consider the rightmost cycles on the top $p$ lines of the diagram. (Figures \ref{fig1} and \ref{fig2} provide good examples of the argument that follows.) These cycles appear to the left of the diagram and again the coefficients $a_{i,j,m}$ of the projector $z_{j,m}$ form one column of the diagram. Since all non-critical $j$'s in these cycles are paired in a bound pair in at least one of the $p$ lines, the projector $z_{j,m}$ with $j\in\{j_0=m,j_1=m+1, \dots, j_{p-1}=m+p-1\}$ is either critical (and its $a_{i,j,m}$'s are all regular at $q_c$) or has at least one singular coefficient $a_{i,j,m}$ in these $p$ first rightmost cycles and this singular coefficient is paired with another singular coefficient $a_{i,j',m}$ where $j$ and $j'$ form a bound pair. (Note that if $p>n$, a projector $z_{j,m}$ with $j$ non-critical may remain regular at $q_c$ and is therefore automatically an idempotent.) Since the goal is to build well-defined projectors out of those with singular coefficients (``well-defined'' meaning with regular coefficients at $q_c$), the only hope is that the sum $z_{j,m}+z_{j',m}$, with $j$ and $j'$ bound, has regular coefficients in the top $p$ lines. Indeed, note that, if one starts reasoning with the leftmost singular idempotent labeled by $j$, one sees that its first singularity, that is the one with the smallest $i$, can be canceled only by the singularity appearing in the idempotent labeled by $j'$, its bound partner. (The possibility that all or any of the $S_i$s attached to singular $a_{i,j,m}$s be zero will be ruled out in the proof of next theorem.) The argument can then be repeated for the second leftmost singular idempotent, and so on. This possible cancellation does occur as the next theorem proves. It does not only for one of the singular coefficients of $z_{j,m}$ and $z_{j',m}$ but actually for all their singularities. Finally note that the idempotents that fall in the rightmost cycle might be regular at $q_c$ despite being non-critical. This occurs when their label $j$ fails to have a bound partner $j'$. An example occurs in Figure \ref{fig1} with $z_{15,9}$ (the last column).

\begin{thm}[Idempotents at $q$ a root of unity]\label{thm:idemRootOfUnity}
	Let $n,m$ and $p$ be as before.
	\begin{enumerate}
		\item[(1)] If $j$ is critical, then $z_{j,m}$ is an idempotent at $q_c$.
		\item[(2)] If $j$ is non-critical, falls in an incomplete rightmost cycle and does not have a bound partner $j'$, then $z_{j,m}$ is an idempotent at $q_c$.
		\item[(3)] If $j$ and $j'$ form a bound pair, the limit $z_{(j,j'),m}=\lim_{q\rightarrow q_c} (z_{j,m}+z_{j',m})$ is an idempotent at $q_c$.
		\item[(4)] If $j$ and $j'$ form a bound pair, the limit $n_{(j,j'),m}=\lim_{q\rightarrow q_c}[p]z_{j,m}$ is a nilpotent endormorphism acting non-trivially only on the submodule $z_{(j,j'),m}W_m$.
		\item[(5)] The idempotents described in (1)--(3) are orthogonal and primitive, and form a partition of unity in $\mathrm{End}_{\tl}W_m$.
	\end{enumerate}
\end{thm}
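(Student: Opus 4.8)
The plan is to obtain every assertion by passing the generic-$q$ relations of Theorem \ref{thm:idqgene} to the limit $q\to q_c$, exploiting two facts: the operators $S_i$ have well-defined limits at $q_c$ (the divided powers do), and they remain linearly independent there (e.g.\ from the structure constants $C^m_{k,l,i}$), so $\dim\mathrm{End}_{\tl}W_m=n/2-m+1$ at $q_c$ as well; since operator multiplication is continuous, any identity valid for all generic $q$ whose two sides stay regular at $q_c$ survives in the limit. For (1) and (2), Lemma \ref{thm:singu1} (together with the bound-pair criterion of Lemma \ref{thm:singu2} and the rightmost-cycle discussion) shows that every coefficient $a_{i,j,m}$ of $z_{j,m}$ is already regular: for critical $j$ the pole in \eqref{eq:facteur0} is always killed by the zero of $[2j+1]$, and for a non-critical $j$ with no bound partner no line $i$ satisfies both $d\le a$ and $g\le a$. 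Hence $z_{j,m}$ has a limit, and pushing $z_{j,m}^2=z_{j,m}$ through it proves the limit is idempotent.

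The heart of the argument is (3). Each of $z_{j,m}$ and $z_{j',m}$ is singular, but by Lemma \ref{thm:singu1} all their singularities are simple poles, and by Lemma \ref{thm:singu2} the coefficient $a_{i,j,m}$ is singular exactly when its bound partner $a_{i,j',m}$ is. Writing $z_{j,m}+z_{j',m}=\sum_i(a_{i,j,m}+a_{i,j',m})\,S_i$, the existence of $z_{(j,j'),m}$ amounts to a line-by-line cancellation of the residues of the two bound coefficients. This residue cancellation, to be computed from the $q$-Lucas factorisation of the $q$-binomials in $a_{i,j,m}$ together with the congruences \eqref{eq:constraint2} characterising bound pairs, is the one genuinely computational step and the main obstacle. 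Granting it, $z_{(j,j'),m}$ is regular, and passing the generic identity $(z_{j,m}+z_{j',m})^2=z_{j,m}+z_{j',m}$ (valid by orthogonality and idempotence) to the limit yields idempotence.

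For (4), I would multiply $z_{j,m}^2=z_{j,m}$ by $[p]^2$ to get $([p]z_{j,m})^2=[p]\cdot([p]z_{j,m})$; since $[p]\to0$ while $[p]z_{j,m}$ has the finite limit $n_{(j,j'),m}$, the limit gives $n_{(j,j'),m}^2=0$. Multiplying the generic relations $z_{j,m}(z_{j,m}+z_{j',m})=z_{j,m}=(z_{j,m}+z_{j',m})z_{j,m}$ by $[p]$ and taking limits gives $n_{(j,j'),m}=n_{(j,j'),m}\,z_{(j,j'),m}=z_{(j,j'),m}\,n_{(j,j'),m}$, so $n_{(j,j'),m}$ annihilates $\ker z_{(j,j'),m}$ and has image in $z_{(j,j'),m}W_m$; it is nonzero because $z_{j,m}$ has a genuine simple pole and the $S_i$ stay independent at $q_c$, so the residue operator does not vanish.

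Finally (5). Orthogonality and $\sum=\mathds{1}_{W_m}$ follow by regrouping the generic partition $\sum_j z_{j,m}=\mathds{1}_{W_m}$ according to whether $j$ is critical, unbound non-critical, or in a bound pair, each group having a regular limit, together with the limits of $z_{j,m}z_{j',m}=\delta_{jj'}z_{j,m}$. For primitivity I would use the Peirce decomposition: the new idempotents $P_\alpha$ are orthogonal and sum to $\mathds{1}_{W_m}$, so $\mathrm{End}_{\tl}W_m=\bigoplus_\alpha P_\alpha\mathrm{End}_{\tl}W_m$ with each summand a commutative algebra of unit $P_\alpha$. A critical or unbound block has dimension at least $1$, and a bound-pair block dimension at least $2$ (it contains the independent pair $z_{(j,j'),m},\,n_{(j,j'),m}$). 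Writing $c,u,b$ for the numbers of critical, unbound, and bound-pair labels, the generic count gives $\dim\mathrm{End}_{\tl}W_m=n/2-m+1=c+u+2b$, exactly the sum of these lower bounds; hence every block attains its bound and is either $\C P_\alpha$ or $\C P_\alpha\oplus\C\,n_{(j,j'),m}$ with $n_{(j,j'),m}^2=0$, a local ring in both cases. Therefore each $P_\alpha$ is primitive (in particular nonzero), completing the proof.
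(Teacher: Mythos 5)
Your overall architecture matches the paper's: idempotence, orthogonality and the partition of unity are inherited by passing the generic identities of Theorem \ref{thm:idqgene} to the limit; regularity of the coefficients via Lemmas \ref{thm:singu1} and \ref{thm:singu2} handles (1) and (2); the $[p]$-rescaling gives nilpotence in (4); and your Peirce/local-ring packaging of primitivity (each bound-pair block is $\C z_{(j,j'),m}\oplus\C\, n_{(j,j'),m}\cong\C[x]/(x^2)$, a local ring) is an equivalent repackaging of the paper's direct computation that $\alpha=\alpha^2$ and $2\alpha\beta=\beta$ have no solution with $\beta\neq0$. But there are two genuine gaps. First, the residue cancellation in (3), which you explicitly ``grant,'' \emph{is} the proof of (3) and occupies the bulk of the paper's argument: from \eqref{eq:constraint2} and Lemma \ref{thm:LemFondamental} one derives $g=a-d'$, $g'=a-d$, $u'=u$, $w'=w$, factors $a_{i,j,m}+a_{i,j',m}$ into a common prefactor whose denominator $[p+g']\dotsm[p+g-a]$ carries the single factor $[p]$, and then checks, using $[p+x]=q^p[x]$ and $[-x]=-[x]$ together with $g'-g=d'-d$, that the remaining bracket vanishes identically. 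A sketch that defers exactly this step leaves statement (3), and everything downstream of it, unproved.

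Second, your appeal to linear independence of $\{S_0,\dotsc,S_{n/2-m}\}$ at $q_c$ (``e.g.\ from the structure constants'') is not a proof, and it is load-bearing: you use it both to conclude $n_{(j,j'),m}\neq0$ in (4) and to fix $\dim\mathrm{End}_{\tl}W_m=n/2-m+1$ at $q_c$ for the count in (5). The structure constants $C^m_{k,l,i}$ only express closure of a spanning set under multiplication; they say nothing about independence, and if the $S_i$ degenerated at $q_c$ the residues of the singular coefficients could conspire to give $n_{(j,j'),m}=0$. The paper explicitly flags that the basis property at a root of unity ``has not been proved'' and circumvents it with a matrix-element computation: taking $\ket v=\ket{-\cdots-+\cdots+}\in W_m$ and $\ket w$ with its last $I$ spins down, where $I$ is the largest index with $a_{I,j,m}$ singular, one finds $\bra w S_i\ket v=0$ for $i<I$ while $\bra w S_I\ket v=q^{-I(n/2-m-I)}\neq0$, so $\bra w z_{j,m}\ket v$ genuinely has a simple pole and the residue endomorphism is nonzero. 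The paper then gets $\dim\mathrm{End}_{\tl}W_m\le n/2-m+1$ from spanning alone (Martin's root-of-unity Schur--Weyl duality) and obtains the basis property of the $S_i$ as an \emph{output} of the count, not an input. Your argument can be repaired by supplying this triangularity computation, but as written the nonvanishing in (4) and the exactness of your dimension count in (5) rest on an unsupported assertion.
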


\begin{proof} 
Before constructing the idempotents, we note that all those described in statements (1)--(3) arise as limits of linear combinations of idempotents for generic $q$ (in cases (1) and (2), the limit is trivial). Thus, if the limits exist, the limiting objects inherit the properties of being idempotent and orthogonal from the generic case. For example, if $\lim_{q\rightarrow q_c} (z_{j,m}+z_{j',m})$ exists, then
$$(z_{(j,j'),m})^2=\big(\lim_{q\rightarrow q_c} (z_{j,m}+z_{j',m})\big)^2=\lim_{q\rightarrow q_c} (z_{j,m}+z_{j',m})^2=\lim_{q\rightarrow q_c} (z_{j,m}+z_{j',m})=z_{(j,j'),m}$$
where the third equality follows from Theorem \ref{thm:idqgene}. Orthogonality is obtained similarly. Finally, since all the $z_{j,m}$'s of the generic case appear either alone in cases (1) and (2) or in a bound pair in (3), the sum of their limits will have the same trace as that of the generic case, that is $\dim W_m$, and they will form a partition of unity. Clearly the two claims in statement (5) of orthogonality and that they form a partition of unity follow from the fact that the objects described in (1)--(3) are non-singular. The proof that they are will be the first step. The primitivity of the idempotents will require statement (4) whose proof will then be next. Statement (5) will appear as a consequence of (1)--(4).

Since, by Lemma \ref{thm:singu2}, the coefficients of $z_{j,m}$ are regular at $q_c$ for $j$ satisfying either (1) or (2), then it is well-defined and an idempotent by Theorem \ref{thm:idqgene}.

For the case (3), consider a bound pair $(j,j')$. One of the coefficients $a_{i,j,m}$ in the sum $z_{j,m}=\sum_i a_{i,j,m}S_i$ has then a simple pole at $q_c$. Does this imply that one of the matrix elements of $z_{j,m}$ also has such a pole? Even though the set $\{S_0,S_1, \dots, S_{n/2-m}\}$ is known to span $\mathrm{End}_{\tl}W_m$ for all $q$, it has not been proved to be a basis when $q$ is a root of unity (at least to our knowledge). A different argument is therefore needed to show that $z_{j,m}$ is indeed singular at $q_c$. Recall that $S_i=(S^-)^i(S^+)^i/[i]!^2$ and consider the linear combination of elements of the spin basis of $W_m$ obtained by acting with $S_i$ on $\ket{v}=\ket{--\dots -++\dots +}\in W_m$, the element starting with $(n/2-m)$ signs ``$-$''. Any coefficient in this linear combination is either $0$ or a polynomial in $q^{\pm \frac12}$ as can be seen from \eqref{eq:sp}. And such a coefficient is non-zero only if the vector it multiplies has all its ``$-$'' signs at the leftmost positions like $\ket{v}$, except for at most $i$ signs. Since $j$ and $j'$ form a bound pair, there exists $I\leq n/2-m$ such that $a_{I,j,m}$ is singular at $q_c$ and none of the $a_{i,j,m}$ with $i>I$ is. Let us study the matrix element $\bra{w}z_{j,m}\ket{v}$ where $\ket v$ is as above and
\begin{equation*}
\ket w=\ket{\ \underbrace{- - \dots -}_{n/2-m-I}\ \underbrace{+ + \dots +}_{n/2+m}\ \underbrace{- - \dots -}_{I}\ }.
\end{equation*}
The contribution to this matrix element of $a_{i,j,m}\bra w S_i \ket v$ is regular at $q_c$ for $i>I$ by definition of $I$. Moreover $\bra w S_i \ket v=0$ for all $i<I$ since, in that case, $S_i$ cannot change the positions of that many ``$-$'' signs to go from $\ket v$ to $\ket w$. For $\bra w S_I \ket v$, equation \eqref{eq:sp} leads to
\begin{align*}
\bra w S_I \ket v
  & = \bra w \frac{(S^-)^I}{[I]!}\frac{(S^+)^I}{[I]!} \ket v \\
  & = \bra w 
     \sum_{j_1<j_2<\dots<j_I} S_{j_1}^-S_{j_2}^-\dots S_{j_I}^-
     \sum_{i_1<i_2<\dots<i_I} S_{i_1}^+S_{i_2}^+\dots S_{i_I}^+ \ket v,
\end{align*}
of which sum only one term survives, that with $i_\ell=n/2-m-I+\ell$ and $j_\ell=n-I+\ell$. Therefore $\bra w S_I \ket v$ is a power of $q^{\pm\frac12}$ and is non-zero at any $q\in\mathbb C^\times$. (One computes easily that $\bra w S_I \ket v=q^{-I(n/2-m-I)}$.) The only non-zero term in the sum $\bra w z_{j,m} \ket v=\bra w\sum_i a_{i,j,m}S_i\ket v$ occurs for $i=I$ and $\bra w z_{j,m} \ket v$ is singular at $q_c$ because $a_{I,j,m}$ is.

Statement (3) will then be proved if the limit $\lim_{q\rightarrow q_c}(a_{i,j,m}+a_{i,j',m})$ exists for all $i$, $0\le i\le \frac n2-m+1$. Since $j$ and $j'$ are bound, there exists some $i$'s in this range such that $a_{i,j,m}$ is singular at $q_c$. By Lemma \ref{thm:singu2} this occurs if and only if $a_{i,j',m}$ is also singular. For such an $i$, Lemma \ref{thm:LemFondamental} has established that the poles of $a_{i,j,m}$ and $a_{i,j',m}$ at $q_c$ are simple. The goal will therefore be to show that the sum of the two singular coefficients behaves close to $q_c$ as
\begin{equation}
	\label{eq:desiredForm}
	a_{i,j,m}+a_{i,j',m}\sim f(q)/[p]
\end{equation}
and that $f(q)\rightarrow 0$ as $q\rightarrow q_c$. (Recall that all coefficients are products of $q$-numbers and therefore, if $f(q)$ vanishes at $q_c$, it must have a zero of integer degree at this point.) The computation is straightforward, though somewhat messy. Some observations are needed before proceeding.

Suppose that $a_{i,j,m}$ and $a_{i,j',m}$ are singular with $j$ and $j'$ a bound pair. Without loss of generality we assume $j<j'$. Let
\begin{alignat}{4}
	\label{eq:adgprime}
	i       		& = r\cdot p+a,   	& \qquad   i'=i 			& = r\cdot p+a, 		\notag\\
	j-m   	& = u\cdot p+d, 	& \qquad   j'-m      		& = u' \cdot p+d', 		\\
	i+j+m+1 	& = w\cdot p+g, 	& \qquad\qquad  i+j'+m+1   & = w'\cdot p+g'	\notag
\end{alignat}
be their labels with the usual assumption $0\le a, d, d', g, g'\le p-1$. Since every cycle starts with a $j_0-m\equiv 0\modulo p$, the requirement $j'>j$ implies $d'>d$ and $u'=u$. Then equation \eqref{eq:constraint2} gives
$$g\equiv a-d'\modulo p\qquad \textrm{and}\qquad g'\equiv a-d\modulo p$$
and, because $d,d'\le a$ by Lemma \ref{thm:singu2},
\begin{equation}
	\label{eq:lesDeuxG}
	g=a-d'\qquad\textrm{and}\qquad g'=a-d.
\end{equation}
Therefore $j'>j$ implies $g'-g=d'-d>0$. Moreover $p-1\ge j'-j=(w'-w)\cdot p+(g'-g)>0$ forces $w'=w$. Finally equation \eqref{eq:lesDeuxG} together with the fact that $d,d'\le a$ implies that $g,g'\le a$. Since clearly $i+j+m+1\ge i$, then $w$ and $w'$ must be strictly larger that $r$.

We now simplify the various factors of
$$a_{i,j,m}=(-1)^{i+j-m}\qbin{i}{j-m}\qbin{i+j+m}{i}^{-1}\frac{[2j+1]}{[i+j+m+1]}$$
using Lemma \ref{thm:LemFondamental}. The first $q$-binomial is
\begin{equation*}
	\qbin{i}{j-m}=\qbin{r\cdot p+a}{u\cdot p+d}=q^{e_1}\bin{r}{u}\qbin{a}{d}
\end{equation*}
where $e_1=(a-d)up+(r-u)(up+d)p$, and similarly for the triplet $(i,j',m)$. Note that, for the latter, only $d$ is changed for its primed partner $d'$. The $q$-binomial stemming from this term and its primed partner are
\begin{align*}
	\qbin{a}{d}	& = \frac{[a]!}{[d]!\,[a-d]!} = \frac{[a]!}{[d']!\,[a-d]!}\times\bigl([d']\dots [d+1]\bigr)		\\[1mm]
	\qbin{a}{d'}	& = \frac{[a]!}{[d']!\,[a-d']!} = \frac{[a]!}{[d']!\,[a-d]!}\times\bigl([a-d]\dots [a-d'+1]\bigr).
\end{align*}
From now on, the dots in an expression of the form $[x]\dots [y]$ mean $\prod_{z=y}^x[z]$ and assume $x\ge y$.

The next pair of factors, $[2j+1]$ and $[2j'+1]$, is the simplest as
$$[2j'+1]=-[2j+1]$$
since $j'+j\equiv p-1\modulo p$. The final factor is
$$\qbin{i+j+m}{i}^{-1}\frac1{[i+j+m+1]}=\qbin{(w-1)\cdot p+(p+g-1)}{r\cdot p+a}^{-1}\frac1{[(w-1)\cdot p+p+g]}.$$
The peculiar writing of $i+j+m$ as $(w-1)\cdot p+(p+g-1)$ was chosen because $g<a$ but $(p+g-1)\ge a$. Since $w$ is strictly larger than $r$ (see discussion following equation \eqref{eq:lesDeuxG}) and at least $1$ when $a_{i,j,m}$ is singular, the binomial $\left(\begin{smallmatrix}w-1\\ r\end{smallmatrix}\right)$ is always non-zero. Lemma \ref{thm:LemFondamental} then gives that this term behaves as
$$q^{-e_2}\begin{pmatrix}w-1 \\ r\end{pmatrix}^{-1}\frac{[a]!\,[p+g-a-1]!}{[p+g]!}=q^{-e_2}\begin{pmatrix}w-1 \\ r\end{pmatrix}^{-1}\frac{[a]!}{[p+g]\dots [p+g-a]}\qquad\textrm{for $q$ close to $q_c$}
$$
and with $e_2=(p+g-1-a)rp+(w-r-1)(rp+a)p-(w-1)p$. Because 
\begin{equation}
	\label{eq:beaucoupInegal}
	p+g'>p+g\ge p\ge p+g'-a>p+g-a, 
\end{equation}
the $q$-numbers in the above expression and those in its primed version can be written as
\begin{align*}
	\frac{[a]!}{[p+g]\dots [p+g-a]}	& = \frac{[a]!}{[p+g']\dots [p+g-a]}\times\bigl([p+g']\dots [p+g+1]\bigr)		\\[2mm]
	\frac{[a]!}{[p+g']\dots [p+g'-a]}	& = \frac{[a]!}{[p+g']\dots [p+g-a]}\times\bigl([p+g'-a-1]\dots [p+g-a]\bigr)
\end{align*}
Due to the inequalities \eqref{eq:beaucoupInegal}, the common denominator $[p+g']\dots [p+g-a]$ contains $(g'-g)+a+1\ge3$ terms and one of them is $[p]$. It thus contains the only singular term of both $a_{i,j,m}$ and $a_{i,j',m}$. The sum of these two coefficients can therefore be factorized as 
\begin{equation}
	\label{eq:petita}
	a_{i,j,m}+a_{i,j',m}\sim (-1)^{i+j-m}q^{e_1-e_2}\begin{pmatrix}r\\ u\end{pmatrix} 
\begin{pmatrix} w-1\\ r\end{pmatrix}^{-1}\times \frac{[a]!}{[d']!\,[a-d]!}\cdot [2j+1]\cdot \frac{[a]!}{[p+g']\dots
[p+g-a]}\times \big( \dots \big)
\end{equation}
where
\begin{align}
	\label{eq:petitb}
	\big( \dots \big) = [d']	& \dots [d+1]\times [p+g']\dots [p+g+1]										\notag\\
						& - (-1)^{g'-g}q^{p((r-2u)(d'-d)-r(g'-g))}[a-d]\dots [a-d'+1]\times [p+g'-a-1]\dots [p+g-a].
\end{align}
All the factors in front of $\big( \dots \big)$ in equation \eqref{eq:petita} exist, that is are finite, except for the singular $[p]$ in the last denominator. This equation \eqref{eq:petita} is therefore of the desired form \eqref{eq:desiredForm}. Using
\begin{equation*}
	[p+x]=q^p[x]\qquad\textrm{and}\qquad [-x]=-[x],
\end{equation*}
equation \eqref{eq:lesDeuxG} and $g'-g=d'-d$, we can write $\big( \dots\big )$ as
\begin{equation*}
	[d']\dots [d+1]\times q^{p(g'-g)}[g']\dots [g+1]-(-1)^{g'-g}[g']\dots [g+1]\times (-1)^{d'-d}q^{p(d'-d)}[d']\dots [d+1]
\end{equation*}
which is clearly zero. Therefore $\lim_{q\to q_c}a_{i,j,m}+a_{i,j',m}$ exists for all $i$. Statement (3) follows.

We turn now to statement (4). Let $(j,j')$ be a bound pair with $j<j'$ and let $n_{(j,j'),m}$ be defined as
\begin{equation}\label{eq:fjjpm}n_{(j,j'),m}=\lim_{q\rightarrow q_c}[p]z_{j,m}.\end{equation}
This $n_{(j,j'),m}$ is a non-zero element of $\mathrm{End}_{\tl}W_m$. To see this, recall that $z_{j,m}=\sum_i a_{i,j,m}S_i$. The endomorphisms $S_i$ have polynomial matrix elements in the variables $q$ and $q^{-1}$ (or $q^{\frac12}$ and $q^{-\frac12}$) and they are therefore regular at $q=q_c$. The limit $\lim_{q\rightarrow q_c} [p]a_{i,j,m}$ always exists as the $a_{i,j,m}$'s are either regular at $q=q_c$ (and then the limit is zero) or have a simple pole (and then the limit is the non-zero residue). The proof of statement (3) has established that $z_{j,m}$ and $z_{j',m}$ have at least one singular matrix element when $j$ and $j'$ form a bound pair and the above limit is thus a non-zero endomorphism in $\text{End}_{\tl}W_m$.

The endomorphism $n_{(j,j'),m}$ is nilpotent:
$$n_{(j,j'),m}^2=\big(\lim_{q\rightarrow q_c}[p]z_{j,m}\big)^2=
\lim_{q\rightarrow q_c}([p])^2z_{j,m}z_{j,m}=\lim_{q\rightarrow q_c}([p])^2z_{j,m}=0$$
because $z_{j,m}$ is an idempotent in a neighborhood of $q_c$ and, again, the pole in $z_{j,m}$ is simple. The endomorphism $n_{(j,j'),m}$ acts as zero on all subspaces $z_{k,m}W_m$ for the $k$'s of cases (1) and (2) and $z_{(J,J'),m}W_m$ for $(J,J')$ a bound pair distinct from $(j,j')$. For example
$$n_{(j,j'),m}z_{(J,J'),m}=\big(\lim_{q\rightarrow q_c}[p]z_{j,m}\big)\big(\lim_{q\rightarrow q_c}z_{J,m}+z_{J',m}\big)=\lim_{q\rightarrow q_c}[p]z_{j,m}(z_{J,m}+z_{J',m})=\lim_{q\rightarrow q_c}[p]\cdot 0=0$$
by the orthogonality of the idempotents for generic $q$. Since it is non-zero, $n_{(j,j'),m}$ acts non-trivially only on $z_{(j,j'),m}W_m$.

The idempotents (1)--(3) are orthogonal and, if $(j,j')$ is bound, then $z_{(j,j'),m}$ and $n_{(j,j'),m}$ are linearly independent in $\mathrm{End}_{\tl} W_m$, since one is idempotent and the other nilpotent. Therefore statements (1)--(4) provide $(n/2-m+1)$ linearly independent endomorphisms on $W_m$. Since the $\{S_0, S_1, \dots, S_{n/2-m}\}$ spans $\mathrm{End}_{\tl} W_m$ by the Schur-Weyl duality at a root of unity, then $\dim \mathrm{End}_{\tl} W_m\leq n/2-m+1$. The idempotents and nilpotents of (1)--(4) form then a basis of $\mathrm{End}_{\tl} W_m$, and so does the set $\{S_0, S_1, \dots, S_{n/2-m}\}$.

Can $n_{(j,j'),m}$ be used to decompose $z_{(j,j'),m}$ as a sum of non-zero orthogonal idempotents? If this is possible, one of them is of the form $\alpha z_{(j,j'),m}+\beta n_{(j,j'),m}$ with $\beta\neq 0$. But then $\big(\alpha z_{(j,j'),m}+\beta n_{(j,j'),m}\big)^2=\alpha^2 z_{(j,j'),m}+2\alpha\beta n_{(j,j'),m}$ and the requirement that it be idempotent forces $\alpha=\alpha^2$ and $2\alpha\beta=\beta$ which do not have a solution with $\beta$ non-zero. The idempotent $z_{(j,j'),m}$ is therefore primitive and statement (5) follows.
\end{proof}

It is an interesting exercise to count how many idempotents Theorem \ref{thm:idemRootOfUnity} has identified. Since the statement of the theorem is for a fixed $m$, the exercise amounts to identifying the number of values of $m$ for which a given idempotent exists. The critical $j$ (case (1)) provides the simplest example. If $j$ is critical for the root $q_c$ under study, the projector $z_{j,m}$ computed for the generic case remains regular. There are therefore submodules $z_{j,m}W_m$ for all $m\le j$, that is $(2j+1)$ in total. Recall that such a subspace $z_{j,m}W_m$ first appears for $m=j$. The central element $F_n$ can take at most two distinct values if $j$ is critical and $F_n$ acting on the principal or standard modules takes these values only if the module has a critical $j$ as index. Recalling that $W_j$ contains a submodule isomorphic $\mathcal V_j$ and using a recursive argument to rule out other $\mathcal V_{j'}$ with critical $j'$ if necessary, one concludes that the module $z_{j,j}W_j$ is isomorphic to $\mathcal V_j$ and is irreducible. Since the $\uq$-modules with the corresponding value for the Casimir given by Corollary \ref{coro:fn} are the $\mathcal M_j$ which are cyclic for the extended algebra (they are generated by the highest weight vector), then all subspaces $z_{j,m}W_m$ for $-j\le m\le j$ are isomorphic to $\mathcal V_j$. Note that the multiplicity of the standard $\mathcal V_j\cong \mathcal P_j$ just found coincides with that in equation \eqref{eq:GY} (second sum).

The multiplicities of the submodules upon which the idempotents of cases (2) and (3) project need to be computed simultaneously. If $(j',j)$ is a bound pair with $j>j'$, the submodule $z_{(j',j),m}W_m$ will be denoted by $p_j^m$. All modules $p_j^m$ with the same $j$ have the same dimension, again because of the argument on the continuity of the trace as a function of $q$ introduced at the beginning of Theorem \ref{thm:idemRootOfUnity}'s proof. We shall drop the index $m$ on $p_j^m$ as we are interested in the total number of modules. (We need to stress that the following argument does not prove that $p_j^m\cong p_j^{m'}$.) Case (2) may occur only for the last element $j_l$ of a non-critical orbit $orb_j$. All other elements $j_i\in orb_j$ will bound to either $j_{i-1}$ or $j_{i+1}$ depending on $m$. Indeed, due to Lemma \ref{thm:singu2}, a projector $z_{j,m}$ will always have at least one singular coefficient in its expansion \eqref{eq:idem2} if $j$ is not the last element of its orbit. If again the elements of the orbit $orb_j$ are labeled by $j_1<j_2<\dots <j_l$, then the submodule $p_j$ with the smallest $j$ is $p_2$ as we chose to keep the largest element of the bound pair $(j',j)$ as label. A submodule $p_{j_2}$ will appear for each $m\le j_1$ and the number $\#p_{j_2}$ of such modules will be $(2j_1+1)$. If there is a third element $j_3$ in $orb_j$, the number $\#p_{j,3}$ is computed as follows. Again a bound pair $(j_2,j_3)$ may occur only if $m\le j_2$. However $j_2$ may be bound to either $j_1$ or $j_3$ and therefore $\#p_{j_3}=(2j_2+1)-\#p_{j_2}$. The same argument can be repeated to give $\#p_{j_{i+1}}=(2j_i+1)-\#p_{j_i}$ for all $i=1,2,\dots,l-1$. Equation \eqref{eq:orbit} may be used to express the $j_i$ in terms of $j_1$ or $j_l$ (for example $j_{2k+1}=kp+j_1$ and $j_{2k}=kp-j_1-1$) and the solution of the recursion is found to be $\#p_{j_i}=(i-1)(ip-2j_i-1)$. These are therefore the multiplicities associated with the idempotents of case (3). Those of case (2) occur only when the last element $j_l$ of the orbit is not bound to the previous element $j_{l-1}$ and the multiplicity is $((2j_l+1)-\#p_{j_l})$. The dimension of the submodules $z_{j_l,m}W_m$ is that of the $\mathcal V_{j_l}$, that is $\Gamma^{(n)}_{j_l}$ and we will denote these modules by $v_{j_l}$. Their multiplicity is therefore $\#v_{j_l}=l(2j_l+1-(l-1)p)$. Again we note that the multiplicities $\#p_{j_i}$ and the dimension $\dim p_{j_i}=\Gamma^{(n)}_{j_i}$ coincide with those of the first sum in \eqref{eq:GY} and those for the $v_j$ with those of the third. Theorem \ref{thm:idemRootOfUnity} thus reproduces the multiplicity of \eqref{eq:GY}.

When all the $p_j$'s and the $v_j$'s have distinct dimensions (which is common), the above argument also proves that $p_j^m\cong \mathcal P_j$ for all $m$ and $v_j\cong \mathcal V_j$. Since the idempotents were our goal, we do not provide finer arguments that would resolve cases with coincidences among the dimensions.

%%%%%%%%%%%%%%%%%%%%%%%%%%%%%%%%%%%
\section{Concluding remarks}
%%%%%%%%%%%%%%%%%%%%%%%%%%%%%%%%%%%

The explicit expressions (\ref{eq:coefExpr}--\ref{eq:idem2}) for the idempotents $z_{j,m}$ for generic $q$ and the linear combinations (1--3) of Theorem \ref{thm:idemRootOfUnity} that survive at $q$ a root of unity are the main result of this paper. The rules established in Lemma \ref{thm:singu2} and in Theorem \ref{thm:idemRootOfUnity} allows for an easy graphical decomposition of $W_m$. When $q$ is generic, the result is simple: $W_m\cong\oplus_{m\le j\le n/2}\mathcal V_j$ (Corollary \ref{cor:isoMathcalV}). When $q$ is a root of unity associated with $p$, that is, $p$ is the smallest integer such that $q^{2p}=1$, then the decomposition of $W_m$ as a $\tl$-module is read from the $n$-th line of the Bratteli diagram with the critical lines drawn (corresponding to the solutions of $2j+1\equiv 0\textrm{\,mod\,} p$). Only the $j\ge m$ play a role, either forming bound pairs or remaining alone. Any of these $j$'s appear only once in the linear combinations of $z_{j,m}$. For example Figure \ref{fig:BratDiag3} shows the decomposition of $W_3\subset \otimes^{20}\mathbb{C}^2$ when $q$ is a root associated with $p=5$. Starting at $m=3$ and proceeding to the right, all pairs symmetric with respect to critical lines are bound: First the pair $(3,6)$, then $(4,5)$ and, since $5$ has already been paired, the last pair $(9,10)$. The critical $j=7$ correspond to a regular idempotent at this $q$ and $j=8$, the last element of the orbit of $\text{orb}_{j=1}$, remains unbound. Therefore the indecomposable modules are $z_{(3,6),3}W_3$, $z_{(4,5),3}W_3$, $z_{(9,10),3}W_3$, $z_{7,3}W_3$ and $z_{8,3}W_3$ to be put in relation through \eqref{eq:GV} with the explicit decomposition 
$W_3=
\mathcal P_5\oplus
\mathcal P_6\oplus
\mathcal P_{10}\oplus
\mathcal V_7\oplus
\mathcal V_8
$.
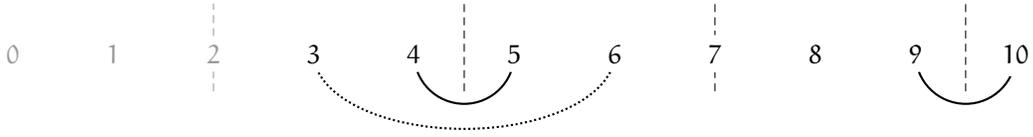
\begin{figure}[h!tb]
	\centering
	\begin{tikzpicture}[baseline={(current bounding box.center)},every node/.style={fill=white,circle,inner sep=2pt},scale=2/3]
% Orbits
	\begin{scope}[thick]
		%\draw (8,0) arc [start angle=0, end angle=-180, x radius=4, y radius=1.5] -- (0,0);
		\draw (10,0) arc [start angle=0, end angle=-180, x radius=1, y radius=1] -- (8,0);
		%\draw (18,0) arc [start angle=0, end angle=-180, x radius=4, y radius=1.5] -- (10,0);
		\draw (20,0) arc [start angle=0, end angle=-180, x radius=1, y radius=1] -- (18,0);
	\end{scope}
	\begin{scope}[thick,densely dotted]
		%\draw (6,0) arc [start angle=0, end angle=-180, x radius=2, y radius=1] -- (2,0);
		\draw (12,0) arc [start angle=0, end angle=-180, x radius=3, y radius=1.5] -- (6,0);
		%\draw (16,0) arc [start angle=0, end angle=-180, x radius=2, y radius=1] -- (12,0);
	\end{scope}
% Critical lines
	\draw[densely dashed, draw opacity=0.4] (4,-0.75) -- (4,-0.35);
	\draw[densely dashed, draw opacity=0.4] (4,0.35) -- (4,1);
	\draw[densely dashed] (9,-0.75) -- (9,1.);
	\draw[densely dashed] (14,-0.75) -- (14,1.);
	\draw[densely dashed] (19,-0.75) -- (19,1.);
% j values	
\pgfsetfillopacity{0.4}
	\node at (0,0) {$0$};
	\node at (2,0) {$1$};
	\node at (4,0) {$2$};
\pgfsetfillopacity{1}	
	\node at (6,0) {$3$};
	\node at (8,0) {$4$};
	\node at (10,0) {$5$};
	\node at (12,0) {$6$};
	\node at (14,0) {$7$};
	\node at (16,0) {$8$};
	\node at (18,0) {$9$};
	\node at (20,0) {$10$};
\end{tikzpicture}
	\caption[Bratteli diagram for $q$ root of unity]{\label{fig:BratDiag3}The decomposition of $W_3\subset \otimes^{20}\mathbb{C}^2$ when $p=5$. Only linear combinations of projectors $z_{j,m}$ with $j\ge m=3$ may occur.}
\end{figure}

Projectors are basic tools in physical applications of representation theory. The earliest example is of course the Clebsch-Gordan coefficients and the $3$-$j$ symbols arising in the quantum theory of angular momentum. The present expressions (\ref{eq:coefExpr}--\ref{eq:idem2}) are the analog for the Temperley-Lieb algebra. They are easily coded in any symbolic manipulation program and may be useful to investigate properties of the Hamiltonian $H_{XXZ}=\sum_{1\le i\le n-1}E_i$ or of any other object expressed in terms of the generators of the Temperley-Lieb algebra $\tl$. The relatively simple form of these idempotents raises the question of its possible extension for the duality between $\mathrm{U}_q\mathfrak{sl}_n$ and the Hecke algebra $H_n(q)$ whose solution could be of significant physical relevance.

%%%%%%%%%%%%%%%%%%%%%%%%%%%%%%%%%%%
\section*{Acknowledgements}
%%%%%%%%%%%%%%%%%%%%%%%%%%%%%%%%%%%

We thank Paul Martin for directing us to his result \cite{Martin1992}, and Alexi Morin-Duchesne for a careful reading of the manuscript. This work is supported by the Canadian Natural Sciences and Engineering Research Council 
(Y.~S.-A.).

%
% - - - - - - - - - - - - - - - - - - - - - - - - - - - - - - - - - - - - - - - - - - - - -  APPENDIX  - - - - - - - - - - - - - - - - - - - - - - - - - - - - - - - - - - - - - - - - - - - - -  
%
\appendix
%%%%%%%%%%%%%%%%%%%%%%%%%%%%%%%%%%%
\section{Basic identities}
%%%%%%%%%%%%%%%%%%%%%%%%%%%%%%%%%%%
\begin{prop}\label{prop:prodqnb}For $b,c\in\Z$
\begin{equation*}
		[b][c] = [b+c-1]+[b+c-3]+\dots+[c-b+3]+[c-b+1].
\end{equation*}
\end{prop}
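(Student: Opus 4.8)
The plan is to prove the identity by reducing both sides to the defining ratio \eqref{eq:qnb} and exhibiting a telescoping sum. Since $[-k]=-[k]$ and $[0]=0$, I may assume $b\ge 1$ without loss of generality, the case $b\le 0$ following by antisymmetry. For $b\ge 1$ the right-hand side has exactly $b$ terms, running from $[c+b-1]$ down to $[c-b+1]$ in steps of two, so it can be written compactly as $\sum_{j=0}^{b-1}[\,c-b+1+2j\,]$.

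First I would expand the left-hand side directly:
\[
	[b][c]=\frac{(q^b-q^{-b})(q^c-q^{-c})}{(q-q^{-1})^2}=\frac{q^{b+c}-q^{b-c}-q^{-b+c}+q^{-b-c}}{(q-q^{-1})^2}.
\]
Next I would clear one factor $(q-q^{-1})$ from the right-hand side, writing
\[
	(q-q^{-1})\sum_{j=0}^{b-1}[\,c-b+1+2j\,]=\sum_{j=0}^{b-1}\bigl(q^{\,c-b+1+2j}-q^{-(c-b+1+2j)}\bigr).
\]
Multiplying once more by $(q-q^{-1})$ makes each summand collapse: since $(q-q^{-1})\,q^{\,c-b+1+2j}=q^{\,c-b+2+2j}-q^{\,c-b+2j}$, the positive-power part telescopes to $q^{c+b}-q^{c-b}$, and likewise the negative-power part telescopes to $q^{-c+b}-q^{-c-b}$. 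Hence
\[
	(q-q^{-1})^2\sum_{j=0}^{b-1}[\,c-b+1+2j\,]=q^{b+c}-q^{c-b}-q^{-c+b}+q^{-b-c},
\]
which, after using $q^{c-b}=q^{-b+c}$ and $q^{-c+b}=q^{b-c}$, is precisely the numerator obtained above. Dividing by $(q-q^{-1})^2$ yields the claim.

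The argument presents no real obstacle: the only care needed is the bookkeeping of the summation range (confirming there are $b$ terms with the stated endpoints) and the choice of which geometric progression to telescope. As an alternative I could argue by induction on $b$ in steps of two, using the elementary facts $[b]-[b-2]=q^{b-1}+q^{-(b-1)}$ and $(q^a+q^{-a})[c]=[c+a]+[c-a]$; the inductive step then appends exactly the two extremal terms $[c+b-1]$ and $[c-b+1]$ to the sum for $b-2$, with base cases $b=1$ (giving $[1][c]=[c]$) and $b=2$ (giving $[2][c]=[c+1]+[c-1]$) checked directly.
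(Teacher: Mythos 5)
Your proof is correct and takes essentially the same approach as the paper: a direct algebraic expansion from the definition \eqref{eq:qnb} of the $q$-number. The paper multiplies the expanded form $q^{b-1}+q^{b-3}+\dots+q^{-(b-1)}$ of $[b]$ against the ratio form of $[c]$ and regroups using the symmetry of the exponents, whereas you clear both denominators and telescope twice; the two computations differ only in bookkeeping.
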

\begin{proof}Simply expand
$$(q^{b-1}+q^{b-3}+\dots+q^{-(b-3)}+q^{-(b-1)})(q^c-q^{-c})/(q-q^{-1}).$$
\end{proof}

\begin{prop}
	\label{prop:sumqnb}
	\begin{enumerate}
		\item[(a)] For $0\leq l<k$ and $2j,k\in\N$,
		\begin{equation*}
			A_l = \sum_{r=0}^l(-1)^r\frac{[2j+k+r]!}{[r]!\,[2j+r+1]!\,[k-r]!}= (-1)^l\frac{[2j+k+l+1]!}{[2j+l+1]!\,[k-l-1]!\,[l]!\,[2j+k+1]\,[k]}.
		\end{equation*}
		\item[(b)] For $0\leq l<i$ and $2m,i\in\N$,
		\begin{equation*}
			\label{eq:sumqnbb}
			B_l = \sum_{r=0}^l(-1)^r\frac{[2m+r]!\,[2m+2r+1]}{[r]!\,[i-r]!\,[2m+r+i+1]!} = (-1)^l\frac{[2m+l+1]!}{[2m+i+l+1]!\,[i-l-1]!\,[l]!\,[i]}.
		\end{equation*}
	\end{enumerate}
\end{prop}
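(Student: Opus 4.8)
The plan is to prove both identities by induction on the upper summation index $l$, exactly in the spirit in which these series are invoked in the proof of Theorem~\ref{thm:idqgene}. For part~(a) the base case $l=0$ collapses to a single term, $A_0=[2j+k]!/\bigl([2j+1]!\,[k]!\bigr)$, and one checks that the claimed closed form reduces to the same expression using the telescoping simplifications $[2j+k+1]!=[2j+k+1]\,[2j+k]!$ and $[k-1]!\,[k]=[k]!$. Part~(b) at $l=0$ is entirely analogous, giving $B_0=[2m+1]!/\bigl([i]!\,[2m+i+1]!\bigr)$ on both sides.

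For the inductive step in part~(a) I would write
\[
A_l=A_{l-1}+(-1)^l\frac{[2j+k+l]!}{[l]!\,[2j+l+1]!\,[k-l]!},
\]
substitute the induction hypothesis for $A_{l-1}$, and factor out the common prefactor $(-1)^l[2j+k+l]!/\bigl([k-l]!\,[l-1]!\,[2j+l]!\bigr)$. What then remains inside the bracket is $1/\bigl([l][2j+l+1]\bigr)-1/\bigl([2j+k+1][k]\bigr)$, so placing it over the common denominator $[l][2j+l+1][2j+k+1][k]$ reduces the entire problem to identifying the numerator $[2j+k+1][k]-[l][2j+l+1]$.

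The key step, and the only place where genuine computation is needed, is the product identity
\[
[2j+k+1][k]-[l][2j+l+1]=[k-l][2j+k+l+1],
\]
together with its part~(b) counterpart $[2m+2l+1][i]-[l][2m+l+i+1]=[i-l][2m+l+1]$. Both follow from Proposition~\ref{prop:prodqnb}, or most directly by expanding each $q$-number as $[x]=(q^x-q^{-x})/(q-q^{-1})$: in part~(a) the four monomials $q^{\pm(2j+1)}$ cancel pairwise, leaving precisely $\bigl(q^{2j+2k+1}+q^{-2j-2k-1}\bigr)-\bigl(q^{2j+2l+1}+q^{-2j-2l-1}\bigr)$, which is exactly $(q-q^{-1})^2[k-l][2j+k+l+1]$. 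I expect this recognition to be the main obstacle; everything else is bookkeeping.

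Once this factorization is inserted, the pieces recombine by the same telescoping rules, namely $[2j+k+l]!\,[2j+k+l+1]=[2j+k+l+1]!$, $[k-l]/[k-l]!=1/[k-l-1]!$, $[l-1]!\,[l]=[l]!$, and $[2j+l]!\,[2j+l+1]=[2j+l+1]!$, and the right-hand side of part~(a) appears, closing the induction. Part~(b) is dispatched identically, using its numerator identity and the analogous regroupings. Note that the hypotheses $l<k$ in~(a) and $l<i$ in~(b) are exactly what guarantee that $[k-l-1]!$ and $[i-l-1]!$ remain well defined at every stage, so no degenerate boundary case arises within the range where the closed form is asserted.
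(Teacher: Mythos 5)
Your proposal is correct and follows essentially the same route as the paper: induction on $l$ via the recursion $A_l=A_{l-1}+(\text{new term})$, with the whole step reducing to the product identity $[2j+k+1][k]-[l][2j+l+1]=[k-l][2j+k+l+1]$ (and its part~(b) analogue), which is exactly the paper's key identity up to the index shift $l\mapsto l+1$. The only cosmetic difference is that you verify this identity by direct expansion of $[x]=(q^x-q^{-x})/(q-q^{-1})$ into monomials, whereas the paper derives it from the telescoping-sum expansion of Proposition~\ref{prop:prodqnb}; both routes are equally valid and the rest of your bookkeeping (base cases, factorials, and the role of $l<k$, $l<i$) checks out.
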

\begin{proof} Both relations are proved by induction. The proofs are similar and we give that for $A_l$. When $l=0$, the result above follows easily. Assuming the relation for $A_l$, one gets
	\begin{align*}
		A_{l+1} 	& = A_l + (-1)^{l+1}\frac{[2j+k+l+1]!}{[l+1]!\,[2j+l+2]!\,[k-l-1]!}	\\[3mm]
				& = (-1)^l\frac{[2j+k+l+1]!}{[2j+l+2]!\,[k-l-1]!\,[l+1]!\,[2j+k+1]\,[k]}\,\bigl([l+1]\,[2j+l+2]-[k]\,[2j+k+1]\bigr).
	\end{align*}
	But $[l+1]\,[2j+l+2] = [2j+2l+2]+[2j+2l]+\dotsb+[2j+2]$ and $[k]\,[2j+k+1] = [2j+2k]+[2j+2k-2]+\dotsb+[2j+2]$ according to Proposition \ref{prop:prodqnb}. Thus
	\begin{align*}
		[l+1]\,[2j+l+2]-[k]\,[2j+k+1]	& = -\bigl([2j+2k]+[2j+2k-2]+\dotsb+[2j+2l+4]\bigr)	\\
								& = -[k-l-1]\,[2j+k+l+2]
	\end{align*}
again by the same proposition. Therefore the expression becomes
	\begin{align*}
		A_{l+1}	& =  (-1)^l\frac{[2j+k+l+1]!}{[2j+l+2]!\,[k-l-1]!\,[l+1]!\,[2j+k+1]\,[k]}\,\bigl(-[k-l-1]\,[2j+k+l+2]\bigr) \\[3mm]
				& = (-1)^{l+1}\frac{[2j+k+l+2]!}{[2j+l+2]!\,[k-l-2]!\,[l+1]!\,[2j+k+1]\,[k]}.
	\end{align*}
\end{proof}

\begin{lem}
	\label{thm:LemFondamental}
	Let $q_c$ be a root of unity and $p\ge 2$ the smallest integer such that $q_c^{2p}=1$. If $k,k'\ge0$ and $a, a'\ge 0$, then 
	\begin{equation}
		\label{eq:LemFondamental}
		\qbin{kp+a}{k'p+a'}\sim q^{(a-a')k'p+(k-k')(k'p+a')p}\bin{k}{k'}\qbin{a}{a'}, \qquad\textrm{as\ }q\rightarrow q_c.
	\end{equation}
Moreover, if the rational function $\qbinmini{a}{a'}$ of $q$ has a zero at $q_c$, this zero is of degree one.
\end{lem}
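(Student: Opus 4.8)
The plan is to reduce the symmetric $q$-binomial to a product of $q$-numbers and to track, factor by factor, which terms vanish at $q_c$. Throughout I assume the digits are reduced, $0\le a,a'\le p-1$ (the general case follows by first writing $a,a'$ themselves in base $p$ and iterating), and I set $\epsilon=q_c^p$, so that $\epsilon^2=1$. The two elementary facts that drive everything are, for $0\le s\le p-1$ and $m\in\N$: first, $\lim_{q\to q_c}[mp+s]=\epsilon^m[s]|_{q_c}$, which follows at once from \eqref{eq:qnb} using $q_c^{2p}=1$ and is nonzero when $1\le s\le p-1$ but $0$ when $s=0$; and second, since $[mp]$ and $[p]$ both vanish at $q_c$, a single derivative in the variable $q^p$ gives $\lim_{q\to q_c}[mp]/[p]=m\,\epsilon^{m-1}$, so that $[mp]\sim m\,\epsilon^{m-1}[p]$ near $q_c$. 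I would establish both by direct computation from the definition.

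Next I would split each of the three factorials $[kp+a]!$, $[k'p+a']!$ and $[(k-k')p+(a-a')]!$ into the product of its $p$-divisible factors $[p][2p]\cdots$ and the product of its remaining factors. By the first fact, the non-$p$-divisible part of $[kp+a]!$ tends to $(\,[p-1]!\,)^{k}[a]!$ up to an explicit power of $\epsilon$, and similarly for the two denominators; the powers $([p-1]!)^{k}$, $([p-1]!)^{k'}$, $([p-1]!)^{k-k'}$ cancel, leaving exactly $\qbin{a}{a'}$ from the residual $[a]!/([a']!\,[a-a']!)$. By the second fact, the $p$-divisible part of $[kp+a]!$ equals $[p]^{k}k!\,\epsilon^{k(k-1)/2}$ in the limit, and likewise for the denominators; when $a\ge a'$ the powers of $[p]$ cancel (because $k=k'+(k-k')$) and the factorials combine into the ordinary binomial $\bin{k}{k'}$. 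Collecting the two pieces yields $\qbin{kp+a}{k'p+a'}\sim(\text{power of }\epsilon)\,\bin{k}{k'}\qbin{a}{a'}$; since every term of the exponent in \eqref{eq:LemFondamental} carries a factor $p$, the accumulated power of $\epsilon=q_c^p$ is indeed a power of $q_c^p$, and matching it against $q^{(a-a')k'p+(k-k')(k'p+a')p}$ is then a routine exponent count, which I would carry out by adding the contributions $k(k-1)/2$, $ka$, and their primed counterparts recorded at each step.

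For the final assertion I would count factors of $[p]$ rather than evaluate anything. The order of vanishing of $\qbin{kp+a}{k'p+a'}$ at $q_c$ is the number of factors $[mp]$ in the numerator minus the number in the two denominators, namely $k-\bigl(k'+\lfloor((k-k')p+(a-a'))/p\rfloor\bigr)$. When $a\ge a'$ this is $0$ (the coefficient is regular and, by the computation above, nonzero), and when $a<a'$ it is exactly $1$; in either case the order is at most one, so any zero at $q_c$ is simple. This is the promised degree-one statement, and it is precisely what guarantees in Lemma \ref{thm:singu1} that the induced singularities are simple poles.

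I expect the main obstacle to be purely clerical: assembling the several powers of $\epsilon=q_c^p$ produced by $[mp]\sim m\,\epsilon^{m-1}[p]$ and $[mp+s]\sim\epsilon^m[s]$ and verifying that their sum reduces to the exponent $(a-a')k'p+(k-k')(k'p+a')p$. The conceptual content—periodicity of the $q$-numbers at $q_c$ and the single surviving factor $[p]$—is straightforward, so the risk lies entirely in the sign and exponent bookkeeping.
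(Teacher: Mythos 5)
This lemma is never actually proved in the paper: it is quoted as the $q$-Lucas theorem with a pointer to D\'esarm\'enien and Sagan. Your argument is therefore a self-contained replacement rather than a parallel of an existing proof, and on the range $0\le a,a'\le p-1$ it is the classical argument and it is correct. Both of your elementary facts hold ($\lim_{q\to q_c}[mp+s]=\epsilon^m\,[s]\big|_{q=q_c}$ straight from \eqref{eq:qnb}, and $[mp]/[p]=(x^m-x^{-m})/(x-x^{-1})$ with $x=q^p$, which tends to $m\epsilon^{m-1}$), the split of each factorial into $p$-divisible and non-$p$-divisible parts gives $[Np+s]!\sim[p]^N\,N!\,([p-1]!)^N\,[s]!\,\epsilon^{pN(N-1)/2+Ns}$, and the bookkeeping you defer does close: for $k\ge k'$ and $a\ge a'$ the left side tends to $\epsilon^{D}\binmini{k}{k'}\qbinmini{a}{a'}\big|_{q=q_c}$ with $D=pk'(k-k')+k'(a-a')+a'(k-k')$, and $pD$ coincides \emph{identically} (not merely modulo $2$) with the exponent in \eqref{eq:LemFondamental}, so the two sides agree. (When $a<a'$ both sides vanish and your order count supplies the rate.) The ``Moreover'' clause by counting factors $[mp]$ is also sound, granted the one-line remark that each $[mp]$ has a \emph{simple} zero at $q_c$ because $q^{2mp}-1$ has only simple roots.

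The genuine gap is your opening parenthesis: ``the general case follows by writing $a,a'$ in base $p$ and iterating.'' It does not --- and cannot, because \eqref{eq:LemFondamental} as stated for arbitrary $a,a'\ge0$ is false. Take $p=2$, $q_c=i$, $k=k'=1$, $a=2$, $a'=0$: the left side is $\qbinmini{4}{2}=[4][3]/[2]$, which tends to $(-2)\cdot(-1)=2$ since $[4]/[2]=q^2+q^{-2}\to-2$ and $[3]\to-1$, while the right side is $q^{4}\binmini{1}{1}\qbinmini{2}{0}\to1$. Iterating your reduced case does produce the correct limit $2$, but through the reduced data $q^4\binmini{2}{1}\qbinmini{0}{0}$ --- a different right-hand side from the one the lemma prescribes for the non-reduced data; reducing the digits changes $\binmini{k}{k'}$, $\qbinmini{a}{a'}$ and the power of $q$ by more than a factor tending to $1$. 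So you should restrict your statement (as the lemma itself ought to be restricted) to $0\le a,a'\le p-1$, which is all that Lemma \ref{thm:singu1} uses, or else prove a corrected general form. The over-statement is not innocuous for the paper either: the proof of Theorem \ref{thm:idemRootOfUnity} invokes the lemma once with the non-reduced top digit $p+g-1$, where the stated constant is off by a factor $w$; one can check that this factor is common to the two coefficients of a bound pair (since $w'=w$ there), so the cancellation argument survives, but it cannot be justified by citing the lemma in the generality in which it is stated.
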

\noindent This lemma is sometimes called the $q$-Lucas Theorem (\citet{Desarmenien1982, Sagan1992}). Its use will be mostly for $a$ and $a'$ in the range $0\le a, a'\le p-1$. Still the more general form is useful.

The next result is an immediate consequence of the defining relation of $\uq$.
\begin{prop}\label{prop:SpmThru}For $k\in\Z$ and $n\in\N$
\begin{equation}
\label{eq:SpmThru}(S^\pm)^n\,\bigl[2S^z+k\bigr] = \bigl[2S^z+k\mp2n\bigr]\,(S^\pm)^n, \qquad \textrm{and therefore}\qquad \bigl[S_n,\bigl[2S^z+k\bigr]\bigr] = 0.
\end{equation}	
\end{prop}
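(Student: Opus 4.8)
The plan is to reduce everything to the basic rule that lets one push $q^{S^z}$ past $S^\pm$, which is immediate from the first defining relation in \eqref{eq:Uqsl2}. Rewriting $q^{S^z}S^\pm q^{-S^z}=q^{\pm1}S^\pm$ as $S^\pm q^{S^z}=q^{\mp1}q^{S^z}S^\pm$, I see that moving a single copy of $q^{S^z}$ from the right of $S^\pm$ to its left costs a factor $q^{\mp1}$. The whole proposition is then a matter of counting how many such factors accumulate, together with the observation that $\bigl[2S^z+k\bigr]$ is a genuine element of $\uq$, since $q^{\pm2S^z}=\bigl(q^{\pm S^z}\bigr)^{2}$ are honest algebra elements.

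First I would establish the single-power statement. Since $q^{2S^z+k}=q^{k}\bigl(q^{S^z}\bigr)^{2}$ with $q^{k}$ a central scalar, applying the rule twice gives $S^\pm q^{2S^z+k}=q^{\mp2}q^{2S^z+k}S^\pm$, and iterating over the $n$ factors of $S^\pm$ yields
\begin{equation*}
(S^\pm)^n\,q^{2S^z+k}=q^{\mp2n}\,q^{2S^z+k}\,(S^\pm)^n,\qquad (S^\pm)^n\,q^{-2S^z-k}=q^{\pm2n}\,q^{-2S^z-k}\,(S^\pm)^n.
\end{equation*}
Substituting these into the definition $\bigl[2S^z+k\bigr]=\bigl(q^{2S^z+k}-q^{-2S^z-k}\bigr)/(q-q^{-1})$ and pulling $(S^\pm)^n$ to the left produces the numerator $q^{\mp2n}q^{2S^z+k}-q^{\pm2n}q^{-2S^z-k}$, which is exactly the numerator of $\bigl[2S^z+k\mp2n\bigr]$. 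This establishes the first identity.

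For the commutator with $S_n=(S^-)^{(n)}(S^+)^{(n)}$, I would apply the first identity twice, once for each factor. Pushing $\bigl[2S^z+k\bigr]$ through $(S^+)^n$ shifts it to $\bigl[2S^z+k-2n\bigr]$; pushing the result through $(S^-)^n$ shifts it back by $+2n$, restoring $\bigl[2S^z+k\bigr]$, so that $(S^-)^n(S^+)^n\bigl[2S^z+k\bigr]=\bigl[2S^z+k\bigr](S^-)^n(S^+)^n$. The $[n]!$ denominators in the divided powers are scalars and play no role. There is essentially no obstacle here beyond keeping the upper/lower sign conventions straight through the two shifts; the only point deserving a word of care is the remark above that $\bigl[2S^z+k\bigr]$ really lies in $\uq$, which is what makes the manipulations algebra identities rather than merely operator identities on $\cn$. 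Hence $\bigl[S_n,\bigl[2S^z+k\bigr]\bigr]=0$.
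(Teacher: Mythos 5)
Your proof is correct and is essentially the argument the paper has in mind: the paper states this proposition without proof, remarking only that it is ``an immediate consequence of the defining relation of $\uq$,'' and your computation --- rewriting $q^{S^z}S^\pm q^{-S^z}=q^{\pm1}S^\pm$ as a commutation rule, iterating it to move $(S^\pm)^n$ past $q^{\pm(2S^z+k)}$, substituting into $\bigl[2S^z+k\bigr]=\bigl(q^{2S^z+k}-q^{-2S^z-k}\bigr)/(q-q^{-1})$, and then applying the resulting shift twice with opposite signs to handle $S_n$ --- is exactly the omitted calculation. Your side remarks (that $q^{\pm2S^z}=\bigl(q^{\pm S^z}\bigr)^2$ makes $\bigl[2S^z+k\bigr]$ a bona fide element of $\uq$, and that the $[n]!$ normalizations are scalars) are sound and complete the argument.
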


\begin{prop}\label{lem:sksl} If $l\geq k$, the restriction of the product $S_kS_l$ to $W_m\subset\cn$, $m\ge 0$, is given by
\begin{equation*}
	\left.S_kS_l\right|_{W_m} = \sum_{i=0}^k\qbin{l+i}{k}\qbin{l+i}{l}\qbin{2m+k+l}{k-i}\,\left.S_{l+i}\right|_{W_m}.
\end{equation*}
\end{prop}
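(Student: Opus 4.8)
The plan is to start from the definition $S_k=(S^-)^{(k)}(S^+)^{(k)}$, so that
\[
S_kS_l=(S^-)^{(k)}(S^+)^{(k)}(S^-)^{(l)}(S^+)^{(l)},
\]
and to push the inner factor $(S^+)^{(k)}$ through $(S^-)^{(l)}$ by means of the Lusztig commutation relation \eqref{eq:SpmRenorm}, which holds precisely for $l\ge k$, the hypothesis of the proposition. Adding the trivial $i=0$ term (for which $\qbinmini{2S^z+l-k}{0}=1$) turns that relation into the reordering formula
\[
(S^+)^{(k)}(S^-)^{(l)}=\sum_{i=0}^{k}\qbin{2S^z+l-k}{i}(S^-)^{(l-i)}(S^+)^{(k-i)}.
\]
Substituting this into the product gives
\[
S_kS_l=\sum_{i=0}^{k}(S^-)^{(k)}\qbin{2S^z+l-k}{i}(S^-)^{(l-i)}(S^+)^{(k-i)}(S^+)^{(l)}.
\]

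Next I would restrict to $W_m$ and evaluate the $S^z$-dependent $q$-binomial. Acting on a weight vector of weight $m$, the operators standing to the right of $\qbinmini{2S^z+l-k}{i}$ shift the weight step by step (recall from \eqref{eq:actionSpm} that $S^\pm$ moves the $S^z$-eigenvalue by $\pm1$): $(S^+)^{(l)}$ and $(S^+)^{(k-i)}$ raise it to $m+l+(k-i)$, and then $(S^-)^{(l-i)}$ lowers it to $m+k$. Hence $\qbinmini{2S^z+l-k}{i}$ sees the weight $m+k$ and acts as the scalar $\qbinmini{2m+k+l}{i}$, leaving
\[
S_kS_l\big|_{W_m}=\sum_{i=0}^{k}\qbin{2m+k+l}{i}(S^-)^{(k)}(S^-)^{(l-i)}(S^+)^{(k-i)}(S^+)^{(l)}\Big|_{W_m}.
\]
As an alternative bookkeeping, Proposition \ref{prop:SpmThru} says the $q$-binomial, being a polynomial in $\bigl[2S^z+\mathrm{const}\bigr]$, commutes with the strings of $S^\pm$ up to the stated shifts, so it may instead be transported to one end before being evaluated; either route yields the same scalar.

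Then I would collapse the two pairs of like divided powers using the elementary identity $(S^\pm)^{(a)}(S^\pm)^{(b)}=\qbinmini{a+b}{a}(S^\pm)^{(a+b)}$, immediate from $(S^\pm)^{(r)}=(S^\pm)^r/[r]!$. This gives $(S^-)^{(k)}(S^-)^{(l-i)}=\qbinmini{k+l-i}{k}(S^-)^{(k+l-i)}$ and $(S^+)^{(k-i)}(S^+)^{(l)}=\qbinmini{k+l-i}{l}(S^+)^{(k+l-i)}$, whose product is $\qbinmini{k+l-i}{k}\qbinmini{k+l-i}{l}\,S_{k+l-i}$. Substituting and re-indexing by $i\mapsto k-i$ — under which $k+l-i\mapsto l+i$ and $\qbinmini{2m+k+l}{i}\mapsto\qbinmini{2m+k+l}{k-i}$ — produces exactly
\[
S_kS_l\big|_{W_m}=\sum_{i=0}^{k}\qbin{l+i}{k}\qbin{l+i}{l}\qbin{2m+k+l}{k-i}\,S_{l+i}\Big|_{W_m}.
\]

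I expect the only delicate point to be the correct evaluation of $\qbinmini{2S^z+l-k}{i}$: one must be careful that it is sandwiched in the middle of the operator string and therefore sees the intermediate weight $m+k$, rather than $m$ or $m+l$, and that the three nested shifts combine to give precisely $2m+k+l$. Everything else is the routine algebra of divided powers together with a single change of summation index.
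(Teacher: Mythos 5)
Your proof is correct and follows essentially the same route as the paper's: expand $S_kS_l$ via the commutation relation \eqref{eq:SpmRenorm} (with the trivial $i=0$ term included), evaluate the sandwiched $q$-binomial $\qbinmini{2S^z+l-k}{i}$ as the scalar $\qbinmini{2m+k+l}{i}$ on $W_m$ (the paper, like you, notes this can be done either by direct weight-tracking or by commuting it to one end via Proposition \ref{prop:SpmThru}), collapse the like divided powers into $\qbinmini{k+l-i}{k}\qbinmini{k+l-i}{l}\,S_{k+l-i}$, and re-index $i\mapsto k-i$. Your attention to the intermediate weight $m+k$ is exactly the right delicate point, and your evaluation agrees with the paper's.
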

\begin{proof}
	Using equation \eqref{eq:SpmRenorm}, we first expand the product:
	\begin{equation*}
		S_kS_l 	 = (S^-)^{(k)}(S^+)^{(k)}(S^-)^{(l)}(S^+)^{(l)} 
		= \sum_{i=0}^k(S^-)^{(k)}\qbin{2S^z+l-k}{i}(S^-)^{(l-i)}(S^+)^{(k-i)}(S^+)^{(l)}.
	\end{equation*}
	The last three divided powers may be commuted past the $q$-binomial using the result $(S^\pm)^{(k)}[2S^z+r]=[2S^z+r\mp2k](S^\pm)^{(k)}$ 
by Proposition \ref{prop:SpmThru}
or by simply evaluating $S^z$ when the above expression acts on $W_m$. We find by restricting to $W_m$ (the restriction symbol is omitted):
	\begin{align*}
		S_kS_l 	& = \sum_{i=0}^k(S^-)^{(k)}(S^-)^{(l-i)}(S^+)^{(k-i)}(S^+)^{(l)}\qbin{2m+k+l}{i}							\\[1mm]
				& = \sum_{i=0}^k(S^-)^{(k+l-i)}(S^+)^{(k+l-i)}\,\frac{\bigl([k+l-i]!\bigr)^2}{[k]!\,[l-i]!\,[k-i]!\,[l]!}\,\qbin{2m+k+l}{i}	\\[1mm]
				& = \sum_{i=0}^k\qbin{k+l-i}{k}\qbin{k+l-i}{l}\qbin{2m+k+l}{i}\,S_{k+l-i}.
	\end{align*}
	A change of the index of summation gives the statement.
\end{proof}

\begin{lem}
	\label{lem:commutS}
	Let $m,n\in\N$. Then
	\begin{equation}
		\label{eq:commutS}
		\bigl[S_m,S_n\bigr] = 0.
	\end{equation}
\end{lem}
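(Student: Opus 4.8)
The plan is to reduce the whole statement to a transpose (symmetry) argument layered on top of Proposition \ref{lem:sksl}. Since the commutator is antisymmetric in its two indices and vanishes trivially when the indices coincide, I would write the two operators as $S_k$ and $S_l$ and assume without loss of generality that $k<l$ (the notation clash with the weight is the only reason I avoid reusing $m,n$). Each $S_r$ preserves the weight decomposition $\cn=\bigoplus_m W_m$, so it suffices to prove $\left.[S_k,S_l]\right|_{W_m}=0$ for every $m$; the identity on all of $\cn$ then follows by assembling the weight spaces.

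The key step, and the only one with real content, is to observe that in the spin basis $\mathcal B$ every $S_r$ is a symmetric matrix. From \eqref{eq:Uqrep} the summand $S^\pm_i$ is a tensor product of the diagonal factors $q^{\pm\sigma^z/2}$ together with a single $\sigma^\pm$ at site $i$; diagonal matrices are invariant under transposition and $(\sigma^\pm)^{T}=\sigma^\mp$, so $(S^\pm_i)^{T}=S^\mp_i$ and hence $(S^\pm)^{T}=S^\mp$. Because the divided powers $(S^\pm)^{(r)}$ are polynomials in $S^\pm$ with scalar coefficients, transposition yields $\bigl((S^\pm)^{(r)}\bigr)^{T}=(S^\mp)^{(r)}$, so that
\[
	S_r^{T}=\bigl((S^-)^{(r)}(S^+)^{(r)}\bigr)^{T}=\bigl((S^+)^{(r)}\bigr)^{T}\bigl((S^-)^{(r)}\bigr)^{T}=(S^-)^{(r)}(S^+)^{(r)}=S_r.
\]
I would emphasize that this identity holds verbatim for every $q\in\C^\times$, generic or root of unity, since it concerns the fixed matrices representing $S^\pm$ and the divided powers are well defined at every $q$.

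With symmetry in hand the conclusion is immediate. Proposition \ref{lem:sksl} expresses $\left.S_kS_l\right|_{W_m}$ (for $k\le l$) as a linear combination $\sum_{i=0}^k C^m_{k,l,i}\left.S_{l+i}\right|_{W_m}$ with scalar structure constants $C^m_{k,l,i}$. Each $\left.S_{l+i}\right|_{W_m}$ is symmetric, being the principal submatrix of a symmetric operator indexed by the subset of $\mathcal B$ spanning $W_m$, so the whole combination, and therefore $\left.S_kS_l\right|_{W_m}$, is a symmetric matrix. Using $S_k^{T}=S_k$ and $S_l^{T}=S_l$ one has $(\left.S_kS_l\right|_{W_m})^{T}=\left.S_lS_k\right|_{W_m}$, whence $\left.S_lS_k\right|_{W_m}=\left.S_kS_l\right|_{W_m}$, i.e. $\left.[S_k,S_l]\right|_{W_m}=0$.

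The main obstacle this approach is built to sidestep is that the structure constants $C^m_{k,l,i}$ look manifestly asymmetric in $k$ and $l$, so a head-on verification that $S_kS_l$ and $S_lS_k$ produce the same expansion would be unpleasant; the transpose argument bypasses that entirely, since symmetry of the expansion already forces the product to be symmetric. As a cross-check I would note the alternative for generic $q$ alone, where \eqref{eq:Sraction} diagonalizes all the $S_r$ simultaneously in the basis $\{\ket{j,m}_k\}$ so that they visibly commute, and then extend to every $q$ using that the matrix entries of $[S_k,S_l]$ are Laurent polynomials in $q^{\pm1/2}$ vanishing on the dense set of generic $q$; but the transpose argument is cleaner and uniform in $q$, so I would present it as the primary proof.
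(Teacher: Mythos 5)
Your proof is correct, and it takes a genuinely different route from the paper's. The paper argues purely algebraically, with no reference to a basis: using \eqref{eq:SpmRenorm} and Proposition \ref{prop:SpmThru} it first shows $\bigl[S^-S^+,(S^-)^n(S^+)^n\bigr]=0$ by induction on $n$, then bootstraps by a second induction to $\bigl[(S^-)^m(S^+)^m,(S^-)^n(S^+)^n\bigr]=0$, and finally divides by $([m]!)^2\,([n]!)^2$ --- a step that is licit verbatim only for generic $q$ and needs the limit definition of the divided powers at $q_c$. Your argument instead exploits the specific representation on $\cn$: the observation $(S^\pm)^T=S^\mp$ in the spin basis is right (with the coproduct convention of \eqref{eq:Uqrep}, $S^+_i$ and $S^-_i$ carry identical diagonal dressings $q^{\pm\sigma^z/2}$, and $(\sigma^\pm)^T=\sigma^\mp$), and it passes to the divided powers uniformly in $q$, e.g.\ from the expansion behind \eqref{eq:sp}, $(S^\pm)^{(r)}=q^{\pm r(r-1)/2}\sum_{i_1<\dotsb<i_r}S^\pm_{i_1}\dotsm S^\pm_{i_r}$, whose matrix entries are Laurent monomials in $q^{\pm1/2}$; so each $S_r$ is symmetric for every $q$. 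You also correctly isolate where symmetry alone is insufficient --- a product of symmetric matrices is symmetric if and only if the factors commute --- so Proposition \ref{lem:sksl}, which places $S_kS_l|_{W_m}$ in the symmetric span of the $S_r|_{W_m}$, is the real input, and there is no circularity since its proof uses only \eqref{eq:SpmRenorm} and Proposition \ref{prop:SpmThru}. One loose end: you invoke Proposition \ref{lem:sksl} for every weight, while it is stated for $m\ge0$; its proof (evaluating the $S^z$-dependent $q$-binomial on $W_m$) goes through verbatim for $m<0$, and in any case you only need the weaker fact that $S_kS_l|_{W_m}$ lies in $\mathrm{span}\bigl\{S_r|_{W_m}\bigr\}$, but a sentence to this effect should be added. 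As for what each approach buys: the paper's induction yields the commutator identity in the algebra itself, hence on any module, whereas yours is an identity of the matrices acting on $\cn$ --- which is all the paper ever uses --- and has the advantage of treating generic $q$ and roots of unity on exactly the same footing, sidestepping the delicate division by $([m]!)^2\,([n]!)^2$. Your fallback (simultaneous diagonalization \eqref{eq:Sraction} at generic $q$ plus Laurent-polynomiality of the entries of $[S_k,S_l]$) is likewise a complete proof on $\cn$ and is essentially the evaluation-principle style of argument the paper uses elsewhere, e.g.\ in Corollary \ref{coro:fn}.
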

\begin{proof}
	Let us first show that
	\begin{equation}
		\label{eq:commutS1Sn}
		\bigl[S^-S^+,(S^-)^n(S^+)^n\bigr] = 0.
	\end{equation}
	For $n=0$ and $1$, the previous holds trivially. Now suppose that $\left[S^-S^+,(S^-)^{n-1}(S^+)^{n-1}\right]=0$. Using equation \eqref{eq:SpmRenorm}, we get
\begin{align*}
	(S^-)^n(S^+)^n 	& = S^-(S^-)^{n-1}S^+(S^+)^{n-1} = S^-\left(S^+(S^-)^{n-1}-[n-1][2S^z-n+2](S^-)^{n-2}\right)(S^+)^{n-1}	\\
					& = (S^-S^+)\bigl((S^-)^{n-1}(S^+)^{n-1}\bigr)-[n-1]\bigl((S^-)^{n-1}(S^+)^{n-1}\bigr)[2S^z-n]
\end{align*}
by Proposition \ref{prop:SpmThru}. Now each pair of $S$'s between parentheses commute with $S^-S^+$ by the induction hypothesis.

We now increase the exponent of the first term in the commutator, again by induction. To compute $\bigl[(S^-)^m(S^+)^m,(S^-)^n(S^+)^n\bigr]$, we express $(S^-)^m(S^+)^m$ using the above relation with $n\rightarrow m$. Again induction and equation \eqref{eq:commutS1Sn} show that each pair of $S$'s commute with $(S^-)^n(S^+)^n$.
The result of the proposition follows by dividing by $([m]!)^2\,([n]!)^2$.
\end{proof}

\begin{prop}
	\label{prop:Sraction}
	Let $j$ and $j'$ be the labels of the ``tall'' and ``short'' towers of the $\uq$-module $U_{j,j'}$. The action of an element $S_r$ of $\mathrm{End}_{\tl}W_m$, with $0\leq m\leq j'$, on the vectors $\ket{j,m}$ and $\ket{j',m}$ is given by
	\begin{equation}
		\label{eq:Sraction1}
		S_r\ket{j,m} = \qbin{j+m+r}{r}\qbin{j-m}{r}\ket{j,m},\qquad\text{and}
	\end{equation}
	\begin{multline*}
		S_r\ket{j',m} = \qbin{j'+m+r}{r}\qbin{j'-m}{r}\ket{j',m}+{}	\\
		\frac1{[r]!}\qbin{j-m}{r}\sum_{i=1}^r\qbin{j-m-i}{j'-m-i+1}\frac{[j'+m+i-1]!}{[j'+m]!}\frac{[j+m+r]!}{[j+m+i]!}\ket{j,m}.
	\end{multline*}
In particular,
\begin{align*}
		S_1\ket{j,m} &= [j+m+1][j-m]\ket{j,m},\qquad \text{ and} \\ 
		S_1\ket{j',m}&= [j'+m+1][j'-m]\ket{j',m}+\frac{[j-m]!}{[j-j'-1]!\,[j'-m]!}\ket{j,m}.
	\end{align*}
\end{prop}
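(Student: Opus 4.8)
The plan is to evaluate $S_r=(S^-)^{(r)}(S^+)^{(r)}$ on the tower vectors by applying $(S^+)^{(r)}$ first and then $(S^-)^{(r)}$, reading the coefficients off the explicit action \eqref{eq:actionSpm} on the tall tower and \eqref{eq:SpmSmallTower} on the short one. The elementary facts I set up first are the ``diagonal'' rules for a divided power that stays inside one tower,
\begin{equation*}
(S^+)^{(r)}\ket{j,m}=\qbin{j+m+r}{r}\ket{j,m+r},\qquad (S^-)^{(r)}\ket{j,m}=\qbin{j-m+r}{r}\ket{j,m-r},
\end{equation*}
together with their short-tower analogues. These follow by applying the commutation relation \eqref{eq:SpmRenorm} to $\ket{j,m}=(S^-)^{(j-m)}\ket{j,j}$ and using that $(S^+)^{(s)}$ annihilates the highest weight vector $\ket{j,j}$ for $s\ge1$ (the lowering rule also drops out of the product rule $(S^\pm)^{(a)}(S^\pm)^{(b)}=\qbin{a+b}{a}(S^\pm)^{(a+b)}$).

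The first formula \eqref{eq:Sraction1} is then immediate: $(S^+)^{(r)}$ sends $\ket{j,m}$ to $\qbin{j+m+r}{r}\ket{j,m+r}$ and $(S^-)^{(r)}$ lowers it back with a factor $\qbin{j-m}{r}$, giving $S_r\ket{j,m}=\qbin{j+m+r}{r}\qbin{j-m}{r}\ket{j,m}$. This argument is purely formal and is valid both for generic $q$ and at $q_c$.

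The content is the short tower. Now $(S^+)^{(r)}\ket{j',m}$ does not remain on the short tower: by \eqref{eq:SpmSmallTower} a raising step from short-tower weight $m'$ either climbs the short tower or ``leaks'' onto the tall tower with coefficient $\qbin{j-m'-1}{j'-m'}$, after which it simply climbs the tall tower. I would accordingly split $(S^+)^{(r)}\ket{j',m}$ into the purely short-tower part $\qbin{j'+m+r}{r}\ket{j',m+r}$ and a tall-tower part obtained by summing over the step $i$ at which the leak occurs, each contributing the short-tower factors below the leak, the leak coefficient, and the tall-tower factors above it. Applying $(S^-)^{(r)}$, which acts diagonally on each tower ($\qbin{j'-m}{r}$ on the short piece, $\qbin{j-m}{r}$ on the tall piece), produces the diagonal term $\qbin{j'+m+r}{r}\qbin{j'-m}{r}\ket{j',m}$ together with $\qbin{j-m}{r}$ times the leak sum on $\ket{j,m}$. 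Reindexing the leak by $i$ reproduces the displayed second formula; the $q$-binomial $\qbin{j-m-i}{j'-m-i+1}$ vanishes outside the admissible range and so also encodes the top boundary $\ket{j',j'}$, where the leak coefficient degenerates to $\qbin{j-j'-1}{0}=1$. The two ``in particular'' identities then follow by setting $r=1$ and simplifying $[j-m]\qbin{j-m-1}{j'-m}$ into $[j-m]!/([j-j'-1]!\,[j'-m]!)$.

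The step I expect to be the genuine obstacle is making this leak computation rigorous at $q_c$ rather than for generic $q$. The module $U_{j,j'}$ lives only at the root of unity, and once $r\ge p$ one has $[r]!=0$, so one may not write $(S^\pm)^{(r)}=(S^\pm)^r/[r]!$ and count single-step paths---the naive power $(S^+)^r\ket{j',m}$ can vanish while $(S^+)^{(r)}\ket{j',m}$ does not. The safe route is to carry the whole computation with the genuine divided powers, using \eqref{eq:SpmRenorm} to commute $(S^+)^{(r)}$ past the $(S^-)^{(j'-m)}$ that builds $\ket{j',m}$ from the short-tower top; the only extra input is the repeated leak $(S^+)^{(s)}\ket{j',j'}$, controlled by $S^+\ket{j',j'}=\ket{j,j'+1}$ and the tall-tower climbing rule. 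This produces the coefficient in an a priori different shape, and reconciling it with the compact displayed sum is a finite $q$-binomial identity that I would settle with the telescoping manipulations of Propositions \ref{prop:prodqnb} and \ref{prop:sumqnb}.
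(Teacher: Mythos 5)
Your core computation is correct and coincides with the paper's own proof of Proposition \ref{prop:Sraction}: establish the one-tower rules $(S^\pm)^{(r)}\ket{j,m}=\qbinmini{j\pm m+r}{r}\ket{j,m\pm r}$ and $(S^-)^{(r)}\ket{j',m}=\qbinmini{j'-m+r}{r}\ket{j',m-r}$, obtain $(S^+)^{(r)}\ket{j',m}$ by summing over the step at which the raising chain ``leaks'' from the short tower onto the tall one via the second relation of \eqref{eq:SpmSmallTower}, then apply $(S^-)^{(r)}$ diagonally to both pieces; your leak coefficient $\qbinmini{j-m'-1}{j'-m'}$ and your $r=1$ simplification match the paper's.

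The one genuine difference concerns the step you single out as ``the genuine obstacle.'' The paper does not resolve it: its proof literally writes $(S^\pm)^{(r)}=(S^\pm)^r/[r]!$ and iterates the single-step actions \eqref{eq:actionSpm} and \eqref{eq:SpmSmallTower} --- exactly the naive path count you distrust when $r\ge p$ and $[r]!=0$ --- recognizing the resulting ratio of $q$-numbers as a $q$-binomial only at the end. (Indeed, the explicit $1/[r]!$ prefactor in the short-tower formula of the statement itself shows the paper is content with this formal level of rigor at $q_c$, in the spirit of its evaluation principle.) Your proposed repair is therefore a strengthening of, not a reconciliation with, the published argument. For the tall tower it closes the gap cleanly with no extra identities: writing $\ket{j,m}=(S^-)^{(j-m)}\ket{j,j}$, the product rule $(S^-)^{(a)}(S^-)^{(b)}=\qbinmini{a+b}{a}(S^-)^{(a+b)}$ gives the lowering rule, and in $(S^+)^{(r)}(S^-)^{(j-m)}\ket{j,j}$ only the $i=r$ term of \eqref{eq:SpmRenorm} survives on the highest weight vector, yielding $\qbinmini{j+m+r}{r}\ket{j,m+r}$ directly, with no division by $[r]!$ at any stage. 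For the short tower the analogous divided-power computation still needs the values $(S^+)^{(s)}\ket{j',j'}$, which at $q_c$ are not determined by iterating $S^+$ either, so the telescoping reconciliation you defer is real work --- but it is work the paper also leaves undone, not a step it performs by other means.
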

\begin{proof}
	First, we need the action of $(S^\pm)^{(r)}$ on the vectors $\ket{j,m}$ and $\ket{j',m}$. Using repeatedly \eqref{eq:actionSpm}, we obtain
	\begin{align}
		(S^\pm)^{(r)}\ket{j,m}	& = \frac{(S^\pm)^{r-1}}{[r]!}\,[j\pm m+1]\,\ket{j,m\pm1}			\notag\\%[1mm]
							& = \dotsb 												\notag\\%[1mm]
							& = \frac{[j\pm m+1][j\pm m+2]\dotsm[j\pm m+r]}{[r]!}\,\ket{j,m\pm r}	\notag\\%[1mm]	
							& = \qbin{j\pm m+r}{r}\ket{j,m\pm r}.\label{eq:Spm}
	\end{align}
	Similarly, using the first relation of \eqref{eq:SpmSmallTower}, we find that $(S^-)^{(r)}$ on $\ket{j',m}$ also acts diagonally:
	\begin{equation*}
		(S^-)^{(r)}\ket{j',m} = \qbin{j'-m+r}{r}\ket{j',m-r}.
	\end{equation*}
	The non-diagonal action of $(S^+)^{(r)}$ on $\ket{j',m}$ is found using the second relation of \eqref{eq:SpmSmallTower} and also \eqref{eq:Spm}:
	\begin{equation*}
		(S^+)^{(r)}\ket{j',m} = \qbin{j'+m+r}{r}\ket{j',m+r}+\frac1{[r]!}\sum_{i=1}^r\qbin{j-m-i}{j'-m-i+1}\frac{[j'+m+i-1]!}{[j'+m]!}\frac{[j+m+r]!}{[j+m+i]!}\ket{j,m+r}.
	\end{equation*}
	The action of $S_r=(S^-)^{(r)}(S^+)^{(r)}$ on the tower vectors follows from those equations.
\end{proof}
\noindent Note that, if $q$ is generic, no coupling between two towers occurs, and the action of $S_r$ is thus diagonal and given by \eqref{eq:Sraction1}.

%\newpage
\bibliographystyle{myplainnat}
\bibliography{end29}

\end{document}